\newcommand{\GrTTdrule}[4][]{{\displaystyle\frac{\begin{array}{l}#2\end{array}}{#3}\quad\GrTTdrulename{#4}}}
\newcommand{\GrTTpremise}[1]{ #1 \\}
\newenvironment{GrTTdefnblock}[3][]{ \framebox{\mbox{#2}} \quad #3 \\[0pt]}{}
\newcommand{\GrTTmv}[1]{\mathit{#1}}
\newcommand{\GrTTsym}[1]{#1}
\newcommand{\GrTTdrulename}[1]{\textsc{#1}}
\newcommand{\GrTTdruleRhoXXEmptyName}[0]{\GrTTdrulename{Rho\_Empty}}
\newcommand{\GrTTdruleRhoXXEmpty}[1]{\GrTTdrule[#1]{%
}{
  \emptyset   \odot   \emptyset   \models_{  \emptyset  }   \emptyset  }{%
{\GrTTdruleRhoXXEmptyName}{}%
}}
\newcommand{\GrTTdruleRhoXXExtTmName}[0]{\GrTTdrulename{Rho\_ExtTm}}
\newcommand{\GrTTdruleRhoXXExtTm}[1]{\GrTTdrule[#1]{%
\GrTTpremise{t \, \in \,  \interp{ A }_{ \varepsilon } }%
\GrTTpremise{ \Delta  \odot  \Gamma  \models_{ \varepsilon }  \rho   \quad   ( \Delta  \mid  \sigma_{{\mathrm{2}}}  \mid   \textbf{0}  )   \odot  \Gamma  \vdash  A  \GrTTsym{:}   \mathsf{Type}_{ \GrTTsym{0} } }%
}{
 \GrTTsym{(}  \Delta  ,  \sigma_{{\mathrm{2}}}  \GrTTsym{)}  \odot  \GrTTsym{(}  \Gamma  \GrTTsym{,}  x  \GrTTsym{:}  A  \GrTTsym{)}  \models_{ \varepsilon }   \rho [  x  \mapsto  t  ]  }{%
{\GrTTdruleRhoXXExtTmName}{}%
}}
\newcommand{\GrTTdruleRhoXXExtTyName}[0]{\GrTTdrulename{Rho\_ExtTy}}
\newcommand{\GrTTdruleRhoXXExtTy}[1]{\GrTTdrule[#1]{%
\GrTTpremise{\texttt{\textcolor{red}{<<multiple parses>>}}}%
\GrTTpremise{ \Delta  \odot  \Gamma  \models_{ \varepsilon }  \rho   \quad   ( \Delta  \mid  \sigma_{{\mathrm{2}}}  \mid   \textbf{0}  )   \odot  \Gamma  \vdash  A  \GrTTsym{:}   \mathsf{Type}_{ \GrTTsym{1} } }%
}{
 \GrTTsym{(}  \Delta  ,  \sigma_{{\mathrm{2}}}  \GrTTsym{)}  \odot  \GrTTsym{(}  \Gamma  \GrTTsym{,}  x  \GrTTsym{:}  A  \GrTTsym{)}  \models_{ \varepsilon }   \rho [  x  \mapsto  t  ]  }{%
{\GrTTdruleRhoXXExtTyName}{}%
}}
\newcommand{\GrTTdruleEpXXEmptyName}[0]{\GrTTdrulename{Ep\_Empty}}
\newcommand{\GrTTdruleEpXXEmpty}[1]{\GrTTdrule[#1]{%
}{
  \emptyset   \odot   \emptyset   \models   \emptyset  }{%
{\GrTTdruleEpXXEmptyName}{}%
}}
\newcommand{\GrTTdruleEpXXExtTmName}[0]{\GrTTdrulename{Ep\_ExtTm}}
\newcommand{\GrTTdruleEpXXExtTm}[1]{\GrTTdrule[#1]{%
\GrTTpremise{ \Delta  \odot  \Gamma  \models  \varepsilon   \quad   ( \Delta  \mid  \sigma_{{\mathrm{2}}}  \mid   \textbf{0}  )   \odot  \Gamma  \vdash  A  \GrTTsym{:}   \mathsf{Type}_{ \GrTTsym{0} } }%
}{
 \GrTTsym{(}  \Delta  ,  \sigma_{{\mathrm{2}}}  \GrTTsym{)}  \odot  \GrTTsym{(}  \Gamma  \GrTTsym{,}  x  \GrTTsym{:}  A  \GrTTsym{)}  \models  \varepsilon }{%
{\GrTTdruleEpXXExtTmName}{}%
}}
\newcommand{\GrTTdruleEpXXExtTyName}[0]{\GrTTdrulename{Ep\_ExtTy}}
\newcommand{\GrTTdruleEpXXExtTy}[1]{\GrTTdrule[#1]{%
\GrTTpremise{X \, \in \,  \mathcal{K}\interp{ A } }%
\GrTTpremise{ \Delta  \odot  \Gamma  \models  \varepsilon   \quad   ( \Delta  \mid  \sigma_{{\mathrm{2}}}  \mid   \textbf{0}  )   \odot  \Gamma  \vdash  A  \GrTTsym{:}   \mathsf{Type}_{ \GrTTsym{1} } }%
}{
 \GrTTsym{(}  \Delta  ,  \sigma_{{\mathrm{2}}}  \GrTTsym{)}  \odot  \GrTTsym{(}  \Gamma  \GrTTsym{,}  x  \GrTTsym{:}  A  \GrTTsym{)}  \models   \varepsilon [  x  \mapsto  X  ]  }{%
{\GrTTdruleEpXXExtTyName}{}%
}}
\newcommand{\GrTTdruleWfXXEmptyName}[0]{\GrTTdrulename{Wf\_Empty}}
\newcommand{\GrTTdruleWfXXEmpty}[1]{\GrTTdrule[#1]{%
}{
 \emptyset   \odot   \emptyset   \vdash}{%
{\GrTTdruleWfXXEmptyName}{}%
}}
\newcommand{\GrTTdruleWfXXExtName}[0]{\GrTTdrulename{Wf\_Ext}}
\newcommand{\GrTTdruleWfXXExt}[1]{\GrTTdrule[#1]{%
\GrTTpremise{ ( \Delta  \mid  \sigma  \mid   \textbf{0}  )   \odot  \Gamma  \vdash  A  \GrTTsym{:}   \mathsf{Type}_{ l } }%
}{
\Delta  ,  \sigma  \odot  \Gamma  \GrTTsym{,}  x  \GrTTsym{:}  A  \vdash}{%
{\GrTTdruleWfXXExtName}{}%
}}
\newcommand{\GrTTdruleSTXXEqName}[0]{\GrTTdrulename{ST\_Eq}}
\newcommand{\GrTTdruleSTXXEq}[1]{\GrTTdrule[#1]{%
\GrTTpremise{ ( \Delta  \mid  \sigma  \mid   \textbf{0}  )   \odot  \Gamma  \vdash  A  \GrTTsym{=}  B  \GrTTsym{:}   \mathsf{Type}_{ l } }%
}{
 ( \Delta  \mid  \sigma )   \odot  \Gamma  \vdash  A  \leq  B}{%
{\GrTTdruleSTXXEqName}{}%
}}
\newcommand{\GrTTdruleSTXXTransName}[0]{\GrTTdrulename{ST\_Trans}}
\newcommand{\GrTTdruleSTXXTrans}[1]{\GrTTdrule[#1]{%
\GrTTpremise{ ( \Delta  \mid  \sigma )   \odot  \Gamma  \vdash  A  \leq  B  \quad   ( \Delta  \mid  \sigma )   \odot  \Gamma  \vdash  B  \leq  C}%
}{
 ( \Delta  \mid  \sigma )   \odot  \Gamma  \vdash  A  \leq  C}{%
{\GrTTdruleSTXXTransName}{}%
}}
\newcommand{\GrTTdruleSTXXTyName}[0]{\GrTTdrulename{ST\_Ty}}
\newcommand{\GrTTdruleSTXXTy}[1]{\GrTTdrule[#1]{%
\GrTTpremise{\Delta  \odot  \Gamma  \vdash  \quad  l  \leq  l'}%
}{
 ( \Delta  \mid   \textbf{0}  )   \odot  \Gamma  \vdash   \mathsf{Type}_{ l }   \leq   \mathsf{Type}_{ l' } }{%
{\GrTTdruleSTXXTyName}{}%
}}
\newcommand{\GrTTdruleSTXXArrowName}[0]{\GrTTdrulename{ST\_Arrow}}
\newcommand{\GrTTdruleSTXXArrow}[1]{\GrTTdrule[#1]{%
\GrTTpremise{ ( \Delta  ,  \sigma_{{\mathrm{1}}}  \mid  \sigma_{{\mathrm{2}}}  ,  r  \mid   \textbf{0}  )   \odot  \Gamma  \GrTTsym{,}  x  \GrTTsym{:}  A  \vdash  B  \GrTTsym{:}   \mathsf{Type}_{ l } }%
\GrTTpremise{ ( \Delta  \mid  \sigma_{{\mathrm{1}}} )   \odot  \Gamma  \vdash  A'  \leq  A  \quad   ( \Delta  ,  \sigma_{{\mathrm{1}}}  \mid  \sigma_{{\mathrm{2}}}  ,  r )   \odot  \Gamma  \GrTTsym{,}  x  \GrTTsym{:}  A'  \vdash  B  \leq  B'}%
}{
 ( \Delta  \mid  \sigma_{{\mathrm{1}}}  \GrTTsym{+}  \sigma_{{\mathrm{2}}} )   \odot  \Gamma  \vdash   \textstyle (  x  :_{  \textcolor{darkblue}{( s ,  r )}  }  A  )  \to   B   \leq   \textstyle (  x  :_{  \textcolor{darkblue}{( s ,  r )}  }  A'  )  \to   B' }{%
{\GrTTdruleSTXXArrowName}{}%
}}
\newcommand{\GrTTdruleSTXXTenName}[0]{\GrTTdrulename{ST\_Ten}}
\newcommand{\GrTTdruleSTXXTen}[1]{\GrTTdrule[#1]{%
\GrTTpremise{ ( \Delta  ,  \sigma_{{\mathrm{1}}}  \mid  \sigma_{{\mathrm{2}}}  ,  r )   \odot  \Gamma  \GrTTsym{,}  x  \GrTTsym{:}  A  \vdash  B  \leq  B'}%
}{
 ( \Delta  \mid  \sigma_{{\mathrm{1}}}  \GrTTsym{+}  \sigma_{{\mathrm{2}}} )   \odot  \Gamma  \vdash   \textstyle (  x  :_{  \textcolor{darkblue}{ r }  }  A  )  \otimes   B   \leq   \textstyle (  x  :_{  \textcolor{darkblue}{ r }  }  A  )  \otimes   B' }{%
{\GrTTdruleSTXXTenName}{}%
}}
\newcommand{\GrTTdruleSTXXBoxName}[0]{\GrTTdrulename{ST\_Box}}
\newcommand{\GrTTdruleSTXXBox}[1]{\GrTTdrule[#1]{%
\GrTTpremise{ ( \Delta  \mid  \sigma )   \odot  \Gamma  \vdash  A  \leq  A'}%
}{
 ( \Delta  \mid  \sigma )   \odot  \Gamma  \vdash   \square_{  \textcolor{darkblue}{ s }  }  A   \leq   \square_{  \textcolor{darkblue}{ s }  }  A' }{%
{\GrTTdruleSTXXBoxName}{}%
}}
\newcommand{\GrTTdruleTXXTypeName}[0]{\GrTTdrulename{T\_Type}}
\newcommand{\GrTTdruleTXXType}[1]{\GrTTdrule[#1]{%
\GrTTpremise{\Delta  \odot  \Gamma  \vdash}%
}{
 ( \Delta  \mid   \textbf{0}   \mid   \textbf{0}  )   \odot  \Gamma  \vdash   \mathsf{Type}_{ l }   \GrTTsym{:}   \mathsf{Type}_{   \mathsf{suc}\  l   } }{%
{\GrTTdruleTXXTypeName}{}%
}}
\newcommand{\GrTTdruleTXXVarName}[0]{\GrTTdrulename{T\_Var}}
\newcommand{\GrTTdruleTXXVar}[1]{\GrTTdrule[#1]{%
\GrTTpremise{\Delta_{{\mathrm{1}}}  ,  \sigma  ,  \Delta_{{\mathrm{2}}}  \odot  \Gamma_{{\mathrm{1}}}  \GrTTsym{,}  x  \GrTTsym{:}  A  \GrTTsym{,}  \Gamma_{{\mathrm{2}}}  \vdash  \quad   \left|  \Delta_{{\mathrm{1}}}  \right|   \GrTTsym{=}   \left|  \Gamma_{{\mathrm{1}}}  \right| }%
}{
 ( \Delta_{{\mathrm{1}}}  ,  \sigma  ,  \Delta_{{\mathrm{2}}}  \mid   \textbf{0}^{  \left|  \Delta_{{\mathrm{1}}}  \right|  }   ,  \GrTTsym{1}  ,   \textbf{0}   \mid  \sigma  ,  \GrTTsym{0}  ,   \textbf{0}  )   \odot  \Gamma_{{\mathrm{1}}}  \GrTTsym{,}  x  \GrTTsym{:}  A  \GrTTsym{,}  \Gamma_{{\mathrm{2}}}  \vdash  x  \GrTTsym{:}  A}{%
{\GrTTdruleTXXVarName}{}%
}}
\newcommand{\GrTTdruleTXXArrowName}[0]{\GrTTdrulename{T\_Arrow}}
\newcommand{\GrTTdruleTXXArrow}[1]{\GrTTdrule[#1]{%
\GrTTpremise{ ( \Delta  \mid  \sigma_{{\mathrm{1}}}  \mid   \textbf{0}  )   \odot  \Gamma  \vdash  A  \GrTTsym{:}   \mathsf{Type}_{ l_{{\mathrm{1}}} }   \quad   ( \Delta  ,  \sigma_{{\mathrm{1}}}  \mid  \sigma_{{\mathrm{2}}}  ,  r  \mid   \textbf{0}  )   \odot  \Gamma  \GrTTsym{,}  x  \GrTTsym{:}  A  \vdash  B  \GrTTsym{:}   \mathsf{Type}_{ l_{{\mathrm{2}}} } }%
}{
 ( \Delta  \mid  \sigma_{{\mathrm{1}}}  \GrTTsym{+}  \sigma_{{\mathrm{2}}}  \mid   \textbf{0}  )   \odot  \Gamma  \vdash   \textstyle (  x  :_{  \textcolor{darkblue}{( s ,  r )}  }  A  )  \to   B   \GrTTsym{:}   \mathsf{Type}_{  l_{{\mathrm{1}}}  \mathop{\sqcup}  l_{{\mathrm{2}}}  } }{%
{\GrTTdruleTXXArrowName}{}%
}}
\newcommand{\GrTTdruleTXXTenName}[0]{\GrTTdrulename{T\_Ten}}
\newcommand{\GrTTdruleTXXTen}[1]{\GrTTdrule[#1]{%
\GrTTpremise{ ( \Delta  \mid  \sigma_{{\mathrm{1}}}  \mid   \textbf{0}  )   \odot  \Gamma  \vdash  A  \GrTTsym{:}   \mathsf{Type}_{ l_{{\mathrm{1}}} }   \quad   ( \Delta  ,  \sigma_{{\mathrm{1}}}  \mid  \sigma_{{\mathrm{2}}}  ,  r  \mid   \textbf{0}  )   \odot  \Gamma  \GrTTsym{,}  x  \GrTTsym{:}  A  \vdash  B  \GrTTsym{:}   \mathsf{Type}_{ l_{{\mathrm{2}}} } }%
}{
 ( \Delta  \mid  \sigma_{{\mathrm{1}}}  \GrTTsym{+}  \sigma_{{\mathrm{2}}}  \mid   \textbf{0}  )   \odot  \Gamma  \vdash   \textstyle (  x  :_{  \textcolor{darkblue}{ r }  }  A  )  \otimes   B   \GrTTsym{:}   \mathsf{Type}_{  l_{{\mathrm{1}}}  \mathop{\sqcup}  l_{{\mathrm{2}}}  } }{%
{\GrTTdruleTXXTenName}{}%
}}
\newcommand{\GrTTdruleTXXFunName}[0]{\GrTTdrulename{T\_Fun}}
\newcommand{\GrTTdruleTXXFun}[1]{\GrTTdrule[#1]{%
\GrTTpremise{ ( \Delta  ,  \sigma_{{\mathrm{1}}}  \mid  \sigma_{{\mathrm{3}}}  ,  r  \mid   \textbf{0}  )   \odot  \Gamma  \GrTTsym{,}  x  \GrTTsym{:}  A  \vdash  B  \GrTTsym{:}   \mathsf{Type}_{ l }   \quad   ( \Delta  ,  \sigma_{{\mathrm{1}}}  \mid  \sigma_{{\mathrm{2}}}  ,  s  \mid  \sigma_{{\mathrm{3}}}  ,  r )   \odot  \Gamma  \GrTTsym{,}  x  \GrTTsym{:}  A  \vdash  t  \GrTTsym{:}  B}%
}{
 ( \Delta  \mid  \sigma_{{\mathrm{2}}}  \mid  \sigma_{{\mathrm{1}}}  \GrTTsym{+}  \sigma_{{\mathrm{3}}} )   \odot  \Gamma  \vdash   \lambda  x . t   \GrTTsym{:}   \textstyle (  x  :_{  \textcolor{darkblue}{( s ,  r )}  }  A  )  \to   B }{%
{\GrTTdruleTXXFunName}{}%
}}
\newcommand{\GrTTdruleTXXAppName}[0]{\GrTTdrulename{T\_App}}
\newcommand{\GrTTdruleTXXApp}[1]{\GrTTdrule[#1]{%
\GrTTpremise{ ( \Delta  ,  \sigma_{{\mathrm{1}}}  \mid  \sigma_{{\mathrm{3}}}  ,  r  \mid   \textbf{0}  )   \odot  \Gamma  \GrTTsym{,}  x  \GrTTsym{:}  A  \vdash  B  \GrTTsym{:}   \mathsf{Type}_{ l } }%
\GrTTpremise{ ( \Delta  \mid  \sigma_{{\mathrm{2}}}  \mid  \sigma_{{\mathrm{1}}}  \GrTTsym{+}  \sigma_{{\mathrm{3}}} )   \odot  \Gamma  \vdash  t_{{\mathrm{1}}}  \GrTTsym{:}   \textstyle (  x  :_{  \textcolor{darkblue}{( s ,  r )}  }  A  )  \to   B   \quad   ( \Delta  \mid  \sigma_{{\mathrm{4}}}  \mid  \sigma_{{\mathrm{1}}} )   \odot  \Gamma  \vdash  t_{{\mathrm{2}}}  \GrTTsym{:}  A}%
}{
 ( \Delta  \mid  \sigma_{{\mathrm{2}}}  \GrTTsym{+}   s  \ast  \sigma_{{\mathrm{4}}}   \mid  \sigma_{{\mathrm{3}}}  \GrTTsym{+}   r  \ast  \sigma_{{\mathrm{4}}}  )   \odot  \Gamma  \vdash   t_{{\mathrm{1}}} \,{ t_{{\mathrm{2}}} }   \GrTTsym{:}  \GrTTsym{[}  t_{{\mathrm{2}}}  \GrTTsym{/}  x  \GrTTsym{]}  B}{%
{\GrTTdruleTXXAppName}{}%
}}
\newcommand{\GrTTdruleTXXPairName}[0]{\GrTTdrulename{T\_Pair}}
\newcommand{\GrTTdruleTXXPair}[1]{\GrTTdrule[#1]{%
\GrTTpremise{ ( \Delta  ,  \sigma_{{\mathrm{1}}}  \mid  \sigma_{{\mathrm{3}}}  ,  r  \mid   \textbf{0}  )   \odot  \Gamma  \GrTTsym{,}  x  \GrTTsym{:}  A  \vdash  B  \GrTTsym{:}   \mathsf{Type}_{ l } }%
\GrTTpremise{ ( \Delta  \mid  \sigma_{{\mathrm{2}}}  \mid  \sigma_{{\mathrm{1}}} )   \odot  \Gamma  \vdash  t_{{\mathrm{1}}}  \GrTTsym{:}  A  \quad   ( \Delta  \mid  \sigma_{{\mathrm{4}}}  \mid  \sigma_{{\mathrm{3}}}  \GrTTsym{+}   r  \ast  \sigma_{{\mathrm{2}}}  )   \odot  \Gamma  \vdash  t_{{\mathrm{2}}}  \GrTTsym{:}  \GrTTsym{[}  t_{{\mathrm{1}}}  \GrTTsym{/}  x  \GrTTsym{]}  B}%
}{
 ( \Delta  \mid  \sigma_{{\mathrm{2}}}  \GrTTsym{+}  \sigma_{{\mathrm{4}}}  \mid  \sigma_{{\mathrm{1}}}  \GrTTsym{+}  \sigma_{{\mathrm{3}}} )   \odot  \Gamma  \vdash  \GrTTsym{(}  t_{{\mathrm{1}}}  \GrTTsym{,}  t_{{\mathrm{2}}}  \GrTTsym{)}  \GrTTsym{:}   \textstyle (  x  :_{  \textcolor{darkblue}{ r }  }  A  )  \otimes   B }{%
{\GrTTdruleTXXPairName}{}%
}}
\newcommand{\GrTTdruleTXXTenCutName}[0]{\GrTTdrulename{T\_TenCut}}
\newcommand{\GrTTdruleTXXTenCut}[1]{\GrTTdrule[#1]{%
\GrTTpremise{ ( \Delta  \mid  \sigma_{{\mathrm{3}}}  \mid  \sigma_{{\mathrm{1}}}  \GrTTsym{+}  \sigma_{{\mathrm{2}}} )   \odot  \Gamma  \vdash  t_{{\mathrm{1}}}  \GrTTsym{:}   \textstyle (  x  :_{  \textcolor{darkblue}{ r }  }  A  )  \otimes   B }%
\GrTTpremise{ ( \Delta  ,  \GrTTsym{(}  \sigma_{{\mathrm{1}}}  \GrTTsym{+}  \sigma_{{\mathrm{2}}}  \GrTTsym{)}  \mid  \sigma_{{\mathrm{5}}}  ,  r'  \mid   \textbf{0}  )   \odot  \Gamma  \GrTTsym{,}  z  \GrTTsym{:}   \textstyle (  x  :_{  \textcolor{darkblue}{ r }  }  A  )  \otimes   B   \vdash  C  \GrTTsym{:}   \mathsf{Type}_{ l } }%
\GrTTpremise{ ( \Delta  ,  \sigma_{{\mathrm{1}}}  ,  \GrTTsym{(}  \sigma_{{\mathrm{2}}}  ,  r  \GrTTsym{)}  \mid  \sigma_{{\mathrm{4}}}  ,  s  ,  s  \mid  \sigma_{{\mathrm{5}}}  ,  r'  ,  r' )   \odot  \Gamma  \GrTTsym{,}  x  \GrTTsym{:}  A  \GrTTsym{,}  y  \GrTTsym{:}  B  \vdash  t_{{\mathrm{2}}}  \GrTTsym{:}  \GrTTsym{[}  \GrTTsym{(}  x  \GrTTsym{,}  y  \GrTTsym{)}  \GrTTsym{/}  z  \GrTTsym{]}  C}%
}{
 ( \Delta  \mid  \sigma_{{\mathrm{4}}}  \GrTTsym{+}   s  \ast  \sigma_{{\mathrm{3}}}   \mid  \sigma_{{\mathrm{5}}}  \GrTTsym{+}   r'  \ast  \sigma_{{\mathrm{3}}}  )   \odot  \Gamma  \vdash   \mathsf{let}\,  \textstyle (  x  ,  y  )   =  t_{{\mathrm{1}}}  \,\mathsf{in}\, t_{{\mathrm{2}}}   \GrTTsym{:}  \GrTTsym{[}  t_{{\mathrm{1}}}  \GrTTsym{/}  z  \GrTTsym{]}  C}{%
{\GrTTdruleTXXTenCutName}{}%
}}
\newcommand{\GrTTdruleTXXBoxName}[0]{\GrTTdrulename{T\_Box}}
\newcommand{\GrTTdruleTXXBox}[1]{\GrTTdrule[#1]{%
\GrTTpremise{ ( \Delta  \mid  \sigma  \mid   \textbf{0}  )   \odot  \Gamma  \vdash  A  \GrTTsym{:}   \mathsf{Type}_{ l } }%
}{
 ( \Delta  \mid  \sigma  \mid   \textbf{0}  )   \odot  \Gamma  \vdash   \square_{  \textcolor{darkblue}{ s }  }  A   \GrTTsym{:}   \mathsf{Type}_{ l } }{%
{\GrTTdruleTXXBoxName}{}%
}}
\newcommand{\GrTTdruleTXXBoxIName}[0]{\GrTTdrulename{T\_BoxI}}
\newcommand{\GrTTdruleTXXBoxI}[1]{\GrTTdrule[#1]{%
\GrTTpremise{ ( \Delta  \mid  \sigma_{{\mathrm{1}}}  \mid  \sigma_{{\mathrm{2}}} )   \odot  \Gamma  \vdash  t  \GrTTsym{:}  A}%
}{
 ( \Delta  \mid   s  \ast  \sigma_{{\mathrm{1}}}   \mid  \sigma_{{\mathrm{2}}} )   \odot  \Gamma  \vdash   \square  t   \GrTTsym{:}   \square_{  \textcolor{darkblue}{ s }  }  A }{%
{\GrTTdruleTXXBoxIName}{}%
}}
\newcommand{\GrTTdruleTXXBoxEName}[0]{\GrTTdrulename{T\_BoxE}}
\newcommand{\GrTTdruleTXXBoxE}[1]{\GrTTdrule[#1]{%
\GrTTpremise{ ( \Delta  \mid  \sigma_{{\mathrm{1}}}  \mid  \sigma_{{\mathrm{2}}} )   \odot  \Gamma  \vdash  t_{{\mathrm{1}}}  \GrTTsym{:}   \square_{  \textcolor{darkblue}{ s }  }  A }%
\GrTTpremise{ ( \Delta  ,  \sigma_{{\mathrm{2}}}  \mid  \sigma_{{\mathrm{4}}}  ,  r  \mid   \textbf{0}  )   \odot  \Gamma  \GrTTsym{,}  z  \GrTTsym{:}   \square_{  \textcolor{darkblue}{ s }  }  A   \vdash  B  \GrTTsym{:}   \mathsf{Type}_{ l } }%
\GrTTpremise{ ( \Delta  ,  \sigma_{{\mathrm{2}}}  \mid  \sigma_{{\mathrm{3}}}  ,  s  \mid  \sigma_{{\mathrm{4}}}  ,  \GrTTsym{(}   s  \ast  r   \GrTTsym{)} )   \odot  \Gamma  \GrTTsym{,}  x  \GrTTsym{:}  A  \vdash  t_{{\mathrm{2}}}  \GrTTsym{:}  \GrTTsym{[}   \square  x   \GrTTsym{/}  z  \GrTTsym{]}  B}%
}{
 ( \Delta  \mid  \sigma_{{\mathrm{1}}}  \GrTTsym{+}  \sigma_{{\mathrm{3}}}  \mid  \sigma_{{\mathrm{4}}}  \GrTTsym{+}   r  \ast  \sigma_{{\mathrm{1}}}  )   \odot  \Gamma  \vdash   \mathsf{let}\,  \square  x   =  t_{{\mathrm{1}}}  \,\mathsf{in}\, t_{{\mathrm{2}}}   \GrTTsym{:}  \GrTTsym{[}  t_{{\mathrm{1}}}  \GrTTsym{/}  z  \GrTTsym{]}  B}{%
{\GrTTdruleTXXBoxEName}{}%
}}
\newcommand{\GrTTdruleTXXTyConvName}[0]{\GrTTdrulename{T\_TyConv}}
\newcommand{\GrTTdruleTXXTyConv}[1]{\GrTTdrule[#1]{%
\GrTTpremise{ ( \Delta  \mid  \sigma_{{\mathrm{1}}}  \mid  \sigma_{{\mathrm{2}}} )   \odot  \Gamma  \vdash  t  \GrTTsym{:}  A  \quad   ( \Delta  \mid  \sigma_{{\mathrm{2}}} )   \odot  \Gamma  \vdash  A  \leq  B}%
}{
 ( \Delta  \mid  \sigma_{{\mathrm{1}}}  \mid  \sigma_{{\mathrm{2}}} )   \odot  \Gamma  \vdash  t  \GrTTsym{:}  B}{%
{\GrTTdruleTXXTyConvName}{}%
}}
\newcommand{\GrTTdruleSemXXBetaBoxName}[0]{\GrTTdrulename{Sem\_BetaBox}}
\newcommand{\GrTTdruleChkAlgXXFunName}[0]{\GrTTdrulename{ChkAlg\_Fun}}
\newcommand{\GrTTdruleChkAlgXXFun}[1]{\GrTTdrule[#1]{%
\GrTTpremise{ \Delta  ;  \Gamma  \vdash  A  \Rightarrow   \mathsf{Type}_{ l }   ;  \sigma_{{\mathrm{1}}}  ;   \textbf{0}  }%
\GrTTpremise{ \Delta  ,  \sigma_{{\mathrm{1}}}  ;  \Gamma  \GrTTsym{,}  x  \GrTTsym{:}  A  \vdash  t  \Leftarrow  B  ;  \sigma_{{\mathrm{2}}}  ,  s  ;  \sigma_{{\mathrm{3}}}  ,  r }%
}{
 \Delta  ;  \Gamma  \vdash   \lambda  x . t   \Leftarrow   \textstyle (  x  :_{  \textcolor{darkblue}{( s ,  r )}  }  A  )  \to   B   ;  \sigma_{{\mathrm{2}}}  ;  \sigma_{{\mathrm{1}}}  \GrTTsym{+}  \sigma_{{\mathrm{3}}} }{%
{\GrTTdruleChkAlgXXFunName}{}%
}}
\newcommand{\GrTTdruleInfAlgXXAppName}[0]{\GrTTdrulename{InfAlg\_App}}
\newcommand{\GrTTdruleInfAlgXXApp}[1]{\GrTTdrule[#1]{%
\GrTTpremise{ \Delta  ;  \Gamma  \vdash  t_{{\mathrm{1}}}  \Rightarrow   \textstyle (  x  :_{  \textcolor{darkblue}{( s ,  r )}  }  A  )  \to   B   ;  \sigma_{{\mathrm{2}}}  ;  \sigma_{{\mathrm{13}}} }%
\GrTTpremise{ \Delta  ;  \Gamma  \vdash  t_{{\mathrm{2}}}  \Leftarrow  A  ;  \sigma_{{\mathrm{4}}}  ;  \sigma_{{\mathrm{1}}} }%
\GrTTpremise{ \Delta  ,  \sigma_{{\mathrm{1}}}  ;  \Gamma  \GrTTsym{,}  x  \GrTTsym{:}  A  \vdash  B  \Rightarrow   \mathsf{Type}_{ l }   ;   \sigma_{{\mathrm{3}}}  ,  r   ;   \textbf{0}  }%
\GrTTpremise{\sigma_{{\mathrm{13}}}  \GrTTsym{=}  \sigma_{{\mathrm{1}}}  \GrTTsym{+}  \sigma_{{\mathrm{3}}}}%
}{
 \Delta  ;  \Gamma  \vdash   t_{{\mathrm{1}}} \,{ t_{{\mathrm{2}}} }   \Rightarrow  \GrTTsym{[}  t_{{\mathrm{2}}}  \GrTTsym{/}  x  \GrTTsym{]}  B  ;  \sigma_{{\mathrm{2}}}  \GrTTsym{+}   s  \ast  \sigma_{{\mathrm{4}}}   ;  \sigma_{{\mathrm{3}}}  \GrTTsym{+}   r  \ast  \sigma_{{\mathrm{4}}}  }{%
{\GrTTdruleInfAlgXXAppName}{}%
}}
\newcommand{\GrTTdruleTEQXXReflName}[0]{\GrTTdrulename{TEQ\_Refl}}
\newcommand{\GrTTdruleTEQXXRefl}[1]{\GrTTdrule[#1]{%
\GrTTpremise{ ( \Delta  \mid  \sigma_{{\mathrm{1}}}  \mid  \sigma_{{\mathrm{2}}} )   \odot  \Gamma  \vdash  t  \GrTTsym{:}  A}%
}{
 ( \Delta  \mid  \sigma_{{\mathrm{1}}}  \mid  \sigma_{{\mathrm{2}}} )   \odot  \Gamma  \vdash  t  \GrTTsym{=}  t  \GrTTsym{:}  A}{%
{\GrTTdruleTEQXXReflName}{}%
}}
\newcommand{\GrTTdruleTEQXXTransName}[0]{\GrTTdrulename{TEQ\_Trans}}
\newcommand{\GrTTdruleTEQXXTrans}[1]{\GrTTdrule[#1]{%
\GrTTpremise{ ( \Delta  \mid  \sigma_{{\mathrm{1}}}  \mid  \sigma_{{\mathrm{2}}} )   \odot  \Gamma  \vdash  t_{{\mathrm{1}}}  \GrTTsym{=}  t_{{\mathrm{2}}}  \GrTTsym{:}  A  \quad   ( \Delta  \mid  \sigma_{{\mathrm{1}}}  \mid  \sigma_{{\mathrm{2}}} )   \odot  \Gamma  \vdash  t_{{\mathrm{2}}}  \GrTTsym{=}  t_{{\mathrm{3}}}  \GrTTsym{:}  A}%
}{
 ( \Delta  \mid  \sigma_{{\mathrm{1}}}  \mid  \sigma_{{\mathrm{2}}} )   \odot  \Gamma  \vdash  t_{{\mathrm{1}}}  \GrTTsym{=}  t_{{\mathrm{3}}}  \GrTTsym{:}  A}{%
{\GrTTdruleTEQXXTransName}{}%
}}
\newcommand{\GrTTdruleTEQXXSymName}[0]{\GrTTdrulename{TEQ\_Sym}}
\newcommand{\GrTTdruleTEQXXSym}[1]{\GrTTdrule[#1]{%
\GrTTpremise{ ( \Delta  \mid  \sigma_{{\mathrm{1}}}  \mid  \sigma_{{\mathrm{2}}} )   \odot  \Gamma  \vdash  t_{{\mathrm{1}}}  \GrTTsym{=}  t_{{\mathrm{2}}}  \GrTTsym{:}  A}%
}{
 ( \Delta  \mid  \sigma_{{\mathrm{1}}}  \mid  \sigma_{{\mathrm{2}}} )   \odot  \Gamma  \vdash  t_{{\mathrm{2}}}  \GrTTsym{=}  t_{{\mathrm{1}}}  \GrTTsym{:}  A}{%
{\GrTTdruleTEQXXSymName}{}%
}}
\newcommand{\GrTTdruleTEQXXConvTyName}[0]{\GrTTdrulename{TEQ\_ConvTy}}
\newcommand{\GrTTdruleTEQXXConvTy}[1]{\GrTTdrule[#1]{%
\GrTTpremise{ ( \Delta  \mid  \sigma_{{\mathrm{1}}}  \mid  \sigma_{{\mathrm{2}}} )   \odot  \Gamma  \vdash  t_{{\mathrm{1}}}  \GrTTsym{=}  t_{{\mathrm{2}}}  \GrTTsym{:}  A  \quad   ( \Delta  \mid  \sigma_{{\mathrm{2}}} )   \odot  \Gamma  \vdash  A  \leq  B}%
}{
 ( \Delta  \mid  \sigma_{{\mathrm{1}}}  \mid  \sigma_{{\mathrm{2}}} )   \odot  \Gamma  \vdash  t_{{\mathrm{1}}}  \GrTTsym{=}  t_{{\mathrm{2}}}  \GrTTsym{:}  B}{%
{\GrTTdruleTEQXXConvTyName}{}%
}}
\newcommand{\GrTTdruleTEQXXArrowName}[0]{\GrTTdrulename{TEQ\_Arrow}}
\newcommand{\GrTTdruleTEQXXArrow}[1]{\GrTTdrule[#1]{%
\GrTTpremise{ ( \Delta  \mid  \sigma_{{\mathrm{1}}}  \mid   \textbf{0}  )   \odot  \Gamma  \vdash  A  \GrTTsym{=}  C  \GrTTsym{:}   \mathsf{Type}_{ l_{{\mathrm{1}}} }   \quad   ( \Delta  ,  \sigma_{{\mathrm{1}}}  \mid  \sigma_{{\mathrm{2}}}  ,  r  \mid   \textbf{0}  )   \odot  \Gamma  \GrTTsym{,}  x  \GrTTsym{:}  A  \vdash  B  \GrTTsym{=}  D  \GrTTsym{:}   \mathsf{Type}_{ l_{{\mathrm{2}}} } }%
}{
 ( \Delta  \mid  \sigma_{{\mathrm{1}}}  \GrTTsym{+}  \sigma_{{\mathrm{2}}}  \mid   \textbf{0}  )   \odot  \Gamma  \vdash   \textstyle (  x  :_{  \textcolor{darkblue}{( s ,  r )}  }  A  )  \to   B   \GrTTsym{=}   \textstyle (  x  :_{  \textcolor{darkblue}{( s ,  r )}  }  C  )  \to   D   \GrTTsym{:}   \mathsf{Type}_{  l_{{\mathrm{1}}}  \mathop{\sqcup}  l_{{\mathrm{2}}}  } }{%
{\GrTTdruleTEQXXArrowName}{}%
}}
\newcommand{\GrTTdruleTEQXXArrowCompName}[0]{\GrTTdrulename{TEQ\_ArrowComp}}
\newcommand{\GrTTdruleTEQXXArrowComp}[1]{\GrTTdrule[#1]{%
\GrTTpremise{ ( \Delta  ,  \sigma_{{\mathrm{1}}}  \mid  \sigma_{{\mathrm{2}}}  ,  s  \mid  \sigma_{{\mathrm{3}}}  ,  r )   \odot  \Gamma  \GrTTsym{,}  x  \GrTTsym{:}  A  \vdash  t_{{\mathrm{1}}}  \GrTTsym{:}  B  \quad   ( \Delta  \mid  \sigma_{{\mathrm{4}}}  \mid  \sigma_{{\mathrm{1}}} )   \odot  \Gamma  \vdash  t_{{\mathrm{2}}}  \GrTTsym{:}  A}%
}{
 ( \Delta  \mid  \sigma_{{\mathrm{2}}}  \GrTTsym{+}   s  \ast  \sigma_{{\mathrm{4}}}   \mid  \sigma_{{\mathrm{3}}}  \GrTTsym{+}   r  \ast  \sigma_{{\mathrm{4}}}  )   \odot  \Gamma  \vdash   \GrTTsym{(}   \lambda  x . t_{{\mathrm{1}}}   \GrTTsym{)} \,{ t_{{\mathrm{2}}} }   \GrTTsym{=}  \GrTTsym{[}  t_{{\mathrm{2}}}  \GrTTsym{/}  x  \GrTTsym{]}  t_{{\mathrm{1}}}  \GrTTsym{:}  \GrTTsym{[}  t_{{\mathrm{2}}}  \GrTTsym{/}  x  \GrTTsym{]}  B}{%
{\GrTTdruleTEQXXArrowCompName}{}%
}}
\newcommand{\GrTTdruleTEQXXArrowUniqName}[0]{\GrTTdrulename{TEQ\_ArrowUniq}}
\newcommand{\GrTTdruleTEQXXArrowUniq}[1]{\GrTTdrule[#1]{%
\GrTTpremise{ ( \Delta  \mid  \sigma_{{\mathrm{1}}}  \mid  \sigma_{{\mathrm{2}}} )   \odot  \Gamma  \vdash  t  \GrTTsym{:}   \textstyle (  x  :_{  \textcolor{darkblue}{( s ,  r )}  }  A  )  \to   B }%
}{
 ( \Delta  \mid  \sigma_{{\mathrm{1}}}  \mid  \sigma_{{\mathrm{2}}} )   \odot  \Gamma  \vdash  t  \GrTTsym{=}   \lambda  x . \GrTTsym{(}   t \,{ x }   \GrTTsym{)}   \GrTTsym{:}   \textstyle (  x  :_{  \textcolor{darkblue}{( s ,  r )}  }  A  )  \to   B }{%
{\GrTTdruleTEQXXArrowUniqName}{}%
}}
\newcommand{\GrTTdruleTEQXXFunName}[0]{\GrTTdrulename{TEQ\_Fun}}
\newcommand{\GrTTdruleTEQXXFun}[1]{\GrTTdrule[#1]{%
\GrTTpremise{ ( \Delta  ,  \sigma_{{\mathrm{1}}}  \mid  \sigma_{{\mathrm{2}}}  ,  s  \mid  \sigma_{{\mathrm{3}}}  ,  r )   \odot  \Gamma  \GrTTsym{,}  x  \GrTTsym{:}  A  \vdash  t_{{\mathrm{1}}}  \GrTTsym{=}  t_{{\mathrm{2}}}  \GrTTsym{:}  B}%
}{
 ( \Delta  \mid  \sigma_{{\mathrm{2}}}  \mid  \sigma_{{\mathrm{1}}}  \GrTTsym{+}  \sigma_{{\mathrm{3}}} )   \odot  \Gamma  \vdash   \lambda  x . t_{{\mathrm{1}}}   \GrTTsym{=}   \lambda  x . t_{{\mathrm{2}}}   \GrTTsym{:}   \textstyle (  x  :_{  \textcolor{darkblue}{( s ,  r )}  }  A  )  \to   B }{%
{\GrTTdruleTEQXXFunName}{}%
}}
\newcommand{\GrTTdruleTEQXXAppName}[0]{\GrTTdrulename{TEQ\_App}}
\newcommand{\GrTTdruleTEQXXApp}[1]{\GrTTdrule[#1]{%
\GrTTpremise{ ( \Delta  ,  \sigma_{{\mathrm{3}}}  \mid  \sigma_{{\mathrm{4}}}  ,  r  \mid   \textbf{0}  )   \odot  \Gamma  \GrTTsym{,}  x  \GrTTsym{:}  A  \vdash  B  \GrTTsym{:}   \mathsf{Type}_{ l } }%
\GrTTpremise{ ( \Delta  \mid  \sigma_{{\mathrm{1}}}  \mid  \sigma_{{\mathrm{3}}}  \GrTTsym{+}  \sigma_{{\mathrm{4}}} )   \odot  \Gamma  \vdash  t_{{\mathrm{1}}}  \GrTTsym{=}  t_{{\mathrm{2}}}  \GrTTsym{:}   \textstyle (  x  :_{  \textcolor{darkblue}{( s ,  r )}  }  A  )  \to   B   \quad   ( \Delta  \mid  \sigma_{{\mathrm{2}}}  \mid  \sigma_{{\mathrm{3}}} )   \odot  \Gamma  \vdash  t_{{\mathrm{3}}}  \GrTTsym{=}  t_{{\mathrm{4}}}  \GrTTsym{:}  A}%
}{
 ( \Delta  \mid  \sigma_{{\mathrm{1}}}  \GrTTsym{+}   s  \ast  \sigma_{{\mathrm{2}}}   \mid  \sigma_{{\mathrm{4}}}  \GrTTsym{+}   r  \ast  \sigma_{{\mathrm{2}}}  )   \odot  \Gamma  \vdash   t_{{\mathrm{1}}} \,{ t_{{\mathrm{3}}} }   \GrTTsym{=}   t_{{\mathrm{2}}} \,{ t_{{\mathrm{4}}} }   \GrTTsym{:}  \GrTTsym{[}  t_{{\mathrm{3}}}  \GrTTsym{/}  x  \GrTTsym{]}  B}{%
{\GrTTdruleTEQXXAppName}{}%
}}
\newcommand{\GrTTdruleTEQXXTenName}[0]{\GrTTdrulename{TEQ\_Ten}}
\newcommand{\GrTTdruleTEQXXTen}[1]{\GrTTdrule[#1]{%
\GrTTpremise{ ( \Delta  \mid  \sigma_{{\mathrm{1}}}  \mid   \textbf{0}  )   \odot  \Gamma  \vdash  A  \GrTTsym{=}  C  \GrTTsym{:}   \mathsf{Type}_{ l_{{\mathrm{1}}} }   \quad   ( \Delta  ,  \sigma_{{\mathrm{1}}}  \mid  \sigma_{{\mathrm{2}}}  ,  r  \mid   \textbf{0}  )   \odot  \Gamma  \GrTTsym{,}  x  \GrTTsym{:}  A  \vdash  B  \GrTTsym{=}  D  \GrTTsym{:}   \mathsf{Type}_{ l_{{\mathrm{2}}} } }%
}{
 ( \Delta  \mid  \sigma_{{\mathrm{1}}}  \GrTTsym{+}  \sigma_{{\mathrm{2}}}  \mid   \textbf{0}  )   \odot  \Gamma  \vdash   \textstyle (  x  :_{  \textcolor{darkblue}{ s }  }  A  )  \otimes   B   \GrTTsym{=}   \textstyle (  x  :_{  \textcolor{darkblue}{ s }  }  C  )  \otimes   D   \GrTTsym{:}   \mathsf{Type}_{  l_{{\mathrm{1}}}  \mathop{\sqcup}  l_{{\mathrm{2}}}  } }{%
{\GrTTdruleTEQXXTenName}{}%
}}
\newcommand{\GrTTdruleTEQXXTenCompName}[0]{\GrTTdrulename{TEQ\_TenComp}}
\newcommand{\GrTTdruleTEQXXTenComp}[1]{\GrTTdrule[#1]{%
\GrTTpremise{ ( \Delta  ,  \GrTTsym{(}  \sigma_{{\mathrm{1}}}  \GrTTsym{+}  \sigma_{{\mathrm{2}}}  \GrTTsym{)}  \mid  \sigma_{{\mathrm{5}}}  ,  r'  \mid   \textbf{0}  )   \odot  \Gamma  \GrTTsym{,}  z  \GrTTsym{:}   \textstyle (  x  :_{  \textcolor{darkblue}{ r }  }  A  )  \otimes   B   \vdash  C  \GrTTsym{:}   \mathsf{Type}_{ l } }%
\GrTTpremise{ ( \Delta  \mid  \sigma_{{\mathrm{3}}}  \mid  \sigma_{{\mathrm{1}}} )   \odot  \Gamma  \vdash  t_{{\mathrm{1}}}  \GrTTsym{:}  A  \quad   ( \Delta  \mid  \sigma_{{\mathrm{6}}}  \mid  \sigma_{{\mathrm{2}}}  \GrTTsym{+}   r  \ast  \sigma_{{\mathrm{3}}}  )   \odot  \Gamma  \vdash  t_{{\mathrm{2}}}  \GrTTsym{:}  \GrTTsym{[}  t_{{\mathrm{1}}}  \GrTTsym{/}  x  \GrTTsym{]}  B}%
\GrTTpremise{ ( \Delta  ,  \sigma_{{\mathrm{1}}}  ,  \GrTTsym{(}  \sigma_{{\mathrm{2}}}  ,  r  \GrTTsym{)}  \mid  \sigma_{{\mathrm{4}}}  ,  s  ,  s  \mid  \sigma_{{\mathrm{5}}}  ,  r'  ,  r' )   \odot  \Gamma  \GrTTsym{,}  x  \GrTTsym{:}  A  \GrTTsym{,}  y  \GrTTsym{:}  B  \vdash  t_{{\mathrm{3}}}  \GrTTsym{:}  \GrTTsym{[}  \GrTTsym{(}  x  \GrTTsym{,}  y  \GrTTsym{)}  \GrTTsym{/}  z  \GrTTsym{]}  C}%
}{
 ( \Delta  \mid  \sigma_{{\mathrm{4}}}  \GrTTsym{+}   s  \ast  \GrTTsym{(}  \sigma_{{\mathrm{3}}}  \GrTTsym{+}  \sigma_{{\mathrm{6}}}  \GrTTsym{)}   \mid  \sigma_{{\mathrm{5}}}  \GrTTsym{+}   r'  \ast  \GrTTsym{(}  \sigma_{{\mathrm{3}}}  \GrTTsym{+}  \sigma_{{\mathrm{6}}}  \GrTTsym{)}  )   \odot  \Gamma  \vdash   \mathsf{let}\,  \textstyle (  x  ,  y  )   =  \GrTTsym{(}  t_{{\mathrm{1}}}  \GrTTsym{,}  t_{{\mathrm{2}}}  \GrTTsym{)}  \,\mathsf{in}\, t_{{\mathrm{3}}}   \GrTTsym{=}  \GrTTsym{[}  t_{{\mathrm{1}}}  \GrTTsym{,}  t_{{\mathrm{2}}}  \GrTTsym{/}  x  \GrTTsym{,}  y  \GrTTsym{]}  t_{{\mathrm{3}}}  \GrTTsym{:}  \GrTTsym{[}  \GrTTsym{(}  t_{{\mathrm{1}}}  \GrTTsym{,}  t_{{\mathrm{2}}}  \GrTTsym{)}  \GrTTsym{/}  z  \GrTTsym{]}  C}{%
{\GrTTdruleTEQXXTenCompName}{}%
}}
\newcommand{\GrTTdruleTEQXXPairName}[0]{\GrTTdrulename{TEQ\_Pair}}
\newcommand{\GrTTdruleTEQXXPair}[1]{\GrTTdrule[#1]{%
\GrTTpremise{ ( \Delta  ,  \sigma_{{\mathrm{1}}}  \mid  \sigma_{{\mathrm{3}}}  ,  r  \mid   \textbf{0}  )   \odot  \Gamma  \GrTTsym{,}  x  \GrTTsym{:}  A  \vdash  B  \GrTTsym{:}   \mathsf{Type}_{ l } }%
\GrTTpremise{ ( \Delta  \mid  \sigma_{{\mathrm{2}}}  \mid  \sigma_{{\mathrm{1}}} )   \odot  \Gamma  \vdash  t_{{\mathrm{1}}}  \GrTTsym{=}  t'_{{\mathrm{1}}}  \GrTTsym{:}  A  \quad   ( \Delta  \mid  \sigma_{{\mathrm{4}}}  \mid  \sigma_{{\mathrm{3}}}  \GrTTsym{+}   r  \ast  \sigma_{{\mathrm{2}}}  )   \odot  \Gamma  \vdash  t_{{\mathrm{2}}}  \GrTTsym{=}  t'_{{\mathrm{2}}}  \GrTTsym{:}  \GrTTsym{[}  t_{{\mathrm{1}}}  \GrTTsym{/}  x  \GrTTsym{]}  B}%
}{
 ( \Delta  \mid  \sigma_{{\mathrm{2}}}  \GrTTsym{+}  \sigma_{{\mathrm{4}}}  \mid  \sigma_{{\mathrm{1}}}  \GrTTsym{+}  \sigma_{{\mathrm{3}}} )   \odot  \Gamma  \vdash  \GrTTsym{(}  t_{{\mathrm{1}}}  \GrTTsym{,}  t_{{\mathrm{2}}}  \GrTTsym{)}  \GrTTsym{=}  \GrTTsym{(}  t'_{{\mathrm{1}}}  \GrTTsym{,}  t'_{{\mathrm{2}}}  \GrTTsym{)}  \GrTTsym{:}   \textstyle (  x  :_{  \textcolor{darkblue}{ r }  }  A  )  \otimes   B }{%
{\GrTTdruleTEQXXPairName}{}%
}}
\newcommand{\GrTTdruleTEQXXTenCutName}[0]{\GrTTdrulename{TEQ\_TenCut}}
\newcommand{\GrTTdruleTEQXXTenCut}[1]{\GrTTdrule[#1]{%
\GrTTpremise{ ( \Delta  \mid  \sigma_{{\mathrm{3}}}  \mid  \sigma_{{\mathrm{1}}}  \GrTTsym{+}  \sigma_{{\mathrm{2}}} )   \odot  \Gamma  \vdash  t_{{\mathrm{1}}}  \GrTTsym{=}  t_{{\mathrm{2}}}  \GrTTsym{:}   \textstyle (  x  :_{  \textcolor{darkblue}{ r }  }  A  )  \otimes   B }%
\GrTTpremise{ ( \Delta  ,  \GrTTsym{(}  \sigma_{{\mathrm{1}}}  \GrTTsym{+}  \sigma_{{\mathrm{2}}}  \GrTTsym{)}  \mid  \sigma_{{\mathrm{5}}}  ,  r'  \mid   \textbf{0}  )   \odot  \Gamma  \GrTTsym{,}  z  \GrTTsym{:}   \textstyle (  x  :_{  \textcolor{darkblue}{ r }  }  A  )  \otimes   B   \vdash  C  \GrTTsym{:}   \mathsf{Type}_{ l } }%
\GrTTpremise{ ( \Delta  ,  \sigma_{{\mathrm{1}}}  ,  \GrTTsym{(}  \sigma_{{\mathrm{2}}}  ,  r  \GrTTsym{)}  \mid  \sigma_{{\mathrm{4}}}  ,  s  ,  s  \mid  \sigma_{{\mathrm{5}}}  ,  r'  ,  r' )   \odot  \Gamma  \GrTTsym{,}  x  \GrTTsym{:}  A  \GrTTsym{,}  y  \GrTTsym{:}  B  \vdash  t_{{\mathrm{3}}}  \GrTTsym{=}  t_{{\mathrm{4}}}  \GrTTsym{:}  \GrTTsym{[}  \GrTTsym{(}  x  \GrTTsym{,}  y  \GrTTsym{)}  \GrTTsym{/}  z  \GrTTsym{]}  C}%
}{
 ( \Delta  \mid  \sigma_{{\mathrm{4}}}  \GrTTsym{+}   s  \ast  \sigma_{{\mathrm{3}}}   \mid  \sigma_{{\mathrm{5}}}  \GrTTsym{+}   r'  \ast  \sigma_{{\mathrm{3}}}  )   \odot  \Gamma  \vdash  \GrTTsym{(}   \mathsf{let}\,  \textstyle (  x  ,  y  )   =  t_{{\mathrm{1}}}  \,\mathsf{in}\, t_{{\mathrm{3}}}   \GrTTsym{)}  \GrTTsym{=}  \GrTTsym{(}   \mathsf{let}\,  \textstyle (  x  ,  y  )   =  t_{{\mathrm{2}}}  \,\mathsf{in}\, t_{{\mathrm{4}}}   \GrTTsym{)}  \GrTTsym{:}  \GrTTsym{[}  t_{{\mathrm{1}}}  \GrTTsym{/}  z  \GrTTsym{]}  C}{%
{\GrTTdruleTEQXXTenCutName}{}%
}}
\newcommand{\GrTTdruleTEQXXTenUName}[0]{\GrTTdrulename{TEQ\_TenU}}
\newcommand{\GrTTdruleTEQXXTenU}[1]{\GrTTdrule[#1]{%
\GrTTpremise{ ( \Delta  \mid  \sigma_{{\mathrm{1}}}  \mid  \sigma_{{\mathrm{2}}} )   \odot  \Gamma  \vdash  t  \GrTTsym{:}   \textstyle (  x  :_{  \textcolor{darkblue}{ r }  }  A  )  \otimes   B }%
}{
 ( \Delta  \mid  \sigma_{{\mathrm{1}}}  \mid  \sigma_{{\mathrm{2}}} )   \odot  \Gamma  \vdash  t  \GrTTsym{=}  \GrTTsym{(}   \mathsf{let}\,  \textstyle (  x  ,  y  )   =  t  \,\mathsf{in}\, \GrTTsym{(}  x  \GrTTsym{,}  y  \GrTTsym{)}   \GrTTsym{)}  \GrTTsym{:}   \textstyle (  x  :_{  \textcolor{darkblue}{ r }  }  A  )  \otimes   B }{%
{\GrTTdruleTEQXXTenUName}{}%
}}
\newcommand{\GrTTdruleTEQXXBoxName}[0]{\GrTTdrulename{TEQ\_Box}}
\newcommand{\GrTTdruleTEQXXBox}[1]{\GrTTdrule[#1]{%
\GrTTpremise{ ( \Delta  \mid  \sigma  \mid   \textbf{0}  )   \odot  \Gamma  \vdash  A  \GrTTsym{=}  B  \GrTTsym{:}   \mathsf{Type}_{ l } }%
}{
 ( \Delta  \mid  \sigma  \mid   \textbf{0}  )   \odot  \Gamma  \vdash   \square_{  \textcolor{darkblue}{ s }  }  A   \GrTTsym{=}   \square_{  \textcolor{darkblue}{ s }  }  B   \GrTTsym{:}   \mathsf{Type}_{ l } }{%
{\GrTTdruleTEQXXBoxName}{}%
}}
\newcommand{\GrTTdruleTEQXXBoxIName}[0]{\GrTTdrulename{TEQ\_BoxI}}
\newcommand{\GrTTdruleTEQXXBoxI}[1]{\GrTTdrule[#1]{%
\GrTTpremise{ ( \Delta  \mid  \sigma_{{\mathrm{1}}}  \mid  \sigma_{{\mathrm{2}}} )   \odot  \Gamma  \vdash  t_{{\mathrm{1}}}  \GrTTsym{=}  t_{{\mathrm{2}}}  \GrTTsym{:}  A}%
}{
 ( \Delta  \mid   s  \ast  \sigma_{{\mathrm{1}}}   \mid  \sigma_{{\mathrm{2}}} )   \odot  \Gamma  \vdash   \square  t_{{\mathrm{1}}}   \GrTTsym{=}   \square  t_{{\mathrm{2}}}   \GrTTsym{:}   \square_{  \textcolor{darkblue}{ s }  }  A }{%
{\GrTTdruleTEQXXBoxIName}{}%
}}
\newcommand{\GrTTdruleTEQXXBoxBName}[0]{\GrTTdrulename{TEQ\_BoxB}}
\newcommand{\GrTTdruleTEQXXBoxB}[1]{\GrTTdrule[#1]{%
\GrTTpremise{ ( \Delta  \mid  \sigma_{{\mathrm{1}}}  \mid  \sigma_{{\mathrm{2}}} )   \odot  \Gamma  \vdash  t_{{\mathrm{1}}}  \GrTTsym{:}  A}%
\GrTTpremise{ ( \Delta  ,  \sigma_{{\mathrm{2}}}  \mid  \sigma_{{\mathrm{4}}}  ,  r  \mid   \textbf{0}  )   \odot  \Gamma  \GrTTsym{,}  z  \GrTTsym{:}   \square_{  \textcolor{darkblue}{ s }  }  A   \vdash  B  \GrTTsym{:}   \mathsf{Type}_{ l } }%
\GrTTpremise{ ( \Delta  ,  \sigma_{{\mathrm{2}}}  \mid  \sigma_{{\mathrm{3}}}  ,  s  \mid  \sigma_{{\mathrm{4}}}  ,  \GrTTsym{(}   s  \ast  r   \GrTTsym{)} )   \odot  \Gamma  \GrTTsym{,}  x  \GrTTsym{:}  A  \vdash  t_{{\mathrm{2}}}  \GrTTsym{:}  \GrTTsym{[}   \square  x   \GrTTsym{/}  z  \GrTTsym{]}  B}%
}{
 ( \Delta  \mid  \sigma_{{\mathrm{3}}}  \GrTTsym{+}   s  \ast  \sigma_{{\mathrm{1}}}   \mid  \sigma_{{\mathrm{4}}}  \GrTTsym{+}     s  \ast  r    \ast  \sigma_{{\mathrm{1}}}  )   \odot  \Gamma  \vdash  \GrTTsym{(}   \mathsf{let}\,  \square  x   =   \square  t_{{\mathrm{1}}}   \,\mathsf{in}\, t_{{\mathrm{2}}}   \GrTTsym{)}  \GrTTsym{=}  \GrTTsym{[}  t_{{\mathrm{1}}}  \GrTTsym{/}  x  \GrTTsym{]}  t_{{\mathrm{2}}}  \GrTTsym{:}  \GrTTsym{[}   \square  t_{{\mathrm{1}}}   \GrTTsym{/}  z  \GrTTsym{]}  B}{%
{\GrTTdruleTEQXXBoxBName}{}%
}}
\newcommand{\GrTTdruleTEQXXBoxEName}[0]{\GrTTdrulename{TEQ\_BoxE}}
\newcommand{\GrTTdruleTEQXXBoxE}[1]{\GrTTdrule[#1]{%
\GrTTpremise{ ( \Delta  ,  \sigma_{{\mathrm{2}}}  \mid  \sigma_{{\mathrm{4}}}  ,  r  \mid   \textbf{0}  )   \odot  \Gamma  \GrTTsym{,}  z  \GrTTsym{:}   \square_{  \textcolor{darkblue}{ s }  }  A   \vdash  B  \GrTTsym{:}   \mathsf{Type}_{ l } }%
\GrTTpremise{ ( \Delta  \mid  \sigma_{{\mathrm{1}}}  \mid  \sigma_{{\mathrm{2}}} )   \odot  \Gamma  \vdash  t_{{\mathrm{1}}}  \GrTTsym{=}  t_{{\mathrm{2}}}  \GrTTsym{:}   \square_{  \textcolor{darkblue}{ s }  }  A   \quad   ( \Delta  ,  \sigma_{{\mathrm{2}}}  \mid  \sigma_{{\mathrm{3}}}  ,  s  \mid  \sigma_{{\mathrm{4}}}  ,  \GrTTsym{(}   s  \ast  r   \GrTTsym{)} )   \odot  \Gamma  \GrTTsym{,}  x  \GrTTsym{:}  A  \vdash  t_{{\mathrm{3}}}  \GrTTsym{=}  t_{{\mathrm{4}}}  \GrTTsym{:}  \GrTTsym{[}   \square  x   \GrTTsym{/}  z  \GrTTsym{]}  B}%
}{
 ( \Delta  \mid  \sigma_{{\mathrm{1}}}  \GrTTsym{+}  \sigma_{{\mathrm{3}}}  \mid  \sigma_{{\mathrm{4}}}  \GrTTsym{+}   r  \ast  \sigma_{{\mathrm{1}}}  )   \odot  \Gamma  \vdash  \GrTTsym{(}   \mathsf{let}\,  \square  x   =  t_{{\mathrm{1}}}  \,\mathsf{in}\, t_{{\mathrm{3}}}   \GrTTsym{)}  \GrTTsym{=}  \GrTTsym{(}   \mathsf{let}\,  \square  x   =  t_{{\mathrm{2}}}  \,\mathsf{in}\, t_{{\mathrm{4}}}   \GrTTsym{)}  \GrTTsym{:}  \GrTTsym{[}  t_{{\mathrm{1}}}  \GrTTsym{/}  z  \GrTTsym{]}  B}{%
{\GrTTdruleTEQXXBoxEName}{}%
}}
\newcommand{\GrTTdruleTEQXXBoxUName}[0]{\GrTTdrulename{TEQ\_BoxU}}
\newcommand{\GrTTdruleTEQXXBoxU}[1]{\GrTTdrule[#1]{%
\GrTTpremise{ ( \Delta  \mid  \sigma_{{\mathrm{1}}}  \mid  \sigma_{{\mathrm{2}}} )   \odot  \Gamma  \vdash  t  \GrTTsym{:}   \square_{  \textcolor{darkblue}{ s }  }  A }%
}{
 ( \Delta  \mid  \sigma_{{\mathrm{1}}}  \mid  \sigma_{{\mathrm{2}}} )   \odot  \Gamma  \vdash  t  \GrTTsym{=}  \GrTTsym{(}   \mathsf{let}\,  \square  x   =  t  \,\mathsf{in}\,  \square  x    \GrTTsym{)}  \GrTTsym{:}   \square_{  \textcolor{darkblue}{ s }  }  A }{%
{\GrTTdruleTEQXXBoxUName}{}%
}}
\renewcommand{\GrTTdrule}[4][]{{\displaystyle\frac{\begin{array}{l}#2\end{array}}{#3}\,{#4}}}
\renewcommand{\GrTTpremise}[1]{ #1 \\}
\renewcommand{\GrTTsym}[1]{#1}
\providecommand{\GrTTdruleTXXTypeName}{}
\providecommand{\GrTTdruleTXXArrowName}{}
\providecommand{\GrTTdruleTXXTenName}{}
\providecommand{\GrTTdruleTXXVarName}{}
\providecommand{\GrTTdruleTXXFunName}{}
\providecommand{\GrTTdruleTXXAppName}{}
\providecommand{\GrTTdruleTXXPairName}{}
\providecommand{\GrTTdruleTXXTenCutName}{}
\providecommand{\GrTTdruleTXXBoxName}{}
\providecommand{\GrTTdruleTXXBoxIName}{}
\providecommand{\GrTTdruleTXXBoxEName}{}
\providecommand{\GrTTdruleWfXXEmptyName}{}
\providecommand{\GrTTdruleWfXXExtName}{}
\providecommand{\GrTTdruleChkAlgXXFunName}{}
\providecommand{\GrTTdruleInfAlgXXAppName}{}
\providecommand{\GrTTdruleSemXXBetaBoxName}{}
\renewcommand{\GrTTdruleSemXXBetaBoxName}{\beta\Box}
\providecommand{\GrTTdruleRhoXXEmptyName}{}
\providecommand{\GrTTdruleRhoXXExtTmName}{}
\providecommand{\GrTTdruleRhoXXExtTyName}{}
\renewcommand{\GrTTdruleRhoXXEmptyName}[0]{\textsc{E}}
\renewcommand{\GrTTdruleRhoXXExtTmName}[0]{\textsc{Tm}}
\renewcommand{\GrTTdruleRhoXXExtTyName}[0]{\textsc{Ty}}
\providecommand{\GrTTdruleEpXXEmptyName}{}
\providecommand{\GrTTdruleEpXXExtTmName}{}
\providecommand{\GrTTdruleEpXXExtTyName}{}
\renewcommand{\GrTTdruleEpXXEmptyName}[0]{\textsc{E}}
\renewcommand{\GrTTdruleEpXXExtTmName}[0]{\textsc{Tm}}
\renewcommand{\GrTTdruleEpXXExtTyName}[0]{\textsc{Ty}}
\renewcommand{\GrTTdruleTXXTypeName}[0]{{\footnotesize{\textsc{Type}}}}
\renewcommand{\GrTTdruleTXXArrowName}[0]{\multimap}
\renewcommand{\GrTTdruleTXXTenName}[0]{\otimes}
\renewcommand{\GrTTdruleTXXVarName}{{\footnotesize{\textsc{Var}}}}
\renewcommand{\GrTTdruleTXXFunName}{\lambda_i}
\renewcommand{\GrTTdruleTXXAppName}{\lambda_e}
\renewcommand{\GrTTdruleTXXPairName}{\otimes_i}
\renewcommand{\GrTTdruleTXXTenCutName}{\otimes_e}
\renewcommand{\GrTTdruleTXXBoxName}{\square}
\renewcommand{\GrTTdruleTXXBoxIName}{\square_i}
\renewcommand{\GrTTdruleTXXBoxEName}{\square_e}
\renewcommand{\GrTTdruleTXXTyConvName}{\footnotesize{\textsc{Conv}}}
\renewcommand{\GrTTdruleWfXXEmptyName}{\textsc{WfEmp}}
\renewcommand{\GrTTdruleWfXXExtName}{\textsc{WfExt}}
\renewcommand{\GrTTdruleChkAlgXXFunName}{\Leftarrow\lambda_i}
\renewcommand{\GrTTdruleInfAlgXXAppName}{\Rightarrow\lambda_e}
\providecommand{\GrTTdruleTEQXXReflName}{}
\providecommand{\GrTTdruleTEQXXTransName}{}
\providecommand{\GrTTdruleTEQXXSymName}{}
\providecommand{\GrTTdruleTEQXXConvTyName}{}
\providecommand{\GrTTdruleTEQXXArrowName}{}
\providecommand{\GrTTdruleTEQXXArrowCompName}{}
\providecommand{\GrTTdruleTEQXXArrowUniqName}{}
\providecommand{\GrTTdruleTEQXXFunName}{}
\providecommand{\GrTTdruleTEQXXAppName}{}
\providecommand{\GrTTdruleTEQXXTenName}{}
\providecommand{\GrTTdruleTEQXXTenCompName}{}
\providecommand{\GrTTdruleTEQXXPairName}{}
\providecommand{\GrTTdruleTEQXXTenCutName}{}
\providecommand{\GrTTdruleTEQXXTenUName}{}
\providecommand{\GrTTdruleTEQXXBoxName}{}
\providecommand{\GrTTdruleTEQXXBoxIName}{}
\providecommand{\GrTTdruleTEQXXBoxBName}{}
\providecommand{\GrTTdruleTEQXXBoxEName}{}
\providecommand{\GrTTdruleTEQXXBoxUName}{}
\renewcommand{\GrTTdruleTEQXXReflName}{\footnotesize{\textsc{Refl}}}
\renewcommand{\GrTTdruleTEQXXTransName}{\footnotesize{\textsc{Trans}}}
\renewcommand{\GrTTdruleTEQXXSymName}{\footnotesize{\textsc{Sym}}}
\renewcommand{\GrTTdruleTEQXXConvTyName}{\footnotesize{\textsc{ConvTy}}}
\renewcommand{\GrTTdruleTEQXXArrowName}{\footnotesize{\textsc{Eq}}_{\rightarrow}}
\renewcommand{\GrTTdruleTEQXXArrowCompName}{\footnotesize{\textsc{Eq}}_{\rightarrow_c}}
\renewcommand{\GrTTdruleTEQXXArrowUniqName}{\footnotesize{\textsc{Eq}}_{\rightarrow_u}}
\renewcommand{\GrTTdruleTEQXXFunName}{\footnotesize{\textsc{Eq}}_{\rightarrow_i}}
\renewcommand{\GrTTdruleTEQXXAppName}{\footnotesize{\textsc{Eq}}_{\rightarrow_e}}
\renewcommand{\GrTTdruleTEQXXTenName}{\footnotesize{\textsc{Eq}}_\otimes}
\renewcommand{\GrTTdruleTEQXXTenCompName}{\footnotesize{\textsc{Eq}}_{\otimes_c}}
\renewcommand{\GrTTdruleTEQXXPairName}{\footnotesize{\textsc{Eq}}_{\otimes_i}}
\renewcommand{\GrTTdruleTEQXXTenCutName}{\footnotesize{\textsc{Eq}}_{\otimes_e}}
\renewcommand{\GrTTdruleTEQXXTenUName}{\footnotesize{\textsc{Eq}}_{\otimes_u}}
\renewcommand{\GrTTdruleTEQXXBoxName}{\footnotesize{\textsc{Eq}}_{\square}}
\renewcommand{\GrTTdruleTEQXXBoxIName}{\footnotesize{\textsc{Eq}}_{\square_i}}
\renewcommand{\GrTTdruleTEQXXBoxBName}{\footnotesize{\textsc{Eq}}_{\square_c}}
\renewcommand{\GrTTdruleTEQXXBoxEName}{\footnotesize{\textsc{Eq}}_{\square_e}}
\renewcommand{\GrTTdruleTEQXXBoxUName}{\footnotesize{\textsc{Eq}}_{\square_u}}
\providecommand{\GrTTdruleSTXXEqName}{}
\providecommand{\GrTTdruleSTXXTransName}{}
\providecommand{\GrTTdruleSTXXTyName}{}
\providecommand{\GrTTdruleSTXXArrowName}{}
\providecommand{\GrTTdruleSTXXTenName}{}
\providecommand{\GrTTdruleSTXXBoxName}{}
\renewcommand{\GrTTdruleSTXXEqName}{\leq_{\footnotesize{\textsc{Eq}}}}
\renewcommand{\GrTTdruleSTXXTransName}{\leq_{\footnotesize{\textsc{Trans}}}}
\renewcommand{\GrTTdruleSTXXTyName}{\leq_{\footnotesize{\textsc{Type}}}}
\renewcommand{\GrTTdruleSTXXArrowName}{\leq_\to}
\renewcommand{\GrTTdruleSTXXTenName}{\leq_\otimes}
\renewcommand{\GrTTdruleSTXXBoxName}{\leq_\square}
\renewcommand{\GrTTdruleWfXXEmpty}[1]{\GrTTdrule[#1]{%
}{ \emptyset   \odot   \emptyset   \vdash}{
  {\GrTTdruleWfXXEmptyName{}}{}%
}}
\renewcommand{\GrTTdruleWfXXExt}[1]{\GrTTdrule[#1]{%
  { ( \Delta  \mid  \sigma  \mid   \textbf{0}  )   \odot  \Gamma  \vdash  A  \GrTTsym{:}   \mathsf{Type}_{ l } }
}{\Delta  ,  \sigma  \odot  \Gamma  \GrTTsym{,}  x  \GrTTsym{:}  A  \vdash}{
  {\GrTTdruleWfXXExtName{}}{}%
}}
\renewcommand{\GrTTdruleTXXType}[1]{\GrTTdrule[#1]{%
  \Delta  \odot  \Gamma  \vdash
}{ ( \Delta  \mid   \textbf{0}   \mid   \textbf{0}  )   \odot  \Gamma  \vdash   \mathsf{Type}_{ l }   \GrTTsym{:}   \mathsf{Type}_{   \mathsf{suc}\  l   } }{
  {\GrTTdruleTXXTypeName{}}{}%
}}
\renewcommand{\GrTTdruleTXXArrow}[1]{\GrTTdrule[#1]{%
    \GrTTpremise{
       ( \Delta  \mid  \sigma_{{\mathrm{1}}}  \mid   \textbf{0}  )   \odot  \Gamma  \vdash  A  \GrTTsym{:}   \mathsf{Type}_{ l_{{\mathrm{1}}} } 
      \quad
       ( \Delta  ,  \sigma_{{\mathrm{1}}}  \mid  \sigma_{{\mathrm{2}}}  ,  r  \mid   \textbf{0}  )   \odot  \Gamma  \GrTTsym{,}  x  \GrTTsym{:}  A  \vdash  B  \GrTTsym{:}   \mathsf{Type}_{ l_{{\mathrm{2}}} } }
}{ ( \Delta  \mid  \sigma_{{\mathrm{1}}}  \GrTTsym{+}  \sigma_{{\mathrm{2}}}  \mid   \textbf{0}  )   \odot  \Gamma  \vdash   \textstyle (  x  :_{  \textcolor{darkblue}{( s ,  r )}  }  A  )  \to   B   \GrTTsym{:}   \mathsf{Type}_{  l_{{\mathrm{1}}}  \mathop{\sqcup}  l_{{\mathrm{2}}}  } }{%
{\GrTTdruleTXXArrowName{}}{}%
}}
\renewcommand{\GrTTdruleTXXTen}[1]{\GrTTdrule[#1]{%
    \GrTTpremise{
       ( \Delta  \mid  \sigma_{{\mathrm{1}}}  \mid   \textbf{0}  )   \odot  \Gamma  \vdash  A  \GrTTsym{:}   \mathsf{Type}_{ l } 
      \quad
       ( \Delta  ,  \sigma_{{\mathrm{1}}}  \mid  \sigma_{{\mathrm{2}}}  ,  r  \mid   \textbf{0}  )   \odot  \Gamma  \GrTTsym{,}  x  \GrTTsym{:}  A  \vdash  B  \GrTTsym{:}   \mathsf{Type}_{ l } }
}{ ( \Delta  \mid  \sigma_{{\mathrm{1}}}  \GrTTsym{+}  \sigma_{{\mathrm{2}}}  \mid   \textbf{0}  )   \odot  \Gamma  \vdash   \textstyle (  x  :_{  \textcolor{darkblue}{ r }  }  A  )  \otimes   B   \GrTTsym{:}   \mathsf{Type}_{ l } }{%
{\GrTTdruleTXXTenName{}}{}%
}}
\renewcommand{\GrTTdruleTXXVar}[1]{\GrTTdrule[#1]{%
    \GrTTpremise{
      {\GrTTsym{(}  \Delta_{{\mathrm{1}}}  ,  \sigma  ,  \Delta_{{\mathrm{2}}}  \GrTTsym{)}  \odot  \Gamma_{{\mathrm{1}}}  \GrTTsym{,}  x  \GrTTsym{:}  A  \GrTTsym{,}  \Gamma_{{\mathrm{2}}}  \vdash}
      \quad
      { \left|  \Delta_{{\mathrm{1}}}  \right|   \GrTTsym{=}   \left|  \Gamma_{{\mathrm{1}}}  \right| }
    }%
}{ ( \Delta_{{\mathrm{1}}}  ,  \sigma  ,  \Delta_{{\mathrm{2}}}  \mid   \textbf{0}^{  \left|  \Delta_{{\mathrm{1}}}  \right|  }   ,  \GrTTsym{1}  ,   \textbf{0}   \mid  \sigma  ,  \GrTTsym{0}  ,   \textbf{0}  )   \odot  \Gamma_{{\mathrm{1}}}  \GrTTsym{,}  x  \GrTTsym{:}  A  \GrTTsym{,}  \Gamma_{{\mathrm{2}}}  \vdash  x  \GrTTsym{:}  A}{%
{\GrTTdruleTXXVarName{}}{}%
}}
\renewcommand{\GrTTdruleTXXFun}[1]{\GrTTdrule[#1]{%
    \GrTTpremise{
       ( \Delta  ,  \sigma_{{\mathrm{1}}}  \mid  \sigma_{{\mathrm{3}}}  ,  r  \mid   \textbf{0}  )   \odot  \Gamma  \GrTTsym{,}  x  \GrTTsym{:}  A  \vdash  B  \GrTTsym{:}   \mathsf{Type}_{ l } %
      \quad
       ( \Delta  ,  \sigma_{{\mathrm{1}}}  \mid  \sigma_{{\mathrm{2}}}  ,  s  \mid  \sigma_{{\mathrm{3}}}  ,  r )   \odot  \Gamma  \GrTTsym{,}  x  \GrTTsym{:}  A  \vdash  t  \GrTTsym{:}  B}%
}{ ( \Delta  \mid  \sigma_{{\mathrm{2}}}  \mid  \sigma_{{\mathrm{1}}}  \GrTTsym{+}  \sigma_{{\mathrm{3}}} )   \odot  \Gamma  \vdash   \lambda  x . t   \GrTTsym{:}   \textstyle (  x  :_{  \textcolor{darkblue}{( s ,  r )}  }  A  )  \to   B }{%
{\GrTTdruleTXXFunName{}}{}%
}}
\renewcommand{\GrTTdruleTXXApp}[1]{\GrTTdrule[#1]{%
    \GrTTpremise{
       ( \Delta  ,  \sigma_{{\mathrm{1}}}  \mid  \sigma_{{\mathrm{3}}}  ,  r  \mid   \textbf{0}  )   \odot  \Gamma  \GrTTsym{,}  x  \GrTTsym{:}  A  \vdash  B  \GrTTsym{:}   \mathsf{Type}_{ l } 
      \\
       ( \Delta  \mid  \sigma_{{\mathrm{2}}}  \mid  \sigma_{{\mathrm{1}}}  \GrTTsym{+}  \sigma_{{\mathrm{3}}} )   \odot  \Gamma  \vdash  t_{{\mathrm{1}}}  \GrTTsym{:}   \textstyle (  x  :_{  \textcolor{darkblue}{( s ,  r )}  }  A  )  \to   B 
      \quad
       ( \Delta  \mid  \sigma_{{\mathrm{4}}}  \mid  \sigma_{{\mathrm{1}}} )   \odot  \Gamma  \vdash  t_{{\mathrm{2}}}  \GrTTsym{:}  A}
}{ ( \Delta  \mid  \sigma_{{\mathrm{2}}}  \GrTTsym{+}   s  \ast  \sigma_{{\mathrm{4}}}   \mid  \sigma_{{\mathrm{3}}}  \GrTTsym{+}   r  \ast  \sigma_{{\mathrm{4}}}  )   \odot  \Gamma  \vdash   t_{{\mathrm{1}}} \,{ t_{{\mathrm{2}}} }   \GrTTsym{:}  \GrTTsym{[}  t_{{\mathrm{2}}}  \GrTTsym{/}  x  \GrTTsym{]}  B}{%
{\GrTTdruleTXXAppName{}}{}%
}}
\renewcommand{\GrTTdruleTXXPair}[1]{\GrTTdrule[#1]{%
    \GrTTpremise{
       ( \Delta  ,  \sigma_{{\mathrm{1}}}  \mid  \sigma_{{\mathrm{3}}}  ,  r  \mid   \textbf{0}  )   \odot  \Gamma  \GrTTsym{,}  x  \GrTTsym{:}  A  \vdash  B  \GrTTsym{:}   \mathsf{Type}_{ l } 
      \\
       ( \Delta  \mid  \sigma_{{\mathrm{2}}}  \mid  \sigma_{{\mathrm{1}}} )   \odot  \Gamma  \vdash  t_{{\mathrm{1}}}  \GrTTsym{:}  A
      \qquad
       ( \Delta  \mid  \sigma_{{\mathrm{4}}}  \mid  \sigma_{{\mathrm{3}}}  \GrTTsym{+}   r  \ast  \sigma_{{\mathrm{2}}}  )   \odot  \Gamma  \vdash  t_{{\mathrm{2}}}  \GrTTsym{:}  \GrTTsym{[}  t_{{\mathrm{1}}}  \GrTTsym{/}  x  \GrTTsym{]}  B}
}{ ( \Delta  \mid  \sigma_{{\mathrm{2}}}  \GrTTsym{+}  \sigma_{{\mathrm{4}}}  \mid  \sigma_{{\mathrm{1}}}  \GrTTsym{+}  \sigma_{{\mathrm{3}}} )   \odot  \Gamma  \vdash  \GrTTsym{(}  t_{{\mathrm{1}}}  \GrTTsym{,}  t_{{\mathrm{2}}}  \GrTTsym{)}  \GrTTsym{:}   \textstyle (  x  :_{  \textcolor{darkblue}{ r }  }  A  )  \otimes   B }{%
{\GrTTdruleTXXPairName{}}{}%
}}
\renewcommand{\GrTTdruleTXXTenCut}[1]{\GrTTdrule[#1]{%
\GrTTpremise{ ( \Delta  \mid  \sigma_{{\mathrm{3}}}  \mid  \sigma_{{\mathrm{1}}}  \GrTTsym{+}  \sigma_{{\mathrm{2}}} )   \odot  \Gamma  \vdash  t_{{\mathrm{1}}}  \GrTTsym{:}   \textstyle (  x  :_{  \textcolor{darkblue}{ r }  }  A  )  \otimes   B }
\GrTTpremise{ ( \Delta  ,  \GrTTsym{(}  \sigma_{{\mathrm{1}}}  \GrTTsym{+}  \sigma_{{\mathrm{2}}}  \GrTTsym{)}  \mid  \sigma_{{\mathrm{5}}}  ,  r'  \mid   \textbf{0}  )   \odot  \Gamma  \GrTTsym{,}  z  \GrTTsym{:}   \textstyle (  x  :_{  \textcolor{darkblue}{ r }  }  A  )  \otimes   B   \vdash  C  \GrTTsym{:}   \mathsf{Type}_{ l } }
\GrTTpremise{ ( \Delta  ,  \sigma_{{\mathrm{1}}}  ,  \GrTTsym{(}  \sigma_{{\mathrm{2}}}  ,  r  \GrTTsym{)}  \mid  \sigma_{{\mathrm{4}}}  ,  s  ,  s  \mid  \sigma_{{\mathrm{5}}}  ,  r'  ,  r' )   \odot  \Gamma  \GrTTsym{,}  x  \GrTTsym{:}  A  \GrTTsym{,}  y  \GrTTsym{:}  B  \vdash  t_{{\mathrm{2}}}  \GrTTsym{:}  \GrTTsym{[}  \GrTTsym{(}  x  \GrTTsym{,}  y  \GrTTsym{)}  \GrTTsym{/}  z  \GrTTsym{]}  C}
}{ ( \Delta  \mid  \sigma_{{\mathrm{4}}}  \GrTTsym{+}   s  \ast  \sigma_{{\mathrm{3}}}   \mid  \sigma_{{\mathrm{5}}}  \GrTTsym{+}   r'  \ast  \sigma_{{\mathrm{3}}}  )   \odot  \Gamma  \vdash   \mathsf{let}\,  \textstyle (  x  ,  y  )   =  t_{{\mathrm{1}}}  \,\mathsf{in}\, t_{{\mathrm{2}}}   \GrTTsym{:}  \GrTTsym{[}  t_{{\mathrm{1}}}  \GrTTsym{/}  z  \GrTTsym{]}  C}{%
{\GrTTdruleTXXTenCutName{}}{}%
}}
\renewcommand{\GrTTdruleTXXBox}[1]{\GrTTdrule[#1]{%
    \GrTTpremise{ ( \Delta  \mid  \sigma  \mid   \textbf{0}  )   \odot  \Gamma  \vdash  A  \GrTTsym{:}   \mathsf{Type}_{ l } }%
  }{ ( \Delta  \mid  \sigma  \mid   \textbf{0}  )   \odot  \Gamma  \vdash   \square_{  \textcolor{darkblue}{ s }  }  A   \GrTTsym{:}   \mathsf{Type}_{ l } }{%
    {\GrTTdruleTXXBoxName{}}{}%
}}
\renewcommand{\GrTTdruleTXXBoxI}[1]{\GrTTdrule[#1]{%
\GrTTpremise{ ( \Delta  \mid  \sigma_{{\mathrm{1}}}  \mid  \sigma_{{\mathrm{2}}} )   \odot  \Gamma  \vdash  t  \GrTTsym{:}  A}%
}{ ( \Delta  \mid   s  \ast  \sigma_{{\mathrm{1}}}   \mid  \sigma_{{\mathrm{2}}} )   \odot  \Gamma  \vdash   \square  t   \GrTTsym{:}   \square_{  \textcolor{darkblue}{ s }  }  A }{
{\GrTTdruleTXXBoxIName{}}{}%
}}
\renewcommand{\GrTTdruleTXXBoxE}[1]{\GrTTdrule[#1]{%
    \GrTTpremise{
       ( \Delta  ,  \sigma_{{\mathrm{2}}}  \mid  \sigma_{{\mathrm{4}}}  ,  r  \mid   \textbf{0}  )   \odot  \Gamma  \GrTTsym{,}  z  \GrTTsym{:}   \square_{  \textcolor{darkblue}{ s }  }  A   \vdash  B  \GrTTsym{:}   \mathsf{Type}_{ l } %
      \\
       ( \Delta  \mid  \sigma_{{\mathrm{1}}}  \mid  \sigma_{{\mathrm{2}}} )   \odot  \Gamma  \vdash  t_{{\mathrm{1}}}  \GrTTsym{:}   \square_{  \textcolor{darkblue}{ s }  }  A %
      \quad
       ( \Delta  ,  \sigma_{{\mathrm{2}}}  \mid  \sigma_{{\mathrm{3}}}  ,  s  \mid  \sigma_{{\mathrm{4}}}  ,  \GrTTsym{(}   s  \ast  r   \GrTTsym{)} )   \odot  \Gamma  \GrTTsym{,}  x  \GrTTsym{:}  A  \vdash  t_{{\mathrm{2}}}  \GrTTsym{:}  \GrTTsym{[}   \square  x   \GrTTsym{/}  z  \GrTTsym{]}  B}%
}{ ( \Delta  \mid  \sigma_{{\mathrm{1}}}  \GrTTsym{+}  \sigma_{{\mathrm{3}}}  \mid  \sigma_{{\mathrm{4}}}  \GrTTsym{+}   r  \ast  \sigma_{{\mathrm{1}}}  )   \odot  \Gamma  \vdash   \mathsf{let}\,  \square  x   =  t_{{\mathrm{1}}}  \,\mathsf{in}\, t_{{\mathrm{2}}}   \GrTTsym{:}  \GrTTsym{[}  t_{{\mathrm{1}}}  \GrTTsym{/}  z  \GrTTsym{]}  B}{
{\GrTTdruleTXXBoxEName{}}{}%
}}
\renewcommand{\GrTTdruleTXXTyConv}[1]{\GrTTdrule[#1]{%
    \GrTTpremise{
       ( \Delta  \mid  \sigma_{{\mathrm{1}}}  \mid  \sigma_{{\mathrm{2}}} )   \odot  \Gamma  \vdash  t  \GrTTsym{:}  A%
      \quad
       ( \Delta  \mid  \sigma_{{\mathrm{2}}} )   \odot  \Gamma  \vdash  A  \leq  B}%
}{ ( \Delta  \mid  \sigma_{{\mathrm{1}}}  \mid  \sigma_{{\mathrm{2}}} )   \odot  \Gamma  \vdash  t  \GrTTsym{:}  B}{
{\GrTTdruleTXXTyConvName{}}{}%
}}
\renewcommand{\GrTTdruleChkAlgXXFun}[1]{\GrTTdrule[#1]{%
    \GrTTpremise{
       \Delta  ;  \Gamma  \vdash  A  \Rightarrow   \mathsf{Type}_{ l }   ;  \sigma_{{\mathrm{1}}}  ;   \textbf{0}  %
      \qquad
       \Delta  ,  \sigma_{{\mathrm{1}}}  ;  \Gamma  \GrTTsym{,}  x  \GrTTsym{:}  A  \vdash  t  \Leftarrow  B  ;  \sigma_{{\mathrm{2}}}  ,  s  ;  \sigma_{{\mathrm{3}}}  ,  r }%
}{ \Delta  ;  \Gamma  \vdash   \lambda  x . t   \Leftarrow   \textstyle (  x  :_{  \textcolor{darkblue}{( s ,  r )}  }  A  )  \to   B   ;  \sigma_{{\mathrm{2}}}  ;  \sigma_{{\mathrm{1}}}  \GrTTsym{+}  \sigma_{{\mathrm{3}}} }{
{\GrTTdruleChkAlgXXFunName{}}{}%
}}
\providecommand{\GrTTdruleInfAlgXXApp}{}
\renewcommand{\GrTTdruleInfAlgXXApp}[1]{\GrTTdrule[#1]{%
    \GrTTpremise{
       \Delta  ;  \Gamma  \vdash  t_{{\mathrm{1}}}  \Rightarrow   \textstyle (  x  :_{  \textcolor{darkblue}{( s ,  r )}  }  A  )  \to   B   ;  \sigma_{{\mathrm{2}}}  ;  \sigma_{{\mathrm{13}}} %
      \\
       \Delta  ;  \Gamma  \vdash  t_{{\mathrm{2}}}  \Leftarrow  A  ;  \sigma_{{\mathrm{4}}}  ;  \sigma_{{\mathrm{1}}} %
      \\
       \Delta  ,  \sigma_{{\mathrm{1}}}  ;  \Gamma  \GrTTsym{,}  x  \GrTTsym{:}  A  \vdash  B  \Rightarrow   \mathsf{Type}_{ l }   ;   \sigma_{{\mathrm{3}}}  ,  r   ;   \textbf{0}  %
      \qquad
      \sigma_{{\mathrm{13}}}  \GrTTsym{=}  \sigma_{{\mathrm{1}}}  \GrTTsym{+}  \sigma_{{\mathrm{3}}}}
}{ \Delta  ;  \Gamma  \vdash   t_{{\mathrm{1}}} \,{ t_{{\mathrm{2}}} }   \Rightarrow  \GrTTsym{[}  t_{{\mathrm{2}}}  \GrTTsym{/}  x  \GrTTsym{]}  B  ;  \sigma_{{\mathrm{2}}}  \GrTTsym{+}   s  \ast  \sigma_{{\mathrm{4}}}   ;  \sigma_{{\mathrm{3}}}  \GrTTsym{+}   r  \ast  \sigma_{{\mathrm{4}}}  }{
{\GrTTdruleInfAlgXXAppName{}}{}%
}}
\renewcommand{\GrTTdruleRhoXXEmpty}[1]{\GrTTdrule[#1]{%
}{  \emptyset   \odot   \emptyset   \models_{  \emptyset  }   \emptyset  }{
  {\!\GrTTdruleRhoXXEmptyName{}}{}%
}}
\renewcommand{\GrTTdruleRhoXXExtTm}[1]{\GrTTdrule[#1]{%
    \GrTTpremise{
      t \, \in \,  \interp{ A }_{ \varepsilon } %
      \\
       \Delta  \odot  \Gamma  \models_{ \varepsilon }  \rho %
      \\
       ( \Delta  \mid  \sigma  \mid   \textbf{0}  )   \odot  \Gamma  \vdash  A  \GrTTsym{:}   \mathsf{Type}_{ \GrTTsym{0} } }%
}{ \GrTTsym{(}  \Delta  ,  \sigma  \GrTTsym{)}  \odot  \Gamma  \GrTTsym{,}  x  \GrTTsym{:}  A  \models_{ \varepsilon }   \rho [  x  \mapsto  t  ]  }{
{\!\GrTTdruleRhoXXExtTmName{}}{}%
}}
\renewcommand{\GrTTdruleRhoXXExtTy}[1]{\GrTTdrule[#1]{%
    \GrTTpremise{
      t \, \in \, \GrTTsym{(}   \interp{ A }_{ \varepsilon }   \GrTTsym{)} \, \GrTTsym{(}   \varepsilon \, x   \GrTTsym{)}%
      \\
       \Delta  \odot  \Gamma  \models_{ \varepsilon }  \rho %
      \\
       ( \Delta  \mid  \sigma  \mid   \textbf{0}  )   \odot  \Gamma  \vdash  A  \GrTTsym{:}   \mathsf{Type}_{ \GrTTsym{1} } }%
}{ \GrTTsym{(}  \Delta  ,  \sigma  \GrTTsym{)}  \odot  \Gamma  \GrTTsym{,}  x  \GrTTsym{:}  A  \models_{ \varepsilon }   \rho [  x  \mapsto  t  ]  }{
{\GrTTdruleRhoXXExtTyName{}}{}%
}}
\renewcommand{\GrTTdruleEpXXEmpty}[1]{\GrTTdrule[#1]{%
}{  \emptyset   \odot   \emptyset   \models   \emptyset  }{
  {\GrTTdruleEpXXEmptyName{}}{}%
}}
\renewcommand{\GrTTdruleEpXXExtTm}[1]{\GrTTdrule[#1]{%
    \GrTTpremise{
       \Delta  \odot  \Gamma  \models  \varepsilon %
      \\
       ( \Delta  \mid  \sigma  \mid   \textbf{0}  )   \odot  \Gamma  \vdash  A  \GrTTsym{:}   \mathsf{Type}_{ \GrTTsym{0} } }%
}{ \GrTTsym{(}  \Delta  ,  \sigma  \GrTTsym{)}  \odot  \GrTTsym{(}  \Gamma  \GrTTsym{,}  x  \GrTTsym{:}  A  \GrTTsym{)}  \models  \varepsilon }{
{\GrTTdruleEpXXExtTmName{}}{}%
}}
\renewcommand{\GrTTdruleEpXXExtTy}[1]{\GrTTdrule[#1]{%
    \GrTTpremise{
      X \, \in \,  \mathcal{K}\interp{ A }  \qquad
       \Delta  \odot  \Gamma  \models  \varepsilon %
      \\
       ( \Delta  \mid  \sigma  \mid   \textbf{0}  )   \odot  \Gamma  \vdash  A  \GrTTsym{:}   \mathsf{Type}_{ \GrTTsym{1} } }%
}{ \GrTTsym{(}  \Delta  ,  \sigma  \GrTTsym{)}  \odot  \GrTTsym{(}  \Gamma  \GrTTsym{,}  x  \GrTTsym{:}  A  \GrTTsym{)}  \models   \varepsilon [  x  \mapsto  X  ]  }{
{\GrTTdruleEpXXExtTyName{}}{}%
}}
\renewcommand{\GrTTdruleTEQXXRefl}[1]{\GrTTdrule[#1]{%
    \GrTTpremise{
       ( \Delta  \mid  \sigma_{{\mathrm{1}}}  \mid  \sigma_{{\mathrm{2}}} )   \odot  \Gamma  \vdash  t  \GrTTsym{:}  A}
}{ ( \Delta  \mid  \sigma_{{\mathrm{1}}}  \mid  \sigma_{{\mathrm{2}}} )   \odot  \Gamma  \vdash  t  \GrTTsym{=}  t  \GrTTsym{:}  A}{%
{\GrTTdruleTEQXXReflName{}}{}%
}}
\renewcommand{\GrTTdruleTEQXXTrans}[1]{\GrTTdrule[#1]{%
    \GrTTpremise{
       ( \Delta  \mid  \sigma_{{\mathrm{1}}}  \mid  \sigma_{{\mathrm{2}}} )   \odot  \Gamma  \vdash  t_{{\mathrm{1}}}  \GrTTsym{=}  t_{{\mathrm{2}}}  \GrTTsym{:}  A
      \quad
       ( \Delta  \mid  \sigma_{{\mathrm{1}}}  \mid  \sigma_{{\mathrm{2}}} )   \odot  \Gamma  \vdash  t_{{\mathrm{2}}}  \GrTTsym{=}  t_{{\mathrm{3}}}  \GrTTsym{:}  A}
}{ ( \Delta  \mid  \sigma_{{\mathrm{1}}}  \mid  \sigma_{{\mathrm{2}}} )   \odot  \Gamma  \vdash  t_{{\mathrm{1}}}  \GrTTsym{=}  t_{{\mathrm{3}}}  \GrTTsym{:}  A}{%
{\GrTTdruleTEQXXTransName{}}{}%
}}
\renewcommand{\GrTTdruleTEQXXSym}[1]{\GrTTdrule[#1]{%
    \GrTTpremise{
       ( \Delta  \mid  \sigma_{{\mathrm{1}}}  \mid  \sigma_{{\mathrm{2}}} )   \odot  \Gamma  \vdash  t_{{\mathrm{1}}}  \GrTTsym{=}  t_{{\mathrm{2}}}  \GrTTsym{:}  A}
}{ ( \Delta  \mid  \sigma_{{\mathrm{1}}}  \mid  \sigma_{{\mathrm{2}}} )   \odot  \Gamma  \vdash  t_{{\mathrm{2}}}  \GrTTsym{=}  t_{{\mathrm{1}}}  \GrTTsym{:}  A}{%
{\GrTTdruleTEQXXSymName{}}{}%
}}
\renewcommand{\GrTTdruleTEQXXConvTy}[1]{\GrTTdrule[#1]{%
    \GrTTpremise{
       ( \Delta  \mid  \sigma_{{\mathrm{1}}}  \mid  \sigma_{{\mathrm{2}}} )   \odot  \Gamma  \vdash  t_{{\mathrm{1}}}  \GrTTsym{=}  t_{{\mathrm{2}}}  \GrTTsym{:}  A
      \quad
       ( \Delta  \mid  \sigma_{{\mathrm{2}}} )   \odot  \Gamma  \vdash  A  \leq  B}
}{ ( \Delta  \mid  \sigma_{{\mathrm{1}}}  \mid  \sigma_{{\mathrm{2}}} )   \odot  \Gamma  \vdash  t_{{\mathrm{1}}}  \GrTTsym{=}  t_{{\mathrm{2}}}  \GrTTsym{:}  B}{%
{\GrTTdruleTEQXXConvTyName{}}{}%
}}
\renewcommand{\GrTTdruleTEQXXArrow}[1]{\GrTTdrule[#1]{%
    \GrTTpremise{
       ( \Delta  \mid  \sigma_{{\mathrm{1}}}  \mid   \textbf{0}  )   \odot  \Gamma  \vdash  A  \GrTTsym{=}  C  \GrTTsym{:}   \mathsf{Type}_{ l_{{\mathrm{1}}} } 
      \quad
       ( \Delta  ,  \sigma_{{\mathrm{1}}}  \mid  \sigma_{{\mathrm{2}}}  ,  r  \mid   \textbf{0}  )   \odot  \Gamma  \GrTTsym{,}  x  \GrTTsym{:}  A  \vdash  B  \GrTTsym{=}  D  \GrTTsym{:}   \mathsf{Type}_{ l_{{\mathrm{2}}} } }
}{ ( \Delta  \mid  \sigma_{{\mathrm{1}}}  \GrTTsym{+}  \sigma_{{\mathrm{2}}}  \mid   \textbf{0}  )   \odot  \Gamma  \vdash   \textstyle (  x  :_{  \textcolor{darkblue}{( s ,  r )}  }  A  )  \to   B   \GrTTsym{=}   \textstyle (  x  :_{  \textcolor{darkblue}{( s ,  r )}  }  C  )  \to   D   \GrTTsym{:}   \mathsf{Type}_{  l_{{\mathrm{1}}}  \mathop{\sqcup}  l_{{\mathrm{2}}}  } }{%
{\GrTTdruleTEQXXArrowName{}}{}%
}}
\renewcommand{\GrTTdruleTEQXXArrowComp}[1]{\GrTTdrule[#1]{%
    \GrTTpremise{
       ( \Delta  ,  \sigma_{{\mathrm{1}}}  \mid  \sigma_{{\mathrm{2}}}  ,  s  \mid  \sigma_{{\mathrm{3}}}  ,  r )   \odot  \Gamma  \GrTTsym{,}  x  \GrTTsym{:}  A  \vdash  t_{{\mathrm{1}}}  \GrTTsym{:}  B%
      \quad
       ( \Delta  \mid  \sigma_{{\mathrm{4}}}  \mid  \sigma_{{\mathrm{1}}} )   \odot  \Gamma  \vdash  t_{{\mathrm{2}}}  \GrTTsym{:}  A}
}{ ( \Delta  \mid  \sigma_{{\mathrm{2}}}  \GrTTsym{+}   s  \ast  \sigma_{{\mathrm{4}}}   \mid  \sigma_{{\mathrm{3}}}  \GrTTsym{+}   r  \ast  \sigma_{{\mathrm{4}}}  )   \odot  \Gamma  \vdash   \GrTTsym{(}   \lambda  x . t_{{\mathrm{1}}}   \GrTTsym{)} \,{ t_{{\mathrm{2}}} }   \GrTTsym{=}  \GrTTsym{[}  t_{{\mathrm{2}}}  \GrTTsym{/}  x  \GrTTsym{]}  t_{{\mathrm{1}}}  \GrTTsym{:}  \GrTTsym{[}  t_{{\mathrm{2}}}  \GrTTsym{/}  x  \GrTTsym{]}  B}{%
{\GrTTdruleTEQXXArrowCompName{}}{}%
}}
\renewcommand{\GrTTdruleTEQXXArrowUniq}[1]{\GrTTdrule[#1]{%
    \GrTTpremise{
       ( \Delta  \mid  \sigma_{{\mathrm{1}}}  \mid  \sigma_{{\mathrm{2}}} )   \odot  \Gamma  \vdash  t  \GrTTsym{:}   \textstyle (  x  :_{  \textcolor{darkblue}{( s ,  r )}  }  A  )  \to   B }
}{ ( \Delta  \mid  \sigma_{{\mathrm{1}}}  \mid  \sigma_{{\mathrm{2}}} )   \odot  \Gamma  \vdash  t  \GrTTsym{=}   \lambda  x . \GrTTsym{(}   t \,{ x }   \GrTTsym{)}   \GrTTsym{:}   \textstyle (  x  :_{  \textcolor{darkblue}{( s ,  r )}  }  A  )  \to   B }{%
{\GrTTdruleTEQXXArrowUniqName{}}{}%
}}
\renewcommand{\GrTTdruleTEQXXFun}[1]{\GrTTdrule[#1]{%
    \GrTTpremise{
       ( \Delta  ,  \sigma_{{\mathrm{1}}}  \mid  \sigma_{{\mathrm{2}}}  ,  s  \mid  \sigma_{{\mathrm{3}}}  ,  r )   \odot  \Gamma  \GrTTsym{,}  x  \GrTTsym{:}  A  \vdash  t_{{\mathrm{1}}}  \GrTTsym{=}  t_{{\mathrm{2}}}  \GrTTsym{:}  B}%
}{ ( \Delta  \mid  \sigma_{{\mathrm{2}}}  \mid  \sigma_{{\mathrm{1}}}  \GrTTsym{+}  \sigma_{{\mathrm{3}}} )   \odot  \Gamma  \vdash   \lambda  x . t_{{\mathrm{1}}}   \GrTTsym{=}   \lambda  x . t_{{\mathrm{2}}}   \GrTTsym{:}   \textstyle (  x  :_{  \textcolor{darkblue}{( s ,  r )}  }  A  )  \to   B }{%
{\GrTTdruleTEQXXFunName{}}{}%
}}
\renewcommand{\GrTTdruleTEQXXApp}[1]{\GrTTdrule[#1]{%
    \GrTTpremise{
       ( \Delta  ,  \sigma_{{\mathrm{1}}}  \mid  \sigma_{{\mathrm{3}}}  ,  r  \mid   \textbf{0}  )   \odot  \Gamma  \GrTTsym{,}  x  \GrTTsym{:}  A  \vdash  B  \GrTTsym{:}   \mathsf{Type}_{ l } 
      \\
       ( \Delta  \mid  \sigma_{{\mathrm{2}}}  \mid  \sigma_{{\mathrm{1}}}  \GrTTsym{+}  \sigma_{{\mathrm{3}}} )   \odot  \Gamma  \vdash  t_{{\mathrm{1}}}  \GrTTsym{=}  t_{{\mathrm{2}}}  \GrTTsym{:}   \textstyle (  x  :_{  \textcolor{darkblue}{( s ,  r )}  }  A  )  \to   B 
      \quad
       ( \Delta  \mid  \sigma_{{\mathrm{4}}}  \mid  \sigma_{{\mathrm{1}}} )   \odot  \Gamma  \vdash  t_{{\mathrm{3}}}  \GrTTsym{=}  t_{{\mathrm{4}}}  \GrTTsym{:}  A}
}{ ( \Delta  \mid  \sigma_{{\mathrm{2}}}  \GrTTsym{+}   s  \ast  \sigma_{{\mathrm{4}}}   \mid  \sigma_{{\mathrm{3}}}  \GrTTsym{+}   r  \ast  \sigma_{{\mathrm{4}}}  )   \odot  \Gamma  \vdash   t_{{\mathrm{1}}} \,{ t_{{\mathrm{3}}} }   \GrTTsym{=}   t_{{\mathrm{2}}} \,{ t_{{\mathrm{4}}} }   \GrTTsym{:}  \GrTTsym{[}  t_{{\mathrm{3}}}  \GrTTsym{/}  x  \GrTTsym{]}  B}{%
{\GrTTdruleTEQXXAppName{}}{}%
}}
\renewcommand{\GrTTdruleTEQXXTen}[1]{\GrTTdrule[#1]{%
    \GrTTpremise{
       ( \Delta  \mid  \sigma_{{\mathrm{1}}}  \mid   \textbf{0}  )   \odot  \Gamma  \vdash  A  \GrTTsym{=}  C  \GrTTsym{:}   \mathsf{Type}_{ l } 
      \quad
       ( \Delta  ,  \sigma_{{\mathrm{1}}}  \mid  \sigma_{{\mathrm{2}}}  ,  r  \mid   \textbf{0}  )   \odot  \Gamma  \GrTTsym{,}  x  \GrTTsym{:}  A  \vdash  B  \GrTTsym{=}  D  \GrTTsym{:}   \mathsf{Type}_{ l } }
}{ ( \Delta  \mid  \sigma_{{\mathrm{1}}}  \GrTTsym{+}  \sigma_{{\mathrm{2}}}  \mid   \textbf{0}  )   \odot  \Gamma  \vdash   \textstyle (  x  :_{  \textcolor{darkblue}{ r }  }  A  )  \otimes   B   \GrTTsym{=}   \textstyle (  x  :_{  \textcolor{darkblue}{ s }  }  C  )  \otimes   D   \GrTTsym{:}   \mathsf{Type}_{ l } }{%
{\GrTTdruleTEQXXTenName{}}{}%
}}
\renewcommand{\GrTTdruleTEQXXTenComp}[1]{\GrTTdrule[#1]{%
    \GrTTpremise{
       ( \Delta  ,  \GrTTsym{(}  \sigma_{{\mathrm{1}}}  \GrTTsym{+}  \sigma_{{\mathrm{2}}}  \GrTTsym{)}  \mid  \sigma_{{\mathrm{5}}}  ,  r'  \mid   \textbf{0}  )   \odot  \Gamma  \GrTTsym{,}  z  \GrTTsym{:}   \textstyle (  x  :_{  \textcolor{darkblue}{ r }  }  A  )  \otimes   B   \vdash  C  \GrTTsym{:}   \mathsf{Type}_{ l } %
      \\
       ( \Delta  \mid  \sigma_{{\mathrm{3}}}  \mid  \sigma_{{\mathrm{1}}} )   \odot  \Gamma  \vdash  t_{{\mathrm{1}}}  \GrTTsym{:}  A%
      \quad
       ( \Delta  \mid  \sigma_{{\mathrm{6}}}  \mid  \sigma_{{\mathrm{2}}}  \GrTTsym{+}   r  \ast  \sigma_{{\mathrm{3}}}  )   \odot  \Gamma  \vdash  t_{{\mathrm{2}}}  \GrTTsym{:}  \GrTTsym{[}  t_{{\mathrm{1}}}  \GrTTsym{/}  x  \GrTTsym{]}  B%
      \\
       ( \Delta  ,  \sigma_{{\mathrm{1}}}  ,  \GrTTsym{(}  \sigma_{{\mathrm{2}}}  ,  r  \GrTTsym{)}  \mid  \sigma_{{\mathrm{4}}}  ,  s  ,  s  \mid  \sigma_{{\mathrm{5}}}  ,  r'  ,  r' )   \odot  \Gamma  \GrTTsym{,}  x  \GrTTsym{:}  A  \GrTTsym{,}  y  \GrTTsym{:}  B  \vdash  t_{{\mathrm{3}}}  \GrTTsym{:}  \GrTTsym{[}  \GrTTsym{(}  x  \GrTTsym{,}  y  \GrTTsym{)}  \GrTTsym{/}  z  \GrTTsym{]}  C}
}{ ( \Delta  \mid  \sigma_{{\mathrm{4}}}  \GrTTsym{+}   s  \ast  \GrTTsym{(}  \sigma_{{\mathrm{3}}}  \GrTTsym{+}  \sigma_{{\mathrm{6}}}  \GrTTsym{)}   \mid  \sigma_{{\mathrm{5}}}  \GrTTsym{+}   r'  \ast  \GrTTsym{(}  \sigma_{{\mathrm{3}}}  \GrTTsym{+}  \sigma_{{\mathrm{6}}}  \GrTTsym{)}  )   \odot  \Gamma  \vdash   \mathsf{let}\,  \textstyle (  x  ,  y  )   =  \GrTTsym{(}  t_{{\mathrm{1}}}  \GrTTsym{,}  t_{{\mathrm{2}}}  \GrTTsym{)}  \,\mathsf{in}\, t_{{\mathrm{3}}}   \GrTTsym{=}  \GrTTsym{[}  t_{{\mathrm{1}}}  \GrTTsym{,}  t_{{\mathrm{2}}}  \GrTTsym{/}  x  \GrTTsym{,}  y  \GrTTsym{]}  t_{{\mathrm{3}}}  \GrTTsym{:}  \GrTTsym{[}  \GrTTsym{(}  t_{{\mathrm{1}}}  \GrTTsym{,}  t_{{\mathrm{2}}}  \GrTTsym{)}  \GrTTsym{/}  z  \GrTTsym{]}  C}{%
{\GrTTdruleTEQXXTenCompName{}}{}%
}}
\renewcommand{\GrTTdruleTEQXXPair}[1]{\GrTTdrule[#1]{%
    \GrTTpremise{
       ( \Delta  ,  \sigma_{{\mathrm{1}}}  \mid  \sigma_{{\mathrm{3}}}  ,  r  \mid   \textbf{0}  )   \odot  \Gamma  \GrTTsym{,}  x  \GrTTsym{:}  A  \vdash  B  \GrTTsym{:}   \mathsf{Type}_{ l } 
      \\
       ( \Delta  \mid  \sigma_{{\mathrm{2}}}  \mid  \sigma_{{\mathrm{1}}} )   \odot  \Gamma  \vdash  t_{{\mathrm{1}}}  \GrTTsym{=}  t_{{\mathrm{2}}}  \GrTTsym{:}  A
      \quad
       ( \Delta  \mid  \sigma_{{\mathrm{4}}}  \mid  \sigma_{{\mathrm{3}}}  \GrTTsym{+}   r  \ast  \sigma_{{\mathrm{2}}}  )   \odot  \Gamma  \vdash  t_{{\mathrm{3}}}  \GrTTsym{=}  t_{{\mathrm{4}}}  \GrTTsym{:}  \GrTTsym{[}  t_{{\mathrm{1}}}  \GrTTsym{/}  x  \GrTTsym{]}  B}
}{ ( \Delta  \mid  \sigma_{{\mathrm{2}}}  \GrTTsym{+}  \sigma_{{\mathrm{4}}}  \mid  \sigma_{{\mathrm{1}}}  \GrTTsym{+}  \sigma_{{\mathrm{3}}} )   \odot  \Gamma  \vdash  \GrTTsym{(}  t_{{\mathrm{1}}}  \GrTTsym{,}  t_{{\mathrm{2}}}  \GrTTsym{)}  \GrTTsym{=}  \GrTTsym{(}  t_{{\mathrm{3}}}  \GrTTsym{,}  t_{{\mathrm{4}}}  \GrTTsym{)}  \GrTTsym{:}   \textstyle (  x  :_{  \textcolor{darkblue}{ r }  }  A  )  \otimes   B }{%
{\GrTTdruleTEQXXPairName{}}{}%
}}
\renewcommand{\GrTTdruleTEQXXTenCut}[1]{\GrTTdrule[#1]{%
\GrTTpremise{ ( \Delta  \mid  \sigma_{{\mathrm{3}}}  \mid  \sigma_{{\mathrm{1}}}  \GrTTsym{+}  \sigma_{{\mathrm{2}}} )   \odot  \Gamma  \vdash  t_{{\mathrm{1}}}  \GrTTsym{=}  t_{{\mathrm{2}}}  \GrTTsym{:}   \textstyle (  x  :_{  \textcolor{darkblue}{ r }  }  A  )  \otimes   B }
\GrTTpremise{ ( \Delta  ,  \GrTTsym{(}  \sigma_{{\mathrm{1}}}  \GrTTsym{+}  \sigma_{{\mathrm{2}}}  \GrTTsym{)}  \mid  \sigma_{{\mathrm{5}}}  ,  r'  \mid   \textbf{0}  )   \odot  \Gamma  \GrTTsym{,}  z  \GrTTsym{:}   \textstyle (  x  :_{  \textcolor{darkblue}{ r }  }  A  )  \otimes   B   \vdash  C  \GrTTsym{:}   \mathsf{Type}_{ l } }
\GrTTpremise{ ( \Delta  ,  \sigma_{{\mathrm{1}}}  ,  \GrTTsym{(}  \sigma_{{\mathrm{2}}}  ,  r  \GrTTsym{)}  \mid  \sigma_{{\mathrm{4}}}  ,  s  ,  s  \mid  \sigma_{{\mathrm{5}}}  ,  r'  ,  r' )   \odot  \Gamma  \GrTTsym{,}  x  \GrTTsym{:}  A  \GrTTsym{,}  y  \GrTTsym{:}  B  \vdash  t_{{\mathrm{3}}}  \GrTTsym{=}  t_{{\mathrm{4}}}  \GrTTsym{:}  \GrTTsym{[}  \GrTTsym{(}  x  \GrTTsym{,}  y  \GrTTsym{)}  \GrTTsym{/}  z  \GrTTsym{]}  C}
}{ ( \Delta  \mid  \sigma_{{\mathrm{4}}}  \GrTTsym{+}   s  \ast  \sigma_{{\mathrm{3}}}   \mid  \sigma_{{\mathrm{5}}}  \GrTTsym{+}   r'  \ast  \sigma_{{\mathrm{3}}}  )   \odot  \Gamma  \vdash  \GrTTsym{(}   \mathsf{let}\,  \textstyle (  x  ,  y  )   =  t_{{\mathrm{1}}}  \,\mathsf{in}\, t_{{\mathrm{3}}}   \GrTTsym{)}  \GrTTsym{=}  \GrTTsym{(}   \mathsf{let}\,  \textstyle (  x  ,  y  )   =  t_{{\mathrm{2}}}  \,\mathsf{in}\, t_{{\mathrm{4}}}   \GrTTsym{)}  \GrTTsym{:}  \GrTTsym{[}  t_{{\mathrm{1}}}  \GrTTsym{/}  z  \GrTTsym{]}  C}{%
{\GrTTdruleTEQXXTenCutName{}}{}%
}}
\renewcommand{\GrTTdruleTEQXXTenU}[1]{\GrTTdrule[#1]{%
\GrTTpremise{ ( \Delta  \mid  \sigma_{{\mathrm{1}}}  \mid  \sigma_{{\mathrm{2}}} )   \odot  \Gamma  \vdash  t  \GrTTsym{:}   \textstyle (  x  :_{  \textcolor{darkblue}{ r }  }  A  )  \otimes   B }%
}{ ( \Delta  \mid  \sigma_{{\mathrm{1}}}  \mid  \sigma_{{\mathrm{2}}} )   \odot  \Gamma  \vdash  t  \GrTTsym{=}  \GrTTsym{(}   \mathsf{let}\,  \textstyle (  x  ,  y  )   =  t  \,\mathsf{in}\, \GrTTsym{(}  x  \GrTTsym{,}  y  \GrTTsym{)}   \GrTTsym{)}  \GrTTsym{:}   \textstyle (  x  :_{  \textcolor{darkblue}{ r }  }  A  )  \otimes   B }{
{\GrTTdruleTEQXXTenUName{}}{}%
}}
\renewcommand{\GrTTdruleTEQXXBox}[1]{\GrTTdrule[#1]{%
    \GrTTpremise{ ( \Delta  \mid  \sigma  \mid   \textbf{0}  )   \odot  \Gamma  \vdash  A  \GrTTsym{=}  B  \GrTTsym{:}   \mathsf{Type}_{ l } }%
  }{ ( \Delta  \mid  \sigma  \mid   \textbf{0}  )   \odot  \Gamma  \vdash   \square_{  \textcolor{darkblue}{ s }  }  A   \GrTTsym{=}   \square_{  \textcolor{darkblue}{ s }  }  B   \GrTTsym{:}   \mathsf{Type}_{ l } }{%
    {\GrTTdruleTEQXXBoxName{}}{}%
}}
\renewcommand{\GrTTdruleTEQXXBoxI}[1]{\GrTTdrule[#1]{%
\GrTTpremise{ ( \Delta  \mid  \sigma_{{\mathrm{1}}}  \mid  \sigma_{{\mathrm{2}}} )   \odot  \Gamma  \vdash  t_{{\mathrm{1}}}  \GrTTsym{=}  t_{{\mathrm{2}}}  \GrTTsym{:}  A}%
}{ ( \Delta  \mid   s  \ast  \sigma_{{\mathrm{1}}}   \mid  \sigma_{{\mathrm{2}}} )   \odot  \Gamma  \vdash   \square  t_{{\mathrm{1}}}   \GrTTsym{=}   \square  t_{{\mathrm{2}}}   \GrTTsym{:}   \square_{  \textcolor{darkblue}{ s }  }  A }{
{\GrTTdruleTEQXXBoxIName{}}{}%
}}
\renewcommand{\GrTTdruleTEQXXBoxB}[1]{\GrTTdrule[#1]{%
    \GrTTpremise{
       ( \Delta  ,  \sigma_{{\mathrm{2}}}  \mid  \sigma_{{\mathrm{4}}}  ,  r  \mid   \textbf{0}  )   \odot  \Gamma  \GrTTsym{,}  z  \GrTTsym{:}   \square_{  \textcolor{darkblue}{ s }  }  A   \vdash  B  \GrTTsym{:}   \mathsf{Type}_{ l } %
      \\
       ( \Delta  \mid  \sigma_{{\mathrm{1}}}  \mid  \sigma_{{\mathrm{2}}} )   \odot  \Gamma  \vdash  t_{{\mathrm{1}}}  \GrTTsym{:}  A \quad\;\;
       ( \Delta  ,  \sigma_{{\mathrm{2}}}  \mid  \sigma_{{\mathrm{3}}}  ,  s  \mid  \sigma_{{\mathrm{4}}}  ,  \GrTTsym{(}   s  \ast  r   \GrTTsym{)} )   \odot  \Gamma  \GrTTsym{,}  x  \GrTTsym{:}  A  \vdash  t_{{\mathrm{2}}}  \GrTTsym{:}  \GrTTsym{[}   \square  x   \GrTTsym{/}  z  \GrTTsym{]}  B}
}{ ( \Delta  \mid  \sigma_{{\mathrm{3}}}  \GrTTsym{+}   s  \ast  \sigma_{{\mathrm{1}}}   \mid  \sigma_{{\mathrm{4}}}  \GrTTsym{+}     s  \ast  r    \ast  \sigma_{{\mathrm{1}}}  )   \odot  \Gamma  \vdash  \GrTTsym{(}   \mathsf{let}\,  \square  x   =   \square  t_{{\mathrm{1}}}   \,\mathsf{in}\, t_{{\mathrm{2}}}   \GrTTsym{)}  \GrTTsym{=}  \GrTTsym{[}  t_{{\mathrm{1}}}  \GrTTsym{/}  x  \GrTTsym{]}  t_{{\mathrm{2}}}  \GrTTsym{:}  \GrTTsym{[}   \square  t_{{\mathrm{1}}}   \GrTTsym{/}  z  \GrTTsym{]}  B}{%
{\GrTTdruleTEQXXBoxBName{}}{}%
}}
\renewcommand{\GrTTdruleTEQXXBoxE}[1]{\GrTTdrule[#1]{%
    \GrTTpremise{
       ( \Delta  ,  \sigma_{{\mathrm{2}}}  \mid  \sigma_{{\mathrm{4}}}  ,  r  \mid   \textbf{0}  )   \odot  \Gamma  \GrTTsym{,}  z  \GrTTsym{:}   \square_{  \textcolor{darkblue}{ s }  }  A   \vdash  B  \GrTTsym{:}   \mathsf{Type}_{ l } %
      \\
       ( \Delta  \mid  \sigma_{{\mathrm{1}}}  \mid  \sigma_{{\mathrm{2}}} )   \odot  \Gamma  \vdash  t_{{\mathrm{1}}}  \GrTTsym{=}  t_{{\mathrm{2}}}  \GrTTsym{:}   \square_{  \textcolor{darkblue}{ s }  }  A %
      \\
       ( \Delta  ,  \sigma_{{\mathrm{2}}}  \mid  \sigma_{{\mathrm{3}}}  ,  s  \mid  \sigma_{{\mathrm{4}}}  ,  \GrTTsym{(}   s  \ast  r   \GrTTsym{)} )   \odot  \Gamma  \GrTTsym{,}  x  \GrTTsym{:}  A  \vdash  t_{{\mathrm{3}}}  \GrTTsym{=}  t_{{\mathrm{4}}}  \GrTTsym{:}  \GrTTsym{[}   \square  x   \GrTTsym{/}  z  \GrTTsym{]}  B}%
}{ ( \Delta  \mid  \sigma_{{\mathrm{1}}}  \GrTTsym{+}  \sigma_{{\mathrm{3}}}  \mid  \sigma_{{\mathrm{4}}}  \GrTTsym{+}   r  \ast  \sigma_{{\mathrm{1}}}  )   \odot  \Gamma  \vdash  \GrTTsym{(}   \mathsf{let}\,  \square  x   =  t_{{\mathrm{1}}}  \,\mathsf{in}\, t_{{\mathrm{3}}}   \GrTTsym{)}  \GrTTsym{=}  \GrTTsym{(}   \mathsf{let}\,  \square  x   =  t_{{\mathrm{2}}}  \,\mathsf{in}\, t_{{\mathrm{4}}}   \GrTTsym{)}  \GrTTsym{:}  \GrTTsym{[}  t_{{\mathrm{1}}}  \GrTTsym{/}  z  \GrTTsym{]}  B}{
{\GrTTdruleTEQXXBoxEName{}}{}%
}}
\renewcommand{\GrTTdruleTEQXXBoxU}[1]{\GrTTdrule[#1]{%
    \GrTTpremise{
       ( \Delta  \mid  \sigma_{{\mathrm{1}}}  \mid  \sigma_{{\mathrm{2}}} )   \odot  \Gamma  \vdash  t  \GrTTsym{:}   \square_{  \textcolor{darkblue}{ s }  }  A }%
}{ ( \Delta  \mid  \sigma_{{\mathrm{1}}}  \mid  \sigma_{{\mathrm{2}}} )   \odot  \Gamma  \vdash  t  \GrTTsym{=}  \GrTTsym{(}   \mathsf{let}\,  \square  x   =  t  \,\mathsf{in}\,  \square  x    \GrTTsym{)}  \GrTTsym{:}   \square_{  \textcolor{darkblue}{ s }  }  A }{
{\GrTTdruleTEQXXBoxUName{}}{}%
}}
\renewcommand{\GrTTdruleSTXXEq}[1]{\GrTTdrule[#1]{%
    \GrTTpremise{
       ( \Delta  \mid  \sigma  \mid   \textbf{0}  )   \odot  \Gamma  \vdash  A  \GrTTsym{=}  B  \GrTTsym{:}   \mathsf{Type}_{ l } }%
}{ ( \Delta  \mid  \sigma )   \odot  \Gamma  \vdash  A  \leq  B}{
{\GrTTdruleSTXXEqName{}}{}%
}}
\renewcommand{\GrTTdruleSTXXTrans}[1]{\GrTTdrule[#1]{%
    \GrTTpremise{
       ( \Delta  \mid  \sigma )   \odot  \Gamma  \vdash  A  \leq  B%
      \quad
       ( \Delta  \mid  \sigma )   \odot  \Gamma  \vdash  B  \leq  C}
}{ ( \Delta  \mid  \sigma )   \odot  \Gamma  \vdash  A  \leq  C}{
{\GrTTdruleSTXXTransName{}}{}%
}}
\renewcommand{\GrTTdruleSTXXTy}[1]{\GrTTdrule[#1]{%
    \GrTTpremise{
      \Delta  \odot  \Gamma  \vdash%
      \quad
      l  \leq  l'}
}{ ( \Delta  \mid   \textbf{0}  )   \odot  \Gamma  \vdash   \mathsf{Type}_{ l }   \leq   \mathsf{Type}_{ l' } }{
{\GrTTdruleSTXXTyName{}}{}%
}}
\renewcommand{\GrTTdruleSTXXArrow}[1]{\GrTTdrule[#1]{%
    \GrTTpremise{
       ( \Delta  ,  \sigma_{{\mathrm{1}}}  \mid  \sigma_{{\mathrm{2}}}  ,  r  \mid   \textbf{0}  )   \odot  \Gamma  \GrTTsym{,}  x  \GrTTsym{:}  A  \vdash  B  \GrTTsym{:}   \mathsf{Type}_{ l } %
      \\
       ( \Delta  \mid  \sigma_{{\mathrm{1}}} )   \odot  \Gamma  \vdash  A'  \leq  A%
      \quad
       ( \Delta  ,  \sigma_{{\mathrm{1}}}  \mid  \sigma_{{\mathrm{2}}}  ,  r )   \odot  \Gamma  \GrTTsym{,}  x  \GrTTsym{:}  A'  \vdash  B  \leq  B'}
}{ ( \Delta  \mid  \sigma_{{\mathrm{1}}}  \GrTTsym{+}  \sigma_{{\mathrm{2}}} )   \odot  \Gamma  \vdash   \textstyle (  x  :_{  \textcolor{darkblue}{( s ,  r )}  }  A  )  \to   B   \leq   \textstyle (  x  :_{  \textcolor{darkblue}{( s ,  r )}  }  A'  )  \to   B' }{
{\GrTTdruleSTXXArrowName{}}{}%
}}
\renewcommand{\GrTTdruleSTXXTen}[1]{\GrTTdrule[#1]{%
    \GrTTpremise{
       ( \Delta  ,  \sigma_{{\mathrm{1}}}  \mid  \sigma_{{\mathrm{2}}}  ,  r )   \odot  \Gamma  \GrTTsym{,}  x  \GrTTsym{:}  A  \vdash  B  \leq  B'}
}{ ( \Delta  \mid  \sigma_{{\mathrm{1}}}  \GrTTsym{+}  \sigma_{{\mathrm{2}}} )   \odot  \Gamma  \vdash   \textstyle (  x  :_{  \textcolor{darkblue}{ r }  }  A  )  \otimes   B   \leq   \textstyle (  x  :_{  \textcolor{darkblue}{ r }  }  A  )  \otimes   B' }{
{\GrTTdruleSTXXTenName{}}{}%
}}
\renewcommand{\GrTTdruleSTXXBox}[1]{\GrTTdrule[#1]{%
    \GrTTpremise{
       ( \Delta  \mid  \sigma )   \odot  \Gamma  \vdash  A  \leq  A'}
}{ ( \Delta  \mid  \sigma )   \odot  \Gamma  \vdash   \square_{  \textcolor{darkblue}{ s }  }  A   \leq   \square_{  \textcolor{darkblue}{ s }  }  A' }{
{\GrTTdruleSTXXBoxName{}}{}%
}}
\newcommand{\interp}[1]{\llbracket{#1}\rrbracket}
\newenvironment{glemma}[2]{%
  \restatable[#2]{lemma}{#1}\label{lemma:#1}%
}{\endrestatable}
\definecolor{darkblue}{rgb}{0,0.1,0.65}
\newcommand{\grtt}[0]{\textsc{Grtt}}
\newcommand{\grttfull}[0]{Graded Modal Dependent Type Theory}
\newcommand{\qtt}[0]{\textsc{Qtt}}
\newcommand{\mlof}[0]{Martin-L\"{o}f}
\providecommand{\GrTTdruleTXXTypeName}{}
\providecommand{\GrTTdruleTXXArrowName}{}
\providecommand{\GrTTdruleTXXTenName}{}
\providecommand{\GrTTdruleTXXVarName}{}
\providecommand{\GrTTdruleTXXFunName}{}
\providecommand{\GrTTdruleTXXAppName}{}
\providecommand{\GrTTdruleTXXPairName}{}
\providecommand{\GrTTdruleTXXTenCutName}{}
\providecommand{\GrTTdruleTXXBoxName}{}
\providecommand{\GrTTdruleTXXBoxIName}{}
\providecommand{\GrTTdruleTXXBoxEName}{}
\providecommand{\GrTTdruleWfXXEmptyName}{}
\providecommand{\GrTTdruleWfXXExtName}{}
\providecommand{\GrTTdruleChkAlgXXFunName}{}
\renewcommand{\GrTTdruleTXXTypeName}{\mathsf{Type}}
\renewcommand{\GrTTdruleTXXArrowName}{\to}
\renewcommand{\GrTTdruleTXXTenName}{\otimes}
\renewcommand{\GrTTdruleTXXVarName}{\textsc{Var}}
\renewcommand{\GrTTdruleTXXFunName}{\lambda_i}
\renewcommand{\GrTTdruleTXXAppName}{\lambda_e}
\renewcommand{\GrTTdruleTXXPairName}{\otimes_i}
\renewcommand{\GrTTdruleTXXTenCutName}{\otimes_e}
\renewcommand{\GrTTdruleTXXBoxName}{\square}
\renewcommand{\GrTTdruleTXXBoxIName}{\square_i}
\renewcommand{\GrTTdruleTXXBoxEName}{\square_e}
\renewcommand{\GrTTdruleWfXXEmptyName}{\textsc{wf$\emptyset$}}
\renewcommand{\GrTTdruleWfXXExtName}{\textsc{wfExt}}
\renewcommand{\GrTTdruleChkAlgXXFunName}{\Leftarrow\lambda_i}
\definecolor{DMidnightBlue}{RGB}{10,83,154}
\definecolor{MidnightBlue}{RGB}{19,113,214}
\lstdefinelanguage{Gerty}{%
  mathescape=true,
  morecomment=[l]{--},
  moredelim=[s][\itshape]{`}{`},
  showspaces=false,
  aboveskip=1em,
  belowskip=1em,
  commentstyle=\itshape\color{black!60},
  basicstyle=\footnotesize\ttfamily,
  flexiblecolumns=true,
  columns=[l]flexible,
  keepspaces=true,
  xleftmargin=.75em,
  literate=%
  {[}{\textcolor{MidnightBlue}{[}}1
  {]}{\textcolor{MidnightBlue}{]}}1
  {[([}{\textcolor{MidnightBlue}{[[}}2
  {])]}{\textcolor{MidnightBlue}{]]}}2
  {Inf}{$\infty$}1
  {.0}{{.\textcolor{DMidnightBlue}{0}}}1
  {.1}{{.\textcolor{DMidnightBlue}{1}}}1
  {.2}{{.\textcolor{DMidnightBlue}{2}}}1
  {.3}{{.\textcolor{DMidnightBlue}{3}}}1
  {.4}{{.\textcolor{DMidnightBlue}{4}}}1
  {.5}{{.\textcolor{DMidnightBlue}{5}}}1
  {.6}{{.\textcolor{DMidnightBlue}{6}}}1
  {.7}{{.\textcolor{DMidnightBlue}{7}}}1
  {.8}{{.\textcolor{DMidnightBlue}{8}}}1,
  keywordstyle = \color{blue!60!black}\bfseries,
  keywords = {level, data, Type, let, in, case, as, of, if, then, else, where}
}
\newcommand{\gertyin}[1]{\text{\lstinline[language=Gerty]{#1}}}
   \def\@citecolor{blue}%
   \def\@urlcolor{blue}%
   \def\@linkcolor{blue}%
\def\orcidID#1{\smash{\href{http://orcid.org/#1}{\protect\raisebox{-1.25pt}{\protect\includegraphics{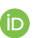}}}}}
\let\olddefinition\definition
\renewcommand{\definition}{\olddefinition\normalfont}
\newcommand{\implName}{\textbf{Gerty}}
\title{\grttfull{}}
\author{Benjamin Moon\inst{1}\textsuperscript{(\Letter)}\orcidID{0000-0003-2367-6321} \and
Harley Eades III\inst{2}\orcidID{0000-0001-8474-5971} \and
Dominic Orchard\inst{1}\orcidID{0000-0002-7058-7842}
}
\institute{University of Kent, Canterbury, UK\\\email{\{bgm4,d.a.orchard\}@kent.ac.uk} \and Augusta University, Augusta, USA\\\email{harley.eades@gmail.com}}
\authorrunning{Benjamin Moon, Harley Eades III, and Dominic Orchard}
\begin{document}

\maketitle

\begin{abstract}
  Graded type theories are an emerging paradigm for augmenting
  the reasoning power of types with parameterizable,
  fine-grained analyses of program properties. There have been
  many such theories in recent years which equip a type theory
  with quantitative dataflow tracking, usually via a
  semiring-like structure which provides analysis on variables
  (often called `quantitative' or `coeffect' theories). We
  present \grttfull{} (\grtt{} for short), which equips a
  dependent type theory with a general, parameterizable
  analysis of the flow of data, both in and between
  computational terms and types. In this theory, it is
  possible to study, restrict, and reason about data use in
  programs and types, enabling, for example, parametric
  quantifiers and linearity to be captured in a dependent
  setting. We propose \grtt{}, study its metatheory, and
  explore various case studies of its use in reasoning about
  programs and studying other type theories. We have
  implemented the theory and highlight the interesting
  details, including showing an application of grading to
  optimising the type checking procedure itself.
\end{abstract}

\section{Introduction}\label{sec:introduction}

The difference between simply-typed, polymorphically-typed, and
dependently-typed languages can be characterised by the
\emph{dataflow} permitted by each type theory. In each, dataflow can be
enacted by \emph{substituting} a term for occurrences of a
variable in another term, the scope of which is delineated by a
binder. In the simply-typed $\lambda$-calculus, data can only flow in
`computational' terms; computations and types are separate syntactic
categories, with variables, bindings ($\lambda$), and
substitution---and thus dataflow---only at the computational level. In
contrast, polymorphic calculi like System
F~\cite{girard1971extension,reynolds1974towards} permit dataflow
within types, via type quantification ($\forall$), and a limited form
of dataflow from computations to types, via type abstraction
($\Lambda$) and type application. Dependently-typed calculi
 (e.g.,~\cite{coquand1986calculus,martin-lofIntuitionisticTheoryTypes1975,martin-lofIntuitionisticTypeTheory1980,martin-lofConstructiveMathematicsComputer1982})
break
down the barrier between computations and types further: variables are
bound simultaneously in types and computations, such that data can
flow both to computations and types via dependent
functions ($\Pi$) and application.
This pervasive dataflow enables the Curry-Howard correspondence
to be leveraged for program reasoning and theorem
proving~\cite{wadler2015propositions}.
However, unrestricted dataflow between computations and types can
impede reasoning and can interact poorly with
other type theoretic ideas.

Firstly, System F allows \emph{parametric reasoning} and
notions of representation
independence~\cite{reynolds1983types,wadler1989theorems}, but this
is lost in general in dependently-typed
languages when quantifying over higher-kinded
types~\cite{nuytsParametricQuantifiersDependent2017} (rather than
just `small' types~\cite{atkey2014relationally,krishnaswami2013internalizing}).
Furthermore, unrestricted dataflow impedes
efficient compilation as compilers do not know, from the types alone,
where a term is actually needed.  Additional static analyses are
needed to recover dataflow information for
optimisation and reasoning. For example, a
term shown to be used only for type checking (not flowing to the computational
`run time' level) can be erased~\cite{brady2003inductive}.
Thus, dependent theories do not expose the
distinction between proof relevant and irrelevant terms,
requiring extensions to capture
irrelevance~\cite{DBLP:journals/corr/abs-1203-4716,pfenning2001intensionality,reed2003extending}.
Whilst unrestricted dataflow between computations and terms has
its benefits, the permissive nature of dependent types can hide useful
information. This permissiveness also interacts poorly with other type
theories which seek to deliberately restrict dataflow, notably
\emph{linear types}.

Linear types allow data to be treated as a `resource' which
must be consumed exactly
once:
linearly-typed values are restricted to linear
dataflow~\cite{Girard87,wadler1990linear,walker2005substructural}.
Reasoning about resourceful data has been
exploited by several languages, e.g.,
ATS~\cite{shi2013linear},
Alms~\cite{DBLP:conf/popl/TovP11}, Clean~\cite{de2007uniqueness},
Granule~\cite{orchard2019quantitative}, and Linear
Haskell~\cite{bernardy2017linear}. However, linear dataflow is
rare in a dependently-typed setting.
Consider typing the body of the
polymorphic identity function in \mlof{} type theory:
$$a : \mathsf{Type}, x : a \vdash x : a$$
This judgment uses $a$ twice (typing $x$ in
the context and the subject of the judgment) and $x$ once
in the term but not at all in the type.
 There have been various attempts to meaningfully reconcile linear and dependent
types~\cite{cervesato2002linear,dal2011linear,krishnaswami2015integrating,luo2016linear}
usually by keeping them separate, allowing types to
depend only on non-linear variables. All such theories cannot
distinguish variables used for computation from those used
purely for type formation, which could be erased at runtime.

Recent work by McBride~\cite{McBride2016}, refined by
Atkey~\cite{quantitative-type-theory}, generalises ideas
from `coeffect analyses' (variable usage analyses, like that of
Petricek et al.~\cite{Petricek:2014})
to a dependently-typed setting to reconcile the ubiquitous flow of data
in dependent types with the restricted dataflow of linearity.
This approach, called
Quantitative Type Theory (\qtt{}), types the above example as:
$$a \stackrel{0}{:} \mathsf{Type}, x \stackrel{1}{:} a \vdash x
\stackrel{1}{:} a$$
The annotation $0$ on $a$ explains that we
can use $a$ to form a type, but we cannot, or do not, use it
at the term level, thus it can be erased at runtime. The
cornerstone of \qtt{}'s approach is that dataflow of a term
to the type level counts as $0$ use, so arbitrary type-level
use is allowed whilst still permitting quantitative analysis of
computation-level dataflow. Whilst this gives a useful way to
relate linear and dependent types, it cannot however
reason about dataflow at the type-level (all type-level
usage counts as $0$). Thus, for example, \qtt{} cannot
express that a variable is used just computationally but not at
all in types.

In an extended abstract, Abel proposes a generalisation
of \qtt{} to track variable use in both types
and computations~\cite{abel2018}, suggesting
that tracking in types enables type checking optimisations
and increased expressivity. We develop a core dependent type theory along
the same lines, using the paradigm of \emph{grading}: graded
systems augment types with additional information, capturing the structure
of programs~\cite{gaboardi2016combining,orchard2019quantitative}.
We therefore name our approach \emph{\grttfull{}}
(\grtt{} for short). Our type theory is parameterised by a semiring which, like
other coeffect and quantitative
approaches~\cite{DBLP:journals/pacmpl/AbelB20,quantitative-type-theory,gaboardi2014,ghica2014,McBride2016,Petricek:2014,DBLP:journals/corr/abs-2005-02247},
describes dataflow through a program, but in \emph{both types and
computations equally}, remedying \qtt{}'s inability to track
type-level use. We extend Abel's initial idea by presenting a
rich language, including dependent tensors, a complete metatheory,
and a \emph{graded modality} which
aids the practical use of this approach (e.g., enabling functions
to use components of data non-uniformly).
The result is a calculus which extends the power of existing non-dependent graded languages, like
Granule~\cite{orchard2019quantitative}, to a dependent setting.

We begin with the definition of \grtt{} in Section~\ref{sec:main},
before demonstrating the power of \grtt{} through case studies in
Section~\ref{sec:case-studies}, where we show how to use grading to
restrict \grtt{} terms to simply-typed reasoning, parametric reasoning
(regaining universal quantification smoothly within a dependent
theory), existential types, and linear types. The calculus can be
instantiated to different kinds of dataflow reasoning: we show an
example application to information-flow security. We then show the
metatheory of \grtt{} in Section~\ref{sec:metatheory}: admissibility
of graded structural rules, substitution, type preservation, and
strong normalisation.

We implemented a prototype language based on
\grtt{} called \implName{}.\footnote{\url{https://github.com/granule-project/gerty/releases/tag/esop2021}}
We briefly mention its syntax in Section~\ref{grtt--implementation}
for use in examples. Later, Section~\ref{sec:implementation}
describes how the formal definition of \grtt{} is implemented as a
bidirectional type checking algorithm, interfacing with an
SMT solver to solve constraints over grades. Furthermore,
Abel conjectured that a quantitative dependent theory could
enable usage-based optimisation of type-checking itself~\cite{abel2018},
which would assist dependently-typed programming at scale. We validate this
claim in Section~\ref{sec:implementation} showing a grade-directed
optimisation to \implName{}'s type checker.

Section~\ref{sec:discussion} discusses next steps for
increasing the expressive power of \grtt{}.
Full proofs and details
are provided in Appendix~\ref{appendix--supplement}.

\implName{} has some similarity to
Granule~\cite{orchard2019quantitative}: both are functional
languages with graded types. However, Granule has a linearly typed
core and no dependent types (only indexed types), thus has
no need for resource tracking at the type level (type indices
are not subject to tracking and their syntax is restricted).

\section{GrTT: \grttfull{}}\label{sec:main}

\noindent
\grtt{} augments a standard
presentation of dependent type theory with `grades' (elements
of a semiring) which account for how variables are
used, i.e., their \emph{dataflow}. Whilst existing work uses grades to describe
usage only in computational terms
(e.g.~\cite{gaboardi2014}), \grtt{} incorporates additional grades
to account for how variables are used in
types.
We introduce here the syntax and typing, and
briefly show the syntax of the implementation.
Section~\ref{sec:metatheory} describes its metatheory.

\subsection{Syntax}\label{grtt--syntax}
\noindent
The syntax of \grtt{} is that of a standard \mlof{} type
theory, with the addition of a \emph{graded modality} and
grade annotations on function and tensor binders. Throughout, $s$ and $r$
range over grades, which are elements of a semiring
$(\mathcal{R}, \ast, 1, +, 0)$. It is instructive to
instantiate this semiring to the natural number semiring
$(\mathbb{N}, \times, 1, +, 0)$, which captures the exact
number of times variables are used. We appeal to
this example in descriptions here.

\grtt{} has a single syntactic sort for computations and types:
\begin{align*}
\begin{array}{lrllll}
 (\textit{terms}) \;\;\; &t, A, B, C &::= &x\ &|\  \mathsf{Type}_{ l } 
  \\&&&|\  \textstyle (  x  :_{  \textcolor{darkblue}{( s ,  r )}  }  A  )  \to   B \ &|\  \lambda  x . t \   &|\  t_{{\mathrm{1}}} \,{ t_{{\mathrm{2}}} } 
  \\&&&|\  \textstyle (  x  :_{  \textcolor{darkblue}{ r }  }  A  )  \otimes   B \       &|\ \GrTTsym{(}  t_{{\mathrm{1}}}  \GrTTsym{,}  t_{{\mathrm{2}}}  \GrTTsym{)}\ &|\  \mathsf{let}\,  \textstyle (  x  ,  y  )   =  t_{{\mathrm{1}}}  \,\mathsf{in}\, t_{{\mathrm{2}}} 
  \\&&&|\  \square_{  \textcolor{darkblue}{ s }  }  A \        &|\  \square  t \  &|\  \mathsf{let}\,  \square  x   =  t_{{\mathrm{1}}}  \,\mathsf{in}\, t_{{\mathrm{2}}}  \\
 (\textit{levels}) & l &::= & \multicolumn{3}{l}{\GrTTsym{0} \mid  \mathsf{suc}\  l  \mid  l_{{\mathrm{1}}}  \mathop{\sqcup}  l_{{\mathrm{2}}} }
\end{array}
\end{align*}
Terms include variables and a constructor for an inductive
hierarchy of universes, annotated by a level
$l$. Dependent function types are annotated
with a pair of grades $s$ and $r$, with $s$
capturing how $x$ is used in the body of the inhabiting function
 and $r$ capturing how $x$ is used in the
codomain $B$.  Dependent tensors have a single grade
$r$, which describes how the first element is used in the
typing of the second. The graded modal type operator
$ \square_{  \textcolor{darkblue}{ s }  }  A $ `packages' a term and its dependencies so that
values of type $A$ can be used with grade $s$ in
the future.  Graded modal types are introduced via \emph{promotion}
$ \square  t $ and eliminated via
$ \mathsf{let}\,  \square  x   =  t_{{\mathrm{1}}}  \,\mathsf{in}\, t_{{\mathrm{2}}} $. The following sections
explain the semantics of each piece of syntax with respect to
its typing. We typically use $A$ and $B$ to connote terms used as types.

\subsection{Typing Judgments, Contexts, and Grading}\label{grtt--typing-judgments}
\noindent
Typing judgments are written in either of the following two
equivalent forms:
\begin{align*}
   ( \Delta  \mid  \sigma_{{\mathrm{1}}}  \mid  \sigma_{{\mathrm{2}}} )   \odot  \Gamma  \vdash  t  \GrTTsym{:}  A
\qquad
 &
\qquad
   \left(\begin{smallmatrix} \Delta  \\  \sigma_{{\mathrm{1}}}  \\  \sigma_{{\mathrm{2}}}  \end{smallmatrix}\right)   \odot  \Gamma  \vdash  t  \GrTTsym{:}  A
\end{align*}
The `horizontal' syntax (left) is used most often,
with the equivalent `vertical' form (right) used for clarity in some
places. Ignoring the part to the left of $\odot$,
typing judgments and their rules are essentially those of
\mlof{} type theory (with the addition of the modality)
where $\Gamma$ ranges over usual dependently-typed typing \emph{contexts}.
The left of $\odot$ provides the grading information,
where $\sigma$ and $\Delta$ range
over \emph{grade vectors} and \emph{context grade vectors}
respectively, of the form:
\begin{align*}
\begin{array}{ccc}
(\textit{contexts}) &
(\textit{grade vectors}) &
(\textit{context grade vectors})
\\
\quad \Gamma ::=  \emptyset \ |\ \Gamma  \GrTTsym{,}  x  \GrTTsym{:}  A \quad
&
\qquad \sigma ::=  \emptyset \ |\ \sigma,s \qquad
&
\qquad \Delta ::=  \emptyset \ |\ \Delta  ,  \sigma \qquad
\end{array}
\end{align*}
A grade vector $\sigma$ is a vector of semiring elements,
and a context vector $\Delta$ is a vector of
grade vectors. We write
$(s_1, \ldots, s_n)$ to denote an $n$-vector
and likewise for context grade
vectors. We omit parentheses when this would not cause
ambiguity. Throughout, a comma is used to concatenate vectors and disjoint
contexts, and to extend vectors with a single grade, grade vector,
or typing assumption.

For a judgment $ ( \Delta  \mid  \sigma_s  \mid  \sigma_r )   \odot  \Gamma  \vdash  t  \GrTTsym{:}  A$ the vectors
$\Gamma$, $\Delta$, $\sigma_s$, and $\sigma_r$ are all of equal
size. Given a typing assumption $y  \GrTTsym{:}  B$ at index $i$ in
$\Gamma$, the grade $\sigma_s[i] \in \mathcal{R}$ denotes the use
of $y$ in $t$ (the \emph{subject} of the judgment), the grade
$\sigma_r[i] \in \mathcal{R}$ denotes the use of $y$ in $A$ (the
\emph{subject's type}), and $\Delta[i] \in \mathcal{R}^i$ (of
size $i$) describes how assumptions prior to $y$ are used to
form $y$'s type, $B$.

Consider the following
example, which types the body of a function that takes two
arguments of type $a$, and returns only the first:
\begin{align*}
   \left(\begin{smallmatrix}   \GrTTsym{(}  \GrTTsym{)}  ,  \GrTTsym{(}  \GrTTsym{1}  \GrTTsym{)}  ,  \GrTTsym{(}  \GrTTsym{1}  ,  \GrTTsym{0}  \GrTTsym{)}    \\    \GrTTsym{0}  ,  \GrTTsym{1}  ,  \GrTTsym{0}    \\    \GrTTsym{1}  ,  \GrTTsym{0}  ,  \GrTTsym{0}    \end{smallmatrix}\right)   \odot  a  \GrTTsym{:}   \mathsf{Type}_{ l }   \GrTTsym{,}  x  \GrTTsym{:}  a  \GrTTsym{,}  y  \GrTTsym{:}  a  \vdash  x  \GrTTsym{:}  a
\end{align*}
Let the context grade vector be called $\Delta$.  Then,
$\Delta[0] = \GrTTsym{(}  \GrTTsym{)}$ (empty vector) explains that there are no assumptions
that are used to type $a$ in the context, as $ \mathsf{Type}_{ l } $ is a
closed term and the first assumption. $\Delta[1] = (1)$
explains that the first assumption $a$ is used (grade $1$)
in the typing of $x$ in the context, and
$\Delta[2] = (1, 0)$, explains that $a$ is used once in the
typing of $y$ in the context, and $x$ is unused in the typing
of $y$. The subject grade vector $\sigma_s = (0, 1, 0)$
explains that $a$ is unused in the subject, $x$ is used once,
and $y$ is unused. Finally, the subject type vector
$\sigma_r = (1, 0, 0)$ explains that $a$ appears once in the
subject's type (which is just $a$), and $x$ and $y$ are unused
in the formation of the subject's type.

To aid reading, recall that
standard typing rules typically have the form $\textit{context}
\vdash \textit{subject} : \textit{subject-type}$, the order of which
is reflected by $ ( \Delta  \mid  \sigma_s  \mid  \sigma_r )  \odot \ldots$ giving
the context, subject, and subject-type grading respectively.

\paragraph{Well-formed Contexts}\label{grtt--well-formed-contexts}

The relation $\Delta  \odot  \Gamma  \vdash$ identifies a context $\Gamma$ as
well-formed with respect to context grade vector $\Delta$,
defined by the following rules:
\begin{align*}
  \GrTTdruleWfXXEmpty{} \qquad \GrTTdruleWfXXExt{}
\end{align*}
Unlike typing, well-formedness does not need
to include subject and subject-type grade vectors, as it
considers only the well-formedness of the assumptions in a
context with respect to prior assumptions in the context. The
$\GrTTdruleWfXXEmptyName{}$ rule states that the empty context
is well-formed with an empty context grade vector as there
are no assumptions to account for.  The
$\GrTTdruleWfXXExtName{}$ rule states that given $A$ is a type
under the assumptions in $\Gamma$, with $\sigma$ accounting for
the usage of $\Gamma$ variables in $A$, and $\Delta$ accounting for usage within
$\Gamma$, then we can form the well-formed context
$\Gamma  \GrTTsym{,}  x  \GrTTsym{:}  A$ by extending $\Delta$ with $\sigma$ to account
for the usage of $A$ in forming the context. The notation $ \textbf{0} $
denotes a vector for which each element is the semiring
$\GrTTsym{0}$. Note that the well-formedness $\Delta  \odot  \Gamma  \vdash$ is
inherent from the premise of $\GrTTdruleWfXXExtName{}$ due to the following lemma:
\begin{glemma}{wfFromTyping}{Typing contexts are well-formed}
  If $ ( \Delta  \mid  \sigma_{{\mathrm{1}}}  \mid  \sigma_{{\mathrm{2}}} )   \odot  \Gamma  \vdash  t  \GrTTsym{:}  A$ then $\Delta  \odot  \Gamma  \vdash$.
\end{glemma}

\subsection{Typing Rules}\label{grtt--typing-rules}

We examine the typing rules of \grtt{} one at a time.
The rules are collected in Appendix~\ref{grtt--full-rules}.

\noindent
Variables are introduced as
follows:
\begin{align*}
  \GrTTdruleTXXVar{}
\end{align*}
The premise identifies $\Gamma_{{\mathrm{1}}}  \GrTTsym{,}  x  \GrTTsym{:}  A  \GrTTsym{,}  \Gamma_{{\mathrm{2}}}$ as well-formed
under the context grade vector $\Delta_{{\mathrm{1}}}  ,  \sigma  ,  \Delta_{{\mathrm{2}}}$. By the size
condition $ \left|  \Delta_{{\mathrm{1}}}  \right|   \GrTTsym{=}   \left|  \Gamma_{{\mathrm{1}}}  \right| $, we are able to identify $\sigma$ as
capturing the usage of the variables $\Gamma_{{\mathrm{1}}}$ in forming $A$. This
information is used in the conclusion, capturing type-level variable
usage as $\sigma  ,  \GrTTsym{0}  ,   \textbf{0} $, which describes that $\Gamma_{{\mathrm{1}}}$ is used
according to $\sigma$ in the subject's type ($A$), and that
the $x$ and the variables of $\Gamma_{{\mathrm{2}}}$ are
used with grade $\GrTTsym{0}$.  For subject usage, we annotate the
first zero vector with a size $|\Delta_{{\mathrm{1}}}|$, allowing us to single out
$x$ as being the only assumption used with grade $\GrTTsym{1}$
in the subject; all other assumptions are used with
grade $\GrTTsym{0}$.

For example, typing the body of the polymorphic identity ends with $\GrTTdruleTXXVarName{}$:
\begin{align*}
  \inferrule*[right=\GrTTdruleTXXVarName{}]{
    \inferrule*[right=\GrTTdruleWfXXExtName{}]{
      \cdots
    }{ {\GrTTsym{(}  \GrTTsym{(}  \GrTTsym{)}  ,  \GrTTsym{(}  \GrTTsym{1}  \GrTTsym{)}  \GrTTsym{)}  \odot  a  \GrTTsym{:}   \mathsf{Type}   \GrTTsym{,}  x  \GrTTsym{:}  a  \vdash} }
    \quad
    { \left|  \GrTTsym{(}  \GrTTsym{(}  \GrTTsym{)}  \GrTTsym{)}  \right|   \GrTTsym{=}   \left|  a  \GrTTsym{:}   \mathsf{Type}   \right| }
  }{ { ( \GrTTsym{(}  \GrTTsym{(}  \GrTTsym{)}  ,  \GrTTsym{(}  \GrTTsym{1}  \GrTTsym{)}  \GrTTsym{)}  \mid    \GrTTsym{0}  ,  \GrTTsym{1}    \mid    \GrTTsym{1}  ,  \GrTTsym{0}   )   \odot  a  \GrTTsym{:}   \mathsf{Type}   \GrTTsym{,}  x  \GrTTsym{:}  a  \vdash  x  \GrTTsym{:}  a} }
\end{align*}
The premise implies that
$\GrTTsym{(}  \GrTTsym{(}  \GrTTsym{)}  \GrTTsym{,}  \GrTTsym{1}  \GrTTsym{,}  \GrTTsym{0}  \GrTTsym{)}  \odot  a  \GrTTsym{:}   \mathsf{Type}   \vdash  a  \GrTTsym{:}   \mathsf{Type} $ by the following
lemma:
\begin{lemma}[Typing an assumption in a well-formed context]
\label{lemma:wfTypingAss}
  If $\Delta_{{\mathrm{1}}}  ,  \sigma  ,  \Delta_{{\mathrm{2}}}$ $\odot\,\Gamma_{{\mathrm{1}}}  \GrTTsym{,}  x  \GrTTsym{:}  A  \GrTTsym{,}  \Gamma_{{\mathrm{2}}} \vdash$ with
  $ \left|  \Delta_{{\mathrm{1}}}  \right|   \GrTTsym{=}   \left|  \Gamma_{{\mathrm{1}}}  \right| $, then
  $ ( \Delta_{{\mathrm{1}}}  \mid  \sigma  \mid   \textbf{0}  )   \odot  \Gamma_{{\mathrm{1}}}  \vdash  A  \GrTTsym{:}   \mathsf{Type}_{ l } $ for some
  $l$.
\end{lemma}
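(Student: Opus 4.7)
The plan is to proceed by induction on the length of $\Gamma_{2}$ (equivalently, on the structure of the derivation of $\Delta_{1}, \sigma, \Delta_{2} \odot \Gamma_{1}, x : A, \Gamma_{2} \vdash$). The well-formedness relation is defined by only two rules, so inversion is tightly constrained at each stage: since the context is non-empty (it contains at least $x : A$), every derivation step in sight must be $\GrTTdruleWfXXExtName$.

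In the base case, $\Gamma_{2} = \emptyset$, and by a size/shape argument on $|\Delta_{1}, \sigma, \Delta_{2}| = |\Gamma_{1}, x : A, \Gamma_{2}|$ together with the assumption $|\Delta_{1}| = |\Gamma_{1}|$, we get $\Delta_{2} = \emptyset$. The derivation therefore ends with $\GrTTdruleWfXXExtName$ applied to exactly the judgment we want: $(\Delta_{1} \mid \sigma \mid \textbf{0}) \odot \Gamma_{1} \vdash A : \mathsf{Type}_{l}$ for some $l$, and we are done.

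In the inductive step, $\Gamma_{2} = \Gamma'_{2}, y : B$ for some $y, B$, and correspondingly $\Delta_{2} = \Delta'_{2}, \sigma'$ by the matching-length invariant. The last rule must be $\GrTTdruleWfXXExtName$, whose premise gives $(\Delta_{1}, \sigma, \Delta'_{2} \mid \sigma' \mid \textbf{0}) \odot \Gamma_{1}, x : A, \Gamma'_{2} \vdash B : \mathsf{Type}_{l'}$. Applying \refLemma{wfFromTyping} to this premise yields $\Delta_{1}, \sigma, \Delta'_{2} \odot \Gamma_{1}, x : A, \Gamma'_{2} \vdash$, to which the induction hypothesis applies (since $|\Delta_{1}| = |\Gamma_{1}|$ is preserved), producing $(\Delta_{1} \mid \sigma \mid \textbf{0}) \odot \Gamma_{1} \vdash A : \mathsf{Type}_{l}$ as required.

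The only potential subtlety is keeping track of vector-length invariants so that the inversion in the base case pins down $\Delta_{2} = \emptyset$; once that bookkeeping is in place the argument is essentially an unfolding of $\GrTTdruleWfXXExtName$ combined with \refLemma{wfFromTyping}. No substitution, weakening, or grade-arithmetic manipulation is required, which is why this is stated as an early structural lemma supporting the variable rule.
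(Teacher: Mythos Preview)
Your proof is correct and follows the natural approach: induction on the length of $\Gamma_2$, inverting $\GrTTdruleWfXXExtName$ at each step, and in the inductive step using \refLemma{wfFromTyping} to recover a well-formedness judgment on the shorter context so that the induction hypothesis applies. This matches the paper's approach (the detailed proof is deferred to the supplement, but it proceeds in the same structural way).
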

In the conclusion of $\GrTTdruleTXXVarName{}$,
the typing $\GrTTsym{(}  \GrTTsym{(}  \GrTTsym{)}  \GrTTsym{,}  \GrTTsym{1}  \GrTTsym{,}  \GrTTsym{0}  \GrTTsym{)}  \odot  a  \GrTTsym{:}   \mathsf{Type}   \vdash  a  \GrTTsym{:}   \mathsf{Type} $ is `distributed' to the typing of $x$ in the context
and to the formation the subject's
type. Thus subject grade $\GrTTsym{(}  \GrTTsym{0}  ,  \GrTTsym{1}  \GrTTsym{)}$ corresponds to the
absence of $a$ from the subject and the presence of
$x$, and subject-type grade $\GrTTsym{(}  \GrTTsym{1}  ,  \GrTTsym{0}  \GrTTsym{)}$ corresponds
to the presence of $a$ in the subject's type ($a$),
and the absence of $x$.

Typing universes are formed as follows:
\begin{align*}
\GrTTdruleTXXType{}
\end{align*}
We use an inductive hierarchy of
universes~\cite{palmgren1998universes} with ordering
$<$ such that $l <  \mathsf{suc}\  l $. Universes can be formed under any well-formed context,
with every assumption graded with $0$ subject and subject-type
use, capturing the absence of any assumptions from the
universes, which are closed forms.

\paragraph{Functions}

Function types $ \textstyle (  x  :_{  \textcolor{darkblue}{( s ,  r )}  }  A  )  \to   B $ are annotated with two grades:
explaining that $x$ is used with
grade $s$ in the body of the inhabiting function
and with grade $r$ in $B$. Function types have the following formation rule:
\begin{align*}
  \GrTTdruleTXXArrow{}
\end{align*}
The usage of the dependencies of $A$ and $B$ (excepting $x$)
are given by $\sigma_{{\mathrm{1}}}$ and $\sigma_{{\mathrm{2}}}$ in the premises (in the
`subject' position)
which are combined as $\sigma_{{\mathrm{1}}}  \GrTTsym{+}  \sigma_{{\mathrm{2}}}$ (via pointwise vector addition using
the $+$ of the semiring), which serves to \emph{contract} the
dependencies of the two types. The usage of $x$ in $B$ is
captured by $r$, and then internalised to the binder in the
conclusion of the rule. An arbitrary grade for $s$ is allowed
here as there is no information on how $x$ is used in
an inhabiting function body.
Function terms are then typed by the following rule:
\begin{align*}
  \footnotesize
  \GrTTdruleTXXFun{}
\end{align*}
The second premise types the body of the $\lambda$-term,
showing that $s$ captures the usage of $x$ in $t$
and $r$ captures the usage of $x$ in $B$; the
subject and subject-type grades of $x$ are then internalised as
annotations on the function type's binder.

Dependent functions are eliminated through application:
\begin{align*}
  \GrTTdruleTXXApp{}
\end{align*}
where $*$ is the scalar
multiplication of a vector, using the
semiring multiplication. Given a function $t_{{\mathrm{1}}}$ which
uses its parameter with grade $s$ to compute and
with grade $r$ in the typing of the result, we can apply it to
a term $t_{{\mathrm{2}}}$, provided that we have the resources required
to form $t_{{\mathrm{2}}}$ scaled by $s$ at the subject level and by $r$ at the
subject-type level, since $t_{{\mathrm{2}}}$ is substituted into the return
type $B$. This scaling behaviour is akin to that used in
coeffect calculi~\cite{ghica2014,Petricek:2014},
\qtt{}~\cite{quantitative-type-theory,McBride2016} and Linear
Haskell~\cite{bernardy2017linear}, but scalar multiplication
happens here at both the subject and subject-type level.  The
use of variables in $A$ is accounted for by $\sigma_{{\mathrm{1}}}$
as explained in the third premise, but these usages are not
present in the resulting application since $A$ no longer
appears in the types or the terms.

Consider the constant function
$ \lambda  x .  \lambda  y . x   :  \textstyle (  x  :_{  \textcolor{darkblue}{( \GrTTsym{1} ,  \GrTTsym{0} )}  }  A  )  \to    \textstyle (  y  :_{  \textcolor{darkblue}{( \GrTTsym{0} ,  \GrTTsym{0} )}  }  B  )  \to   A  $ (for
some $A$ and $B$). Here the resources required for the second
parameter will always be scaled by $0$, which is absorbing,
meaning that anything passed as the second argument has $0$
subject and subject-type use. This example begins to show some
of the power of grading---the grades capture the program
structure at all levels.

\paragraph{Tensors}

The rule for forming dependent tensor types is as follows:
\begin{align*}
  \GrTTdruleTXXTen{}
\end{align*}
This rule is almost identical to function type formation
$\GrTTdruleTXXArrowName{}$ but with only a single grade
$r$ on the binder, since $x$ is only bound in $B$ (the
type of the second component), and not computationally. For
`quantitative' semirings, where $0$ really means unused (see
Section~\ref{sec:case-studies}), $ \textstyle (  x  :_{  \textcolor{darkblue}{ \GrTTsym{0} }  }  A  )  \otimes   B $ is then a
product $A \times B$.

Dependent tensors are introduced as follows:
\begin{align*}
  \GrTTdruleTXXPair{}
\end{align*}
In the typing premise for $t_{{\mathrm{2}}}$, occurrences of $x$ are
replaced with $t_{{\mathrm{1}}}$ in the type, ensuring that the type of
the second component ($t_{{\mathrm{2}}}$) is calculated using the first
component ($t_{{\mathrm{1}}}$). The resources for $t_{{\mathrm{1}}}$ in this
substitution are scaled by $r$, accounting for the
existing usage of $x$ in $B$. In the conclusion, we
see the resources for the two components (and their types)
combined via the semiring addition.

Finally, tensors are eliminated with the following rule:
\begin{align*}
  \GrTTdruleTXXTenCut{}
\end{align*}
As this is a dependent eliminator, we allow the result type $C$
to depend upon the value of the tensor as a whole, bound
as $z$ in the second premise with grade $r'$, into which is substituted
our actual tensor term $t_{{\mathrm{1}}}$ in the conclusion.

Eliminating a tensor ($t_{{\mathrm{1}}}$) requires that we consider each
component ($x$ and $y$) is used with the same grade $s$ in the
resulting expression $t_{{\mathrm{2}}}$, and that we scale the resources
of $t_{{\mathrm{1}}}$ by $s$. This is because we cannot inspect $t_{{\mathrm{1}}}$
itself, and semiring addition is not injective (preventing us
from splitting the grades required to form $t_{{\mathrm{1}}}$). This
prevents forming certain functions (e.g., projections) under
some semirings, but this can be overcome by the introduction
of \emph{graded modalities}.

\paragraph{Graded Modality}

Graded binders alone do not allow different parts of a value
to be used differently, e.g., computing the length of a list
ignores the elements, projecting from a pair discards one
component. We therefore introduce a \emph{graded modality}
(\`{a} la~\cite{gaboardi2014,orchard2019quantitative}) allowing us to
capture the notion of local inspection on data and
internalising usage information into types. A type
$ \square_{  \textcolor{darkblue}{ s }  }  A $ denotes terms of type $A$ that are used with
grade $s$.
Type formation and introduction rules are:
\begin{align*}
  \GrTTdruleTXXBox{}
\quad
  \GrTTdruleTXXBoxI{}
\end{align*}
To form a term of type $ \square_{  \textcolor{darkblue}{ s }  }  A $, we `promote' a term $t$
of type $A$ by requiring that we can use the resources used to
form $t$ ($\sigma_{{\mathrm{1}}}$) according to grade $s$. This
`promotion' resembles that of other graded modal systems
(e.g.,~\cite{DBLP:journals/pacmpl/AbelB20,gaboardi2014,gaboardi2016combining,orchard2019quantitative}), but the
elimination needs to also account for type usage due to
dependent elimination.

We can see promotion $\Box_i$ as capturing $t$ for later
use according to grade $s$. Thus, when eliminating a term of
type $ \square_{  \textcolor{darkblue}{ s }  }  A $, we must consider how the `unboxed' term is
used with grade $s$, as per the following dependent
eliminator:
\begin{align*}
  \small
  \GrTTdruleTXXBoxE{}
\end{align*}
This rule can be understood as a kind of `cut', connecting a
`capability' to use a term of type $A$ according to grade
$s$ with the requirement that $x  \GrTTsym{:}  A$ is used according to
grade $s$ as a dependency of $t_{{\mathrm{2}}}$. Since we are in a
dependently-typed setting, we also substitute $t_{{\mathrm{1}}}$ into
the type level such that $B$ can depend on $t_{{\mathrm{1}}}$
according to grade $r$ which then causes the dependencies
of $t_{{\mathrm{1}}}$ ($\sigma_{{\mathrm{1}}}$) to be scaled-up by $r$ and added to
the subject-type grading.

\paragraph{Equality, Conversion, and Subtyping}

A key part of dependent type theories is a notion of
term equality and type conversion~\cite{hofmann1997syntax}.
\grtt{} term equality is via judgments $ ( \Delta  \mid  \sigma_{{\mathrm{1}}}  \mid  \sigma_{{\mathrm{2}}} )   \odot  \Gamma  \vdash  t_{{\mathrm{1}}}  \GrTTsym{=}  t_{{\mathrm{2}}}  \GrTTsym{:}  A$ equating terms $t_{{\mathrm{1}}}$ and $t_{{\mathrm{2}}}$ of type $A$.
Equality includes full
congruences as well as $\beta\eta$-equality for functions,
tensors, and graded modalities, of which the latter are:
\begin{align*}
  \small
\begin{array}{c}
\GrTTdruleTEQXXBoxB{}
\\[1.5em]
\GrTTdruleTEQXXBoxU{}
\end{array}
\end{align*}
A subtyping relation ($ ( \Delta  \mid  \sigma )   \odot  \Gamma  \vdash  A  \leq  B$) subsumes
equality, adding ordering of universe levels. \emph{Type
  conversion} allows re-typing terms based on the judgment:
\begin{align*}
  \GrTTdruleTXXTyConv{}
\end{align*}
The full rules for equality and subtyping are in
Appendix~\ref{grtt--full-rules}.

\subsection{Operational Semantics}
\label{sec:operational-semantics}

As with other graded modal calculi (e.g.,~\cite{DBLP:journals/pacmpl/AbelB20,gaboardi2014,gaboardi2016combining}),
the core calculus of \grtt{} has a Call-by-Name small-step operational semantics
with reductions $ t  \leadsto  t' $. The rules are standard, with the
addition of the $\beta$-rule for the graded modality:
\begin{align*}
  \mathsf{let}\,  \square  x   =   \square  t_{{\mathrm{1}}}   \,\mathsf{in}\, t_{{\mathrm{2}}}   \leadsto  \GrTTsym{[}  t_{{\mathrm{1}}}  \GrTTsym{/}  x  \GrTTsym{]}  t_{{\mathrm{2}}} 
\qquad
(\GrTTdruleSemXXBetaBoxName{})
\end{align*}
Type preservation and normalisation are considered in Section~\ref{sec:metatheory}.

\subsection{Implementation and Examples}\label{grtt--implementation}
To explore our theory, we provide an implementation,
\implName{}. Section~\ref{sec:implementation}
describes how the declarative definition of the type theory is implemented as a
bidirectional type checking algorithm. We briefly mention the syntax
here for use in later examples. The following is the polymorphic identity function
in \implName{}:
\begin{gerty}
id : (a : (.0, .2) Type 0) -> (x : (.1, .0) a) -> a
id = \a -> \x -> x
\end{gerty}
The syntax resembles the theory, where grading terms
\gertyin{.n} are syntactic sugar for a unary encoding of
grades in terms of $0$ and repeated addition of $1$, e.g.,
\gertyin{.2} $=$ \gertyin{(.0 + .1) + .1}.  This syntax can be
used for grade terms of any semiring, which can be resolved to
particular built-in semirings at other points of type
checking.

The following shows first projection on (non-dependent) pairs, using
the graded modality (at grade 0 here) to give fine-grained usage on compound data:
\begin{gerty}
fst : (a : (.0, .2) Type 0) (b : (.0, .1) Type 0) -> <a * [.0] b> -> a
fst = \a b p -> case p of <x, y> -> let [z] = y in x
\end{gerty}
The implementation adds various built-in semirings, some syntactic sugar,
and extras such as: a singleton \emph{unit} type, extensions
of the theory to semirings with a pre-ordering (discussed
further in Section~\ref{sec:discussion}), and some implicit
resolution. Anywhere a grade is expected, an underscore can
be supplied to indicate that \implName{} should try to resolve the
grade implicitly. Grades may also be omitted from binders (see above
in \gertyin{fst}), in
which case they are treated as implicits. Currently, implicits
are handled by generating existentially quantified grade
variables, and using SMT to solve the necessary constraints
(see Section~\ref{sec:implementation}).

So far we have considered the natural numbers semiring providing an
analysis of usage. We come back to this and similar examples in
Section~\ref{sec:case-studies}.  To show another kind of example, we
consider a lattice semiring of privacy levels (appearing
elsewhere~\cite{DBLP:journals/pacmpl/AbelB20,gaboardi2016combining,orchard2019quantitative})
which enforces information-flow control, akin to
DCC~\cite{DCC}. Differently to DCC, dataflow is tracked through
variable dependencies, rather than through the results of computations
in the monadic style of DCC.

\begin{definition}[Security levels]
Let $\mathcal{R} = \gertyin{Lo} \leq \gertyin{Hi}$ be a set
of labels with $0 = \gertyin{Hi}$ and
$1 = \gertyin{Lo}$, semiring addition as the meet and
multiplication as join.
Here, $1 = \gertyin{Lo}$ treats the base notion of dataflow as
being in the low security (public) domain.  Variables graded with
\gertyin{Hi} must then be unused, or guarded by a graded
modality.
This semiring is primitive in \implName{};
we can express the following example:
\begin{gerty}
idLo : (a : (.0, .2) Type 0) -> (x : (Lo, Hi) a) -> a
idLo = \a -> \x -> x
-- The following is rejected as ill-typed
leak : (a : (.0, .2) Type 0) -> (x : (Hi, Hi) a) -> a
leak = \a -> \x -> idLo a x
\end{gerty}
The first definition is well-typed, but the second yields
a typing error originating from the application in its body:
\begin{gertyError}
At subject stage got the following mismatched grades:
 For 'x' expected Hi but got .1
\end{gertyError}
where grade \gertyin{1} is \gertyin{Lo}
here. Thus we can use this abstract
label semiring as a way of restricting flow of data between regions
(\emph{cf.} region typing
systems~\cite{henglein2005effect,tofte1997region}).
Note that the ordering is not leveraged here other than in the
lattice operations.
\end{definition}

\section{Case Studies}\label{sec:case-studies}

We now demonstrate \grtt{} via
several cases studies that focus the reasoning power of
dependent types via grading. Since grading in \grtt{} serves
to explain dataflow, we can characterise subsets of \grtt{}
that correspond to various type theories. We demonstrate the
approach with simple types, parametric polymorphism, and
linearity. In each case study, we restrict \grtt{} to a subset
by \emph{a characterisation of the grades}, rather than
by, say, placing detailed syntactic restrictions or employing
meta-level operations or predicates that restrict syntax
(as one might do for example to map a subset of
Martin-L\"{o}f type theory into the simply-typed
$\lambda$-calculus by restriction to closed
types, requiring deep inspection of type terms).  Since this
restriction is only on grades, we can harness the specific
reasoning power of particular calculi from within the language
itself, simply by specifications on grades. In the context of
an implementation like \implName{}, this amounts to using type
signatures to restrict dataflow.

This section shows the power of tracking dataflow in types via
grades, going beyond
\qtt{}~\cite{quantitative-type-theory} and
\textsc{GraD}~\cite{choudhury2021}. For `quantitative'
semirings, a $0$ type-grade means that we can recover simply-typed
reasoning (Section~\ref{subsec:embedding-stlc}) and distinguish
computational functions from type-parameter functions for parametric
reasoning (Section~\ref{sec:parametricity}), embedding a
grade-restricted subset of \grtt{} into System F.

Section~\ref{sec:more-case-studies} returns to a case study that
builds on the implementation.

\subsection{Recovering \mlof{} Type Theory}

When the semiring parameterising \grtt{} is the singleton
semiring (i.e., any semiring where $1 = 0$), we have an
isomorphism $ \square_{  \textcolor{darkblue}{ r }  }  A  \cong A$, and grade annotations
become redundant, as all grades are equal. All vectors and
grades on binders may then be omitted, and we can write typing
judgments as $\Gamma \vdash t : A$, giving rise to a standard
\mlof{} type theory as a special case of \grtt{}.

\subsection{Determining Usage via Quantitative Semirings}

Unlike existing systems, we can use the fine-grained grading
to \emph{guarantee} the relevance or irrelevance of
assumptions in types. To do this we must consider a subset of
semirings $(\mathcal{R}, *, 1, +, 0)$ called
\emph{quantitative} semirings, satisfying:
\begin{itemize}[leftmargin=10em]
\item[(zero-unique)] $1 \neq 0$;
\item[(positivity)] $\forall r, s .\ r  \GrTTsym{+}  s = 0 \implies r = 0 \wedge s = 0$;
\item[(zero-product)] $\forall r, s .\  r  \ast  s  = 0 \implies r = 0 \vee s = 0$.
\end{itemize}
These axioms\footnote{Atkey requires
  \emph{positivity} and \emph{zero-product} for all
  semirings parameterising \qtt{}~\cite{quantitative-type-theory} (as
  does Abel~\cite{abel2018}). Atkey
  imposes this for admissibility of substitution.  We need
  not place this restriction on \grtt{} to have
  substitution in general (Sec.~\ref{grtt--substitution}).}  ensure
that a $0$-grade in a quantitative semiring represents irrelevant
variable use. This notion has recently been proved
for computational use
by Choudhury et al.~\cite{choudhury2021} via a heap-based semantics for
grading (on computations) and the same result applies here.
Conversely, in a quantitative semiring any grade other than $0$
denotes relevance. From this, we can \emph{directly} encode
non-dependent tensors and arrows: in $ \textstyle (  x  :_{  \textcolor{darkblue}{ \GrTTsym{0} }  }  A  )  \otimes   B $ the
grade $0$ captures that $x$ cannot have any computational
content in $B$, and likewise for $ \textstyle (  x  :_{  \textcolor{darkblue}{( s ,  \GrTTsym{0} )}  }  A  )  \to   B $
the grade $0$ explains that $x$ cannot have any computational
content in $B$, but may have computational use
according to $s$ in the inhabiting function.
Thus, the grade $0$ here describes that
elimination forms \emph{cannot} ever inspect the variable during
normalisation.
Additionally, quantitative semirings can be used for encoding simply-typed and
polymorphic reasoning.

\begin{example}Some quantitative semirings are:
\begin{itemize}[topsep=0em,itemsep=0.25em,leftmargin=1em]
\item (\textit{Exact usage}) $(\mathbb{N}, \times, 1, +, 0)$;
\item (\textit{0-1}) The semiring over $\mathcal{R} = \{0, 1\}$ with $1 + 1 = 1$
  which describes relevant \textit{vs.} irrelevant dependencies, but
  no further information.
\item (\textit{None-One-Tons}~\cite{McBride2016})
The semiring on $\mathcal{R} = \{0, 1, \infty\}$ is
more fine-grained than 0-1, where $\infty$ represents more
than 1 usage, with $1 + 1 = \infty = 1 + \infty$.
\end{itemize}
\end{example}
\subsection{Simply-typed Reasoning}
\label{subsec:embedding-stlc}

\newcommand{\stlc}[1]{\textsc{Stlc}(#1)} As discussed in
Section~\ref{sec:introduction}, the simply-typed
$\lambda$-calculus (STLC) can be distinguished from
dependently-typed calculi via the restriction of dataflow: in
simple types, data can only flow at the computational level,
with no dataflow within, into, or from types. We can thus
view a \grtt{} function as simply typed when its variable
is irrelevant in the type, e.g., $ \textstyle (  x  :_{  \textcolor{darkblue}{( s ,  \GrTTsym{0} )}  }  A  )  \to   B $ for
quantitative semirings.  We define a subset of \grtt{}
restricted to simply-typed reasoning:
\begin{definition}[Simply-typed \grtt{}]
\label{sec:def-stlc-grtt-subset}
For a quantitative semiring,
the following predicate $\stlc{-}$
determines a subset of simply-typed \grtt{} programs:
\begin{align*}
& \stlc{ (  \emptyset   \mid   \emptyset   \mid   \emptyset  )   \odot   \emptyset   \vdash  t  \GrTTsym{:}  A}
\\
& \stlc{ ( \Delta  \mid  \sigma_{{\mathrm{1}}}  \mid  \sigma_{{\mathrm{2}}} )   \odot  \Gamma  \vdash  t  \GrTTsym{:}  A}
\!\implies\!
\stlc{ ( \Delta  ,   \textbf{0}   \mid  \sigma_{{\mathrm{1}}}  ,  s  \mid  \sigma_{{\mathrm{2}}}  ,  \GrTTsym{0} )   \odot  \Gamma  \GrTTsym{,}  x  \GrTTsym{:}  B  \vdash  t  \GrTTsym{:}  A}
\end{align*}
That is, all subject-type grades are $0$
(thus function types are of the form $ \textstyle (  x  :_{  \textcolor{darkblue}{( s ,  \GrTTsym{0} )}  }  A  )  \to   B $).
A similar predicate is
defined on well-formed contexts (elided),
restricting context grades of well-formed contexts to
only zero grading vectors.
\end{definition}
\noindent

Under the restriction of
Definition~\ref{sec:def-stlc-grtt-subset}, a subset of \grtt{}
terms embeds into the simply-typed $\lambda$-calculus in a
sound and complete way. Since STLC does not have a notion of tensor or modality, this is
omitted from the encoding:
\begin{align*}
\interp{x} = x \quad
\interp{ \lambda  x . t } = \lambda x . \interp{t} \quad
\interp{ t_{{\mathrm{1}}} \,{ t_{{\mathrm{2}}} } } = \interp{t_{{\mathrm{1}}}}\interp{t_{{\mathrm{2}}}} \quad
\interp{ \textstyle (  x  :_{  \textcolor{darkblue}{( s ,  \GrTTsym{0} )}  }  A  )  \to   B }_\tau \!=\! \interp{A}_\tau \rightarrow \interp{B}_\tau
\end{align*}
Variable contexts of \grtt{} are interpreted by point-wise
applying $\interp{-}_\tau$ to typing assumptions. We then get
the following preservation of typing into the simply-typed
$\lambda$-calculus, and soundness and completeness of this
encoding:
\begin{lemma}[Soundness of typing]
Given a derivation of
$ ( \Delta  \mid  \sigma_{{\mathrm{1}}}  \mid  \sigma_{{\mathrm{2}}} )   \odot  \Gamma  \vdash  t  \GrTTsym{:}  A$
such that
$\stlc{ ( \Delta  \mid  \sigma_{{\mathrm{1}}}  \mid  \sigma_{{\mathrm{2}}} )   \odot  \Gamma  \vdash  t  \GrTTsym{:}  A}$
then
$\interp{\Gamma}_\tau \vdash \interp{t} : \interp{A}_\tau$
in STLC.
\end{lemma}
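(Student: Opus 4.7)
The plan is a structural induction on the derivation of $(\Delta \mid \sigma_1 \mid \sigma_2) \odot \Gamma \vdash t : A$. Under the STLC predicate, $\sigma_2 = \mathbf{0}$ and every row of $\Delta$ is $\mathbf{0}$, and we must produce a derivation of $\interp{\Gamma}_\tau \vdash \interp{t} : \interp{A}_\tau$ in STLC.

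Before attacking the main induction, I would establish two auxiliary lemmas. The first is a \emph{predicate-preservation} lemma: if the conclusion of a \grtt{} typing rule is in the STLC fragment, then each premise can be put into the STLC fragment. The positivity and zero-product axioms of the quantitative semiring are critical here: for instance, in $\GrTTdruleTXXAppName$ the conclusion's subject-type grade $\sigma_3 + r \ast \sigma_4 = \mathbf{0}$ forces $\sigma_3 = \mathbf{0}$ and every entry of $r \ast \sigma_4$ to be $0$, which zero-product further breaks into information about $r$ or $\sigma_4$, feeding into the premise judgments. The second is a \emph{shape-characterization} lemma: any type that can appear as a subject-type of an STLC-fragment derivation is built only from base forms and arrows of the form $(x :_{(s,0)} A) \to B$, so that the interpretation $\interp{-}_\tau$ is defined on it. The key point is that a $0$-grade genuinely indicates structural irrelevance in a quantitative semiring, ruling out $\otimes$-types, $\square$-types, and arrows with non-zero binder grade $r$ from appearing as subject-types of STLC-fragment terms.

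With these in hand, the induction proceeds rule by rule. The cases of $\GrTTdruleTXXVarName$, $\GrTTdruleTXXFunName$, and $\GrTTdruleTXXAppName$ produce the corresponding STLC variable, $\lambda$-abstraction, and application typings; in $\GrTTdruleTXXFunName$, the shape lemma yields $r = 0$ so the translation of the function type is the simple arrow $\interp{A}_\tau \to \interp{B}_\tau$; in $\GrTTdruleTXXAppName$, because $x$ has grade $0$ in $B$, the substitution $[t_2/x]B$ behaves as $B$ under $\interp{-}_\tau$. The remaining rules $\GrTTdruleTXXTypeName$, $\GrTTdruleTXXArrowName$, $\GrTTdruleTXXTenName$, $\GrTTdruleTXXPairName$, $\GrTTdruleTXXTenCutName$, $\GrTTdruleTXXBoxName$, $\GrTTdruleTXXBoxIName$, $\GrTTdruleTXXBoxEName$ are excluded by the shape-characterization lemma, since each either concludes with a type-level subject or produces a subject-type outside the simple-type shape. $\GrTTdruleTXXTyConvName$ is handled by observing that conversion between two types of the simple shape does not change their image under $\interp{-}_\tau$.

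The main obstacle will be the shape-characterization lemma and its interaction with type conversion. The equational theory of \grtt{} admits $\beta\eta$-steps that mix term and type levels, and one must verify that such conversions, when restricted to the STLC fragment, cannot transport a derivation outside the simple-type shape. This is precisely where the quantitative-semiring axioms are indispensable: only they guarantee that a variable graded $0$ in a type cannot, through equality or substitution, reappear non-trivially. A supporting substitution lemma restricted to the STLC fragment will likely need to be proved in tandem, analogous in spirit to the irrelevance results of Choudhury et al.~\cite{choudhury2021} but lifted from the term level to the type level.
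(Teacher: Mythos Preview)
Your induction skeleton and the predicate-preservation lemma are right and match what the paper needs. The problem is the shape-characterization lemma: it is false as stated, and you lean on it to do work it cannot do. The $\stlc{-}$ predicate constrains only the grade vectors (context rows $\mathbf{0}$, subject-type vector $\mathbf{0}$); it says nothing about the syntactic form of the subject or its type. A pair $(t_1,t_2):(x:_{0}A)\otimes B$, a let-tensor whose body has an arrow type, or a promotion $\square t:\square_s A$ can all satisfy the predicate---every relevant grade can be zero---yet none of these fits your ``simple shape''. So $\otimes_i$, $\otimes_e$, $\square_i$, $\square_e$ are \emph{not} ruled out by grading, and the quantitative-semiring axioms give you no purchase here because there is no non-zero grade to decompose.

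The paper sidesteps this entirely: the sentence ``Since STLC does not have a notion of tensor or modality, this is omitted from the encoding'' means $\interp{-}$ and $\interp{-}_\tau$ are deliberately partial, defined only on the $\lambda$-fragment (variables, abstractions, applications; type variables and arrows $(x:_{(s,0)}A)\to B$). The lemma is read relative to that domain. Once $t$ lies in it, the only possible last rules are $\textsc{Var}$, $\lambda_i$, $\lambda_e$, and $\textsc{Conv}$, and your handling of those four cases is essentially correct---in particular your observation that positivity and zero-product are what force the type-level substitution $[t_2/x]B$ to agree with $B$ under $\interp{-}_\tau$, and that $\textsc{Conv}$ is the case requiring real care. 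The auxiliary result you actually need, in place of the shape lemma, is that the domain of the encoding is closed under the premises of these four rules when the $\stlc{-}$ predicate holds; this is where the quantitative axioms legitimately enter (e.g.\ to see that the binder grade $r$ in the function type of $t_1$ must be $0$ whenever the conclusion is in the domain).
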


\begin{theorem}[Soundness and completeness of the embedding]
Given $\stlc{ ( \Delta  \mid  \sigma_{{\mathrm{1}}}  \mid  \sigma_{{\mathrm{2}}} )   \odot  \Gamma  \vdash  t  \GrTTsym{:}  A}$
and
$\interp{ ( \Delta  \mid  \sigma_{{\mathrm{1}}}  \mid  \sigma_{{\mathrm{2}}} )   \odot  \Gamma  \vdash  t  \GrTTsym{:}  A}$
then for CBN reduction $\leadsto^{\textsc{stlc}}$ in
simply-typed $\lambda$-calculus:
\begin{align*}
\begin{array}{rrll}
\textit{(soundness)} & \forall t' . & \textit{if}\  t \leadsto t' & \textit{then}\ \interp{t}
  \leadsto^{\textsc{stlc}} \interp{t'} \\
\textit{(completeness)} & \forall t_a. & \textit{if}\ \interp{t} \leadsto^{\textsc{stlc}} t_a
& \textit{then}\ \exists t' .\ t \leadsto t'\ \wedge\ \interp{t'} \equiv_{\beta\eta}
  t_a
\end{array}
\end{align*}
\end{theorem}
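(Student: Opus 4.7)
The plan is to prove both directions by induction on the reduction relation involved, relying on one auxiliary ingredient: a \emph{substitution lemma} for the interpretation, namely $\interp{[t_2/x]t_1} = [\interp{t_2}/x]\interp{t_1}$. This follows by straightforward structural induction on $t_1$, using that $\interp{\cdot}$ is defined compositionally on variables, $\lambda$, and application. I also need a \emph{canonical forms lemma}: if $\stlc{(\Delta \mid \sigma_1 \mid \sigma_2) \odot \Gamma \vdash t : A}$ then $t$ is built only from variables, $\lambda$-abstractions, and applications (so that $\interp{t}$ is defined). This lemma is proved by induction on the typing derivation, observing that the rules $\GrTTdruleTXXTenName$, $\GrTTdruleTXXPairName$, $\GrTTdruleTXXTenCutName$, $\GrTTdruleTXXBoxName$, $\GrTTdruleTXXBoxIName$, $\GrTTdruleTXXBoxEName$ all produce nontrivial subject-type grading components that, together with the \emph{positivity} and \emph{zero-product} axioms for quantitative semirings, force at least one subject-type grade to be nonzero in the conclusion or in a necessary subderivation, contradicting the $\stlc{-}$ restriction. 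An analogous restriction on well-formed contexts (and $\GrTTdruleTXXTyConvName$) rules out boxed or tensor types appearing in $A$, so the recursive calls to $\interp{\cdot}_\tau$ on types also remain in the defined fragment.

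For \emph{soundness}, I proceed by induction on the derivation of $t \leadsto t'$. The key $\beta$-case is $\GrTTdruleSemXXBetaFunName$: if $t = (\lambda x.t_1)\, t_2 \leadsto [t_2/x]t_1 = t'$, then $\interp{t} = (\lambda x.\interp{t_1})\,\interp{t_2} \leadsto^{\textsc{stlc}} [\interp{t_2}/x]\interp{t_1} = \interp{t'}$ by the substitution lemma. The $\beta\otimes$ and $\beta\Box$ cases cannot arise, by the canonical forms lemma applied to the subject of a preserved typing derivation. The congruence cases $\GrTTdruleSemXXCongFunOneName$ and similar are immediate by the induction hypothesis and the corresponding STLC congruence closure; the congruence cases for types, tensors, and boxes again cannot arise.

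For \emph{completeness}, I induct on the derivation $\interp{t} \leadsto^{\textsc{stlc}} t_a$. For each STLC redex in $\interp{t}$, the compositional, structure-preserving nature of $\interp{\cdot}$ means the redex has a unique preimage in $t$: an STLC $\beta$-redex $(\lambda x.M)\,N$ inside $\interp{t}$ arises from a subterm $(\lambda x.t_1)\,t_2$ of $t$ with $\interp{t_1} = M$ and $\interp{t_2} = N$. Reducing that subterm in $t$ via $\GrTTdruleSemXXBetaFunName$ yields $t'$ with $\interp{t'} = t_a$ up to the substitution lemma. Congruence cases are handled similarly, using that every STLC congruence position (under $\lambda$, left/right of application) is mirrored by a GrTT congruence rule.

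The main obstacle is the canonical forms lemma: one must be careful that the $\stlc{-}$ restriction propagates through every subderivation one might need to invoke (in particular through $\GrTTdruleTXXAppName$, where the subject-type grade of the conclusion is $\sigma_3 + r*\sigma_4$, and through $\GrTTdruleTXXFunName$, where the subject-type grade is $\sigma_1 + \sigma_3$). The quantitative axioms are essential precisely here: positivity forces each summand to vanish, and zero-product then forces either $r = 0$ (so arrows are simply typed as required) or the corresponding argument grade vector to vanish. The appearance of $\equiv_{\beta\eta}$ in completeness is a mild technical convenience covering any $\eta$-level mismatch introduced by the $\GrTTdruleTEQXXArrowUniqName$ rule at the typing level; no essential reduction mismatch occurs, since the operational semantics of both systems are CBN and the interpretation is a strict structural homomorphism on the STLC-expressible fragment.
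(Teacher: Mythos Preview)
Your overall architecture (a substitution lemma for $\interp{-}$ plus induction on the reduction derivation in each direction) is exactly the standard route and matches what the paper does. However, your \emph{canonical forms lemma} is false as stated, and the argument you sketch for it does not go through. Concretely, take the empty context and the term $\square\,\mathsf{Type}_0$. By $\GrTTdruleTXXBoxIName$ from $(\emptyset\mid()\mid())\odot\emptyset\vdash\mathsf{Type}_0:\mathsf{Type}_1$ we obtain $(\emptyset\mid()\mid())\odot\emptyset\vdash\square\,\mathsf{Type}_0:\square_s\,\mathsf{Type}_1$, which trivially satisfies $\stlc{-}$ yet lies outside the $\lambda$/application fragment. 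More generally, the box and tensor rules do \emph{not} force a nonzero subject-type grade: $\GrTTdruleTXXBoxIName$ has conclusion subject-type grade $\sigma_2$, and $\GrTTdruleTXXPairName$ has $\sigma_1+\sigma_3$, both of which can be $\textbf{0}$ when the ambient type is closed. Positivity and zero-product will not manufacture a contradiction here.

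Fortunately this lemma is not needed. The theorem's second hypothesis, $\interp{(\Delta\mid\sigma_1\mid\sigma_2)\odot\Gamma\vdash t:A}$, already presupposes that $\interp{t}$ is defined, and since $\interp{-}$ is given only on variables, abstractions and applications, this \emph{is} the restriction you want. So in the soundness direction, the $\beta\otimes$ and $\beta\Box$ cases (and the tensor/box congruence cases) are excluded simply because such subterms cannot occur in a $t$ with $\interp{t}$ defined; no appeal to grades is required. Two smaller points: CBN reduction does not include congruence under $\lambda$ or on the right of an application, so your list of ``STLC congruence positions'' is too generous (though harmlessly so, since the preimages exist anyway); and the remark tying $\equiv_{\beta\eta}$ to the typing rule $\GrTTdruleTEQXXArrowUniqName$ is a red herring, since that rule governs the equality judgment, not the operational relation $\leadsto$.
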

\noindent
Thus, we capture simply-typed
reasoning just by restricting type grades to $0$
for quantitative semirings. We consider quantitative
semirings again for parametric reasoning,
but first recall issues with parametricity and
dependent types.

\subsection{Recovering Parametricity via Grading}\label{sec:parametricity}

\newcommand{\ri}[3][]{\mathsf{RI}#1 \ #2\ #3}

One powerful feature of grading in a dependent type setting is
the ability to recover parametricity from dependent function
types. Consider the following type of functions in System F
(we borrow this example from Nuyts et al.~\cite{nuytsParametricQuantifiersDependent2017}):
\begin{align*}
\ri{A}{B}\ \triangleq\ \forall \gamma . (\gamma \rightarrow A) \rightarrow (\gamma \rightarrow B)
\end{align*}
Due to parametricity, we get the following notion of
\emph{representation independence} in System F: for a function
$f : \ri{A}{B}$, some type $\gamma'$,
and terms $h : \gamma' \rightarrow A$ and $c : \gamma'$, then we know
that $f$ can only use $c$ by applying $h\ c$. Subsequently, $\ri{A}{B} \cong A \rightarrow B$ by
 parametricity~\cite{reynolds1974towards}, defined uniquely as:
\begin{align*}
\begin{array}{ll}
\textit{iso} : \ri{A}{B} \rightarrow (A \rightarrow B) \qquad & \qquad
\textit{iso}^{-1} : (A \rightarrow B) \rightarrow \ri{A}{B} \\[0em]
\textit{iso}\ f = f\ A\ (\textit{id}\ A)
\qquad & \qquad
\textit{iso}^{-1}\ g = \Lambda \gamma .\ \lambda h .\ \lambda (c :
         \gamma) .\ g (h \, c)
\end{array}
\end{align*}
In a dependently-typed language, one might seek to replace System F's
universal quantifier with $\Pi$-types, i.e.
\begin{align*}
\ri[']{A}{B}\ \triangleq\ (\gamma : \mathsf{Type}) \rightarrow (\gamma \rightarrow A) \rightarrow (\gamma \rightarrow B)
\end{align*}
However, we can no longer reason parametrically about the inhabitants
of such types (we cannot prove that $\ri[']{A}{B} \cong A \rightarrow B$) as the
free interaction of types and computational terms allows
us to give the following non-parametric element of
$\ri[']{A}{B}$ over `large' type instances:
\begin{align*}
\textit{leak} = \lambda \gamma .\ \lambda h .\ \lambda c .\ \gamma : \ri[']{A}{\mathsf{Type}}
\end{align*}
Instead of applying $h\ c$, the above
``leaks'' the type parameter $\gamma$. \grtt{} can
recover universal quantification, and hence parametric
reasoning, by using grading to restrict the data-flow
capabilities of a $\Pi$-type. We can refine
representation independence to the following:
\begin{align*}
\ri['']{A}{B}\ \triangleq\  \textstyle (  \gamma  :_{  \textcolor{darkblue}{( \GrTTsym{0} ,  \GrTTsym{2} )}  }   \mathsf{Type}   )  \to    \textstyle (  h  :_{  \textcolor{darkblue}{( s_{{\mathrm{1}}} ,  \GrTTsym{0} )}  }   \textstyle (  x  :_{  \textcolor{darkblue}{( s_{{\mathrm{2}}} ,  \GrTTsym{0} )}  }  \gamma  )  \to   A   )  \to    \textstyle (  \GrTTmv{c}  :_{  \textcolor{darkblue}{( s_{{\mathrm{3}}} ,  \GrTTsym{0} )}  }  \gamma  )  \to   B   
\end{align*}
for some grades $s_{{\mathrm{1}}}$, $s_{{\mathrm{2}}}$, and $s_{{\mathrm{3}}}$,
and with shorthand $2 = 1 + 1$.

If we look at the definition of \textit{leak} above, we see
that $\gamma$ is used in the body of the function and thus
requires usage $1$, so \textit{leak} cannot
inhabit $\ri['']{A}{ \mathsf{Type} }$. Instead, \textit{leak} would
be typed differently as:
\begin{equation*}
\textit{leak} :  \textstyle (  \gamma  :_{  \textcolor{darkblue}{( \GrTTsym{1} ,  \GrTTsym{2} )}  }   \mathsf{Type}   )  \to    \textstyle (  h  :_{  \textcolor{darkblue}{( \GrTTsym{0} ,  \GrTTsym{0} )}  }   \textstyle (  x  :_{  \textcolor{darkblue}{( s ,  \GrTTsym{0} )}  }  \gamma  )  \to   A   )  \to    \textstyle (  \GrTTmv{c}  :_{  \textcolor{darkblue}{( \GrTTsym{0} ,  \GrTTsym{0} )}  }  \gamma  )  \to    \mathsf{Type}    
\end{equation*}
The problematic behaviour (that the type parameter $\gamma$ is
returned by the inner function) is exposed by the subject
grade $1$ on the binder of $\gamma$. We can thus define a graded universal quantification
from a graded $\Pi$-typed:
\begin{equation}
\label{eqn:universal-quantifier}
 \forall_{  \textcolor{darkblue}{ r }  }( \gamma  :  A ). B  \triangleq  \textstyle (  \gamma  :_{  \textcolor{darkblue}{( \GrTTsym{0} ,  r )}  }  A  )  \to   B 
\end{equation}
This denotes that the type parameter $\gamma$ can appear freely in
$B$ described by grade $r$, but is irrelevant in the body of any corresponding
$\lambda$-abstraction. This is akin to the work of Nuyts et al. who
develop a system with several modalities for regaining parametricity
within a dependent type
theory~\cite{nuytsParametricQuantifiersDependent2017}.  Note however
that parametricity is recovered for us here as one of many possible
options coming from systematically specialising the grading.

\paragraph{Capturing Existential Types}

With the ability to capture universal
quantifier, we can similarly define existentials
(allowing, e.g., abstraction~\cite{cardelliUnderstandingTypesData1985}). We
define the existential type via a Church-encoding as follows:
\begin{equation*}
 \exists_{  \textcolor{darkblue}{ r }  }( x  :  A ). B  \triangleq  \forall_{  \textcolor{darkblue}{ \GrTTsym{2} }  }( C  :   \mathsf{Type}_{ l }  ).  \textstyle (  f  :_{  \textcolor{darkblue}{( \GrTTsym{1} ,  \GrTTsym{0} )}  }   \forall_{  \textcolor{darkblue}{ r }  }( x  :  A ).  \textstyle (  b  :_{  \textcolor{darkblue}{( s ,  \GrTTsym{0} )}  }  B  )  \to   C    )  \to   C  
\end{equation*}

\paragraph{Embedding into Stratified System F}

\newcommand{\ssf}[1]{\textsc{Ssf}(#1)}
\newcommand{\kind}[1]{\star_{#1}}
\newcommand{\ctxtOk}[1]{#1\ \textit{Ok}}

We show that parametricity is regained here
(and thus eqn.~\eqref{eqn:universal-quantifier} really behaves as a
universal quantifier and not a general $\Pi$-type) by showing
that we can embed a subset of \grtt{} into System F, based solely
on a classification of the grades. We follow a similar approach
to Section~\ref{subsec:embedding-stlc} for simply-typed reasoning
but rather than defining
a purely syntactic encoding (and then proving it type sound)
 our encoding is type directed since
we embed \grtt{} functions of type $ \textstyle (  x  :_{  \textcolor{darkblue}{( \GrTTsym{0} ,  r )}  }   \mathsf{Type}_{ l }   )  \to   B $
as universal types in System F with corresponding type abstractions
($\Lambda$) as their inhabitants. Since \grtt{} employs a predicative
hierarchy of universes, we target Stratified System F (hereafter SSF)
since it includes the analogous inductive hierarchy of
kinds~\cite{leivant1991finitely}.
We use the formulation of Eades and Stump~\cite{eades2010hereditary} with
terms $t_s$ and types $T$:
\begin{align*}
t_s \!::=\! x \mid \lambda (x : T) . t_s \mid t_s\, t_s' \mid \Lambda (X : K) . t_s \mid
  t_s \, [ T ]
\;\;\; & \;\;\,
T \!::=\! X \mid T \rightarrow T' \mid \forall (X : K) . T
\end{align*}
with kinds $K ::= \kind{l}$
where $l \in \mathbb{N}$ providing the stratified kind hierarchy.
Capitalised variables $X$ are System F type variables
and $t_s \, [ T ]$ is type application. Contexts may contain
both type and computational variables, and so free-variable
type assumptions may have dependencies, akin to dependent type systems.
Kinding is via judgments
$\Gamma \vdash T : \kind{l}$
and typing via $\Gamma \vdash t : T$.

We define a type directed encoding on a subset
of \grtt{} typing derivations characterised by
the following predicate:
{\small
\begin{align*}
& \ssf{ (  \emptyset   \mid   \emptyset   \mid   \emptyset  )   \odot   \emptyset   \vdash  t  \GrTTsym{:}  A}
\\
& \ssf{ ( \Delta  \mid  \sigma_{{\mathrm{1}}}  \mid  \sigma_{{\mathrm{2}}} )   \odot  \Gamma  \vdash  t  \GrTTsym{:}  A}
\implies
\ssf{ ( \Delta  ,   \textbf{0}   \mid  \sigma_{{\mathrm{1}}}  ,  \GrTTsym{0}  \mid  \sigma_{{\mathrm{2}}}  ,  r )   \odot  \Gamma  \GrTTsym{,}  x  \GrTTsym{:}   \mathsf{Type}_{ l }   \vdash  t  \GrTTsym{:}  A}
\\
& \ssf{ ( \Delta  \mid  \sigma_{{\mathrm{1}}}  \mid  \sigma_{{\mathrm{2}}} )   \odot  \Gamma  \vdash  t  \GrTTsym{:}  A} \wedge  \mathsf{Type}_{ l }  \not\in^{+ve}
  B \\
& \hspace{12.7em}
\implies
\ssf{ ( \Delta  ,  \sigma_{{\mathrm{3}}}  \mid  \sigma_{{\mathrm{1}}}  ,  s  \mid  \sigma_{{\mathrm{2}}}  ,  \GrTTsym{0} )   \odot  \Gamma  \GrTTsym{,}  x  \GrTTsym{:}  B  \vdash  t  \GrTTsym{:}  A}
\end{align*}}
\hspace{-0.35em}By  $ \mathsf{Type}_{ l }  \not\in^{+ve} B$ we mean
$ \mathsf{Type}_{ l } $ is not a positive subterm of $B$,
avoiding higher-order typing terms (e.g., type constructors)
which do not exist in SSF.

Under this restriction,
we give a type-directed encoding mapping
derivations of \grtt{} to SSF: given
a \grtt{} derivation of judgment
${ ( \Delta  \mid  \sigma_{{\mathrm{1}}}  \mid  \sigma_{{\mathrm{2}}} )   \odot  \Gamma  \vdash  t  \GrTTsym{:}  A}$
we have that $\exists t_s $ (an SSF term)
such that there is a derivation of judgment
$\interp{\Gamma} \vdash t_s : \interp{A}_\tau$ in SSF
where we interpret a subset of \grtt{} terms $A$ as types:
\begin{align*}
\begin{array}{rll}
\interp{x}_\tau & = x \\
\interp{ \mathsf{Type}_{ l } }_\tau & = \kind{l} \\
\interp{ \textstyle (  x  :_{  \textcolor{darkblue}{( \GrTTsym{0} ,  r )}  }   \mathsf{Type}_{ l }   )  \to   B }_\tau
& = \forall x : \kind{l} . \interp{B}_\tau
& \;\;\textit{where}\ \mathsf{Type}_{ l }  \not\in^{+ve} B \\
\interp{ \textstyle (  x  :_{  \textcolor{darkblue}{( s ,  \GrTTsym{0} )}  }  A  )  \to   B }_\tau
& = \interp{A}_\tau \rightarrow \interp{B}_\tau &
\;\;\textit{where}\  \mathsf{Type}_{ l }  \not\in^{+ve} A, B
\end{array}
\end{align*}
Thus, dependent functions with $ \mathsf{Type} $ parameters that
are computationally irrelevant (subject grade $0$)
map to $\forall$ types, and dependent functions with
parameters irrelevant in types (subject-type grade
$0$) map to regular function types. We elide the full details but sketch
key parts where functions and applications are translated
inductively (where $ \mathsf{Ty}_{ l } $ is shorthand for $ \mathsf{Type}_{ l } $):
{\small{
    \begin{align*}
      \begin{array}{rl}
\interp{\dfrac{{ ( \Delta  ,  \sigma_{{\mathrm{1}}}  \mid  \sigma_{{\mathrm{2}}}  ,  \GrTTsym{0}  \mid  \sigma_{{\mathrm{3}}}  ,  r )   \odot  \Gamma  \GrTTsym{,}  x  \GrTTsym{:}   \mathsf{Ty}_{ l }   \vdash  t  \GrTTsym{:}  B}}
{{ ( \Delta  \mid  \sigma_{{\mathrm{2}}}  \mid  \sigma_{{\mathrm{1}}}  \GrTTsym{+}  \sigma_{{\mathrm{3}}} )   \odot  \Gamma  \vdash   \lambda  x . t   \GrTTsym{:}   \textstyle (  x  :_{  \textcolor{darkblue}{( \GrTTsym{0} ,  r )}  }   \mathsf{Ty}_{ l }   )  \to   B }}}
& =
\dfrac{\interp{\Gamma}, x : \kind{l} \vdash t_s : \interp{B}_\tau}
{\interp{\Gamma} \vdash \Lambda (x : \kind{l}) . t_s : \forall x : \kind{l}
  . \interp{B}_\tau} \\[1.5em]
\multicolumn{2}{l}{\interp{
\!
\dfrac
{ ( \Delta  ,  \sigma_{{\mathrm{1}}}  \mid  \sigma_{{\mathrm{2}}}  ,  s  \mid  \sigma_{{\mathrm{3}}}  ,  \GrTTsym{0} )   \odot  \Gamma  \GrTTsym{,}  x  \GrTTsym{:}  A  \vdash  t  \GrTTsym{:}  B}
{ ( \Delta  \mid  \sigma_{{\mathrm{2}}}  \mid  \sigma_{{\mathrm{1}}}  \GrTTsym{+}  \sigma_{{\mathrm{3}}} )   \odot  \Gamma  \vdash   \lambda  x . t   \GrTTsym{:}   \textstyle (  x  :_{  \textcolor{darkblue}{( s ,  \GrTTsym{0} )}  }  A  )  \to   B }
}
\!=\!
\dfrac
 {\interp{\Gamma}, x : \interp{A}_\tau \vdash t_s : \interp{B}_\tau}
 {\interp{\Gamma} \vdash \lambda (x : \interp{A}_\tau) . t_s : \interp{A}_\tau
\rightarrow \interp{B}_\tau}} \\[1.5em]
\interp{
\dfrac{
{\begin{array}{l}
  ( \Delta  \mid  \sigma_{{\mathrm{2}}}  \mid  \sigma_{{\mathrm{1}}}  \GrTTsym{+}  \sigma_{{\mathrm{3}}} )   \odot  \Gamma  \vdash  t_{{\mathrm{1}}}  \GrTTsym{:}   \textstyle (  x  :_{  \textcolor{darkblue}{( \GrTTsym{0} ,  r )}  }   \mathsf{Ty}_{ l }   )  \to   B  \\
   ( \Delta  \mid  \sigma_{{\mathrm{4}}}  \mid  \sigma_{{\mathrm{1}}} )   \odot  \Gamma  \vdash  t_{{\mathrm{2}}}  \GrTTsym{:}   \mathsf{Ty}_{ l } 
\end{array}}}
{  ( \Delta  \mid  \sigma_{{\mathrm{2}}}  \mid  \sigma_{{\mathrm{3}}}  \GrTTsym{+}   r  \ast  \sigma_{{\mathrm{4}}}  )   \odot  \Gamma  \vdash   t_{{\mathrm{1}}} \,{ t_{{\mathrm{2}}} }   \GrTTsym{:}  \GrTTsym{[}  t_{{\mathrm{2}}}  \GrTTsym{/}  x  \GrTTsym{]}  B }
}
& =
\dfrac{
\begin{array}{l}
\interp{\Gamma} \vdash t_s : \forall (x : \kind{l}) . \interp{B}_\tau \\
\interp{\Gamma} \vdash T : \kind{l}
\end{array}
}{
{\interp{\Gamma} \vdash t_s[T] : [T/x]\interp{B}_\tau}
}
\\[1.5em]
\interp{
\dfrac{
{\begin{array}{l}
  ( \Delta  \mid  \sigma_{{\mathrm{2}}}  \mid  \sigma_{{\mathrm{1}}}  \GrTTsym{+}  \sigma_{{\mathrm{3}}} )   \odot  \Gamma  \vdash  t_{{\mathrm{1}}}  \GrTTsym{:}   \textstyle (  x  :_{  \textcolor{darkblue}{( s ,  \GrTTsym{0} )}  }  A  )  \to   B  \\
   ( \Delta  \mid  \sigma_{{\mathrm{4}}}  \mid  \sigma_{{\mathrm{1}}} )   \odot  \Gamma  \vdash  t_{{\mathrm{2}}}  \GrTTsym{:}  A
\end{array}}}
{  ( \Delta  \mid  \sigma_{{\mathrm{2}}}  \GrTTsym{+}   s  \ast  \sigma_{{\mathrm{4}}}   \mid  \sigma_{{\mathrm{3}}} )   \odot  \Gamma  \vdash   t_{{\mathrm{1}}} \,{ t_{{\mathrm{2}}} }   \GrTTsym{:}  \GrTTsym{[}  t_{{\mathrm{2}}}  \GrTTsym{/}  x  \GrTTsym{]}  B }
}
& =
\dfrac{
\begin{array}{l}
\interp{\Gamma} \vdash t_s : \interp{A}_\tau \rightarrow \interp{B}_\tau \\
\interp{\Gamma} \vdash t_s' : \interp{A}_\tau
\end{array}
}
{\interp{\Gamma} \vdash t_s\ t_s' : [t_s'/x]\interp{B}_\tau}
      \end{array}
    \end{align*}
}}
\hspace{-0.35em}In the last case, note the presence of $[t_s'/x]\interp{B}_\tau$.
Reasoning under the context of the encoding, this is proven
equivalent to $\interp{B}_\tau$ since the subject type grade is $0$
and therefore use of $x$ in $B$ is irrelevant.

\begin{theorem}[Soundness and completeness of \textsc{SSF} embedding]
Given $\ssf{ ( \Delta  \mid  \sigma_{{\mathrm{1}}}  \mid  \sigma_{{\mathrm{2}}} )   \odot  \Gamma  \vdash  t  \GrTTsym{:}  A}$
and
$t_a$ in SSF where
$\interp{ ( \Delta  \mid  \sigma_{{\mathrm{1}}}  \mid  \sigma_{{\mathrm{2}}} )   \odot  \Gamma  \vdash  t  \GrTTsym{:}  A} =
\interp{\Gamma} \vdash t_s : \interp{A}_\tau$
then for CBN reduction $\leadsto^{\textsc{Ssf}}$ in
Stratified System F:
\begin{align*}
\begin{array}{rrll}
\textit{(soundness)} & \forall t' . & t \leadsto t' \implies
\exists t_s' . t_s \leadsto^{\textsc{SSF}} t_s' \\
& & \hspace{5em} \wedge \ \interp{ ( \Delta  \mid  \sigma_{{\mathrm{1}}}  \mid  \sigma_{{\mathrm{2}}} )   \odot  \Gamma  \vdash  t'  \GrTTsym{:}  A} = \interp{\Gamma} \vdash t_s' : \interp{A}_\tau   \\
\textit{(completeness)} & \forall t_s'. & t_s \leadsto^{\textsc{Ssf}} t_s'
\implies \exists t' . t \leadsto t'\ \\
& & \hspace{5em} \wedge\ \interp{ ( \Delta  \mid  \sigma_{{\mathrm{1}}}  \mid  \sigma_{{\mathrm{2}}} )   \odot  \Gamma  \vdash  t'  \GrTTsym{:}  A} = \interp{\Gamma} \vdash t_s' : \interp{A}_\tau
\end{array}
\end{align*}
\end{theorem}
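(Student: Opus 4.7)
The plan is to prove both directions by induction on the reduction relation $\leadsto$ (for soundness) and $\leadsto^{\textsc{Ssf}}$ (for completeness), with the induction organized over the shape of the typing/encoding derivation for the redex being contracted. Because the encoding is type-directed, the first step is to establish that the predicate $\ssf{\cdot}$ is preserved under reduction: if $\ssf{(\Delta \mid \sigma_1 \mid \sigma_2) \odot \Gamma \vdash t : A}$ and $t \leadsto t'$, then $\ssf{(\Delta \mid \sigma_1 \mid \sigma_2) \odot \Gamma \vdash t' : A}$. This relies on subject reduction for \grtt{} (Section~\ref{sec:metatheory}) together with the observation that reduction never introduces new positive occurrences of $\mathsf{Type}_l$ nor changes the pattern of subject/subject-type grades on surviving binders.

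Next I would prove a \emph{compositionality} lemma for the encoding with respect to substitution, in two flavours matching the two ways binders are encoded. First, for computational binders (subject-type grade $0$): if the derivations of $t_1$ and $t_2$ encode to $t_s$ and $t_s'$ respectively, then the derivation of $[t_2/x]t_1$ encodes to $[t_s'/x]t_s$, and crucially $[t_2/x]\interp{B}_\tau = \interp{B}_\tau$ because the $0$ subject-type grade together with quantitativity means $x$ does not occur in $B$. Second, for type binders (subject grade $0$, parameter $\mathsf{Type}_l$): if $t$ encodes to $t_s$ and $t_2$ encodes at the type level to $T$, then $[t_2/x]t$ encodes to $t_s[T]$, and $[t_2/x]\interp{B}_\tau = [T/x]\interp{B}_\tau$ by a straightforward structural induction on $B$ (respecting the $\not\in^{+ve}$ side-condition). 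These are exactly the substitution identities used implicitly in the last two cases of the sketched encoding.

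With these lemmas in hand, soundness proceeds by case analysis on $t \leadsto t'$. The only base cases are the three $\beta$-rules: $\GrTTdruleSemXXBetaFunName{}$, $\GrTTdruleSemXXBetaTenName{}$, and $\GrTTdruleSemXXBetaBoxName{}$. The tensor and box $\beta$-rules are vacuous: the $\ssf{\cdot}$ predicate excludes tensors and graded modalities (there is no encoding clause for them, and they cannot appear positively in types under the restriction), so no such redex can occur in the image of the encoding. For $\GrTTdruleSemXXBetaFunName{}$, the derivation splits into the two encoded shapes: if the binder's subject-type grade is $0$, the step matches SSF term $\beta$ via the first substitution lemma; if the binder's subject grade is $0$ on a $\mathsf{Type}_l$ parameter, the step matches SSF type $\beta$ via the second substitution lemma. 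The congruence rules propagate by the inductive hypothesis, using subject reduction to keep the encoding well-formed. Completeness is symmetric: each SSF reduction either a term $\beta$ or type $\beta$ redex, and by inspecting the encoding clauses one sees such an SSF redex can only arise from a corresponding \grtt{} $\beta$-redex under $\GrTTdruleSemXXBetaFunName{}$, which we contract to obtain $t'$ with the matching encoding.

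\paragraph{Main obstacle.}
The hard part will be the substitution compositionality lemma for the type-parameter case, specifically justifying that $[t_2/x]\interp{B}_\tau$ equals $[T/x]\interp{B}_\tau$ where $T$ is the type-level encoding of $t_2$. This requires a separate type-level encoding judgment (rather than just the term encoding given) and a proof that the two agree whenever $t_2$ is typed as $\mathsf{Type}_l$ under an $\ssf{\cdot}$-restricted derivation. Establishing that this type-level encoding is total on the relevant fragment, and stable under the $\beta\eta$ equality used silently in the application cases of the encoding, is the delicate piece; everything else reduces to bookkeeping on grades and routine induction.
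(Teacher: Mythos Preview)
Your decomposition—preservation of the $\textsc{Ssf}$ predicate under reduction, a two-flavoured substitution-commutation lemma for the encoding, and then case analysis on $\beta$-redexes matching the two encoded application forms—is the standard structure for such a simulation and is what the paper's setup anticipates (the paper defers the detailed argument to its external supplement, so a line-by-line comparison from the main text is not possible, but nothing in your plan departs from the expected route).

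Two small corrections that do not affect the overall plan. First, in the type-binder substitution case you write that $[t_2/x]t$ encodes to $t_s[T]$; you mean $[T/x]t_s$ (type substitution in SSF, not type application). Second, your justification for the vacuousness of the tensor and box $\beta$-cases is slightly misattributed: the $\textsc{Ssf}$ predicate constrains only the \emph{context} shape, not the term. The exclusion of those redexes actually comes from the theorem's hypothesis that the type-directed encoding $\interp{\cdot}$ is defined on the derivation, and the encoding simply has no clauses for tensor or box constructs; you already note this parenthetically, so just make it the primary reason. Finally, your identified obstacle is the right one, and the paper itself flags it in the remark after the last application clause: the equality $[t_s'/x]\interp{B}_\tau = \interp{B}_\tau$ when the subject-type grade is $0$ is semantic irrelevance in a quantitative semiring rather than syntactic absence of $x$ in $B$, so that lemma genuinely relies on the positivity and zero-product axioms and is not just bookkeeping.
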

\noindent
Thus, we can capture parametricity in \grtt{} via
the judicious use of $0$ grading (at either the type or
computational level) for quantitative
semirings.
This embedding is not possible from \qtt{}
since \qtt{} variables graded with $0$ may be used arbitrarily
in the types; the embedding here relies on
\grtt{}'s $0$ type-grade capturing abscence in types
for quantitative semirings.

\subsection{Graded Modal Types and Non-dependent Linear Types}
\grtt{} can embed the reasoning present in
other graded modal type theories (which often have a linear base), for example
the explicit semiring-graded necessity modality found in
coeffect calculi~\cite{gaboardi2014,gaboardi2016combining} and
Granule~\cite{orchard2019quantitative}.
We can recover the axioms of a graded necessity modality (usually
modelled by an exponential graded
comonad~\cite{gaboardi2016combining}).
For example, in \implName{} the following are well typed:
\begin{gerty}
counit : (a : (.0, .2) Type) -> (z : (.1 , .0) [.1] a) -> a
counit = \a z -> case z of [y] -> y
comult : (a : (.0, .2) Type) -> (z : (.1 , .0) [.6] a) -> [.2] ([.3] a)
comult = \a z -> case z of [y] -> [([y])]
\end{gerty}
corresponding to $\varepsilon : \Box_1 A \rightarrow A$ and
$\delta_{r,s} : \Box_{r * s} A \rightarrow \Box_r (\Box_s A)$:
operations of graded necessity / graded comonads.  Since we cannot use
arbitrary terms for grades in the implementation, we have picked some
particular grades here for \gertyin{comult}.  First-class grading
is future work, discussed in Section~\ref{sec:discussion}.

Linear functions can be captured as $A \multimap B \triangleq
 \textstyle (  x  :_{  \textcolor{darkblue}{( \GrTTsym{1} ,  r )}  }  A  )  \to   B $ for an exact usage
semiring. It is straightforward to characterise a subset of
\grtt{} programs that maps to the linear $\lambda$-calculus
akin to the encodings above.
Thus, \grtt{} provides a suitable basis
for studying both linear and non-linear theories alike.

\section{Metatheory}\label{sec:metatheory}

We now study \grtt{}'s metatheory. We first explain how
substitution presents itself in the theory, and how type preservation
follows from a relationship between equality and reduction. We then
show admissibility of graded structural rules for contraction,
exchange, and weakening, and strong normalization.

\subsection{Substitution}\label{grtt--substitution}
\noindent
We introducing substitution for well-formed contexts and then typing.
\begin{glemma}{substitutionWf}{Substitution for well-formed contexts}
If the following hold:
  \begin{enumerate}
    \item $ ( \Delta  \mid  \sigma_{{\mathrm{2}}}  \mid  \sigma_{{\mathrm{1}}} )   \odot  \Gamma_{{\mathrm{1}}}  \vdash  t  \GrTTsym{:}  A$
    \textit{\hspace{1em}  and \hspace{1em} 2.}\;
    $\GrTTsym{(}  \Delta  ,  \sigma_{{\mathrm{1}}}  ,  \Delta'  \GrTTsym{)}  \odot  \Gamma_{{\mathrm{1}}}  \GrTTsym{,}  x  \GrTTsym{:}  A  \GrTTsym{,}  \Gamma_{{\mathrm{2}}}  \vdash$
  \end{enumerate}
Then: $\Delta  ,  \GrTTsym{(}    \Delta' \backslash  \left|  \Delta  \right|    +    \GrTTsym{(}   \Delta' /  \left|  \Delta  \right|    \GrTTsym{)}  \ast  \sigma_{{\mathrm{2}}}     \GrTTsym{)}  \odot  \Gamma_{{\mathrm{1}}}  \GrTTsym{,}  \GrTTsym{[}  t  \GrTTsym{/}  x  \GrTTsym{]}  \Gamma_{{\mathrm{2}}}  \vdash$
\end{glemma}
\noindent
That is, given $\Gamma_{{\mathrm{1}}}  \GrTTsym{,}  x  \GrTTsym{:}  A  \GrTTsym{,}  \Gamma_{{\mathrm{2}}}$ is well-formed, we can
cut out $x$ by substituting $t$ for $x$ in $\Gamma_{{\mathrm{2}}}$,
accounting for the new usage in the context grade vectors.
The usage of $\Gamma_{{\mathrm{1}}}$ in $t$ is given by
$\sigma_{{\mathrm{2}}}$, and the usage in $A$ by $\sigma_{{\mathrm{1}}}$. When substituting,
 $\Delta$ remains the same, as $\Gamma_{{\mathrm{1}}}$ is
unchanged. However, to account for the usage in
$\GrTTsym{[}  t  \GrTTsym{/}  x  \GrTTsym{]}  \Gamma_{{\mathrm{2}}}$, we have to form a new context grade vector
$  \Delta' \backslash  \left|  \Delta  \right|    +    \GrTTsym{(}   \Delta' /  \left|  \Delta  \right|    \GrTTsym{)}  \ast  \sigma_{{\mathrm{2}}}   $.

The operation $ \Delta' \backslash  \left|  \Delta  \right|  $ (pronounced `discard')
removes grades corresponding to $x$, by removing the grade
at index $ \left|  \Delta  \right| $ from each grade vector in
$\Delta'$. Everything previously used in the typing of $x$
in the context must now be distributed across $\GrTTsym{[}  t  \GrTTsym{/}  x  \GrTTsym{]}  \Gamma_{{\mathrm{2}}}$,
which is done by adding on $ \GrTTsym{(}   \Delta' /  \left|  \Delta  \right|    \GrTTsym{)}  \ast  \sigma_{{\mathrm{2}}} $, which
uses $ \Delta' /  \left|  \Delta  \right|  $ (pronounced `choose') to produce a
vector of grades, which correspond to the grades cut out in
$ \Delta' \backslash  \left|  \Delta  \right|  $. The multiplication of $ \GrTTsym{(}   \Delta' /  \left|  \Delta  \right|    \GrTTsym{)}  \ast  \sigma_{{\mathrm{2}}} $
produces a context grade vector by scaling
$\sigma_{{\mathrm{2}}}$ by each element of $\GrTTsym{(}   \Delta' /  \left|  \Delta  \right|    \GrTTsym{)}$. When adding
vectors, if the sizes of the vectors are different, then the
shorter vector is right-padded with zeroes. Thus
$  \Delta' \backslash  \left|  \Delta  \right|    +    \GrTTsym{(}   \Delta' /  \left|  \Delta  \right|    \GrTTsym{)}  \ast  \sigma_{{\mathrm{2}}}   $ can be read
as \emph{`$\Delta'$ without the grades corresponding to
  $x$, plus the usage of $t$ scaled by the prior usage
  of $x$'}.

For example, given typing
$ (   \GrTTsym{(}  \GrTTsym{)}  ,  \GrTTsym{(}  \GrTTsym{1}  \GrTTsym{)}    \mid    \GrTTsym{0}  ,  \GrTTsym{1}    \mid    \GrTTsym{1}  ,  \GrTTsym{0}   )   \odot  a  \GrTTsym{:}   \mathsf{Type}   \GrTTsym{,}  y  \GrTTsym{:}  a  \vdash  y  \GrTTsym{:}  a$ and well-formed
context $\GrTTsym{(}  \GrTTsym{(}  \GrTTsym{)}  ,  \GrTTsym{(}  \GrTTsym{1}  \GrTTsym{)}  ,  \GrTTsym{(}  \GrTTsym{1}  ,  \GrTTsym{0}  \GrTTsym{)}  ,  \GrTTsym{(}  \GrTTsym{0}  ,  \GrTTsym{0}  ,  \GrTTsym{2}  \GrTTsym{)}  \GrTTsym{)}  \odot  a  \GrTTsym{:}   \mathsf{Type}   \GrTTsym{,}  y  \GrTTsym{:}  a  \GrTTsym{,}  x  \GrTTsym{:}  a  \GrTTsym{,}  z  \GrTTsym{:}  t'  \vdash$, where $t'$ uses $x$ twice, we can substitute $y$ for
$x$. Therefore, let $\Gamma_{{\mathrm{1}}} = a  \GrTTsym{:}   \mathsf{Type}   \GrTTsym{,}  y  \GrTTsym{:}  a$
thus $ \left|  \Gamma_{{\mathrm{1}}}  \right|  = 2$ and $\Gamma_{{\mathrm{2}}} = z  \GrTTsym{:}  x$ and $\Delta' = \GrTTsym{(}  \GrTTsym{(}  \GrTTsym{0}  ,  \GrTTsym{0}  ,  \GrTTsym{2}  \GrTTsym{)}  \GrTTsym{)}$ and $\sigma_{{\mathrm{1}}} =   \GrTTsym{1}  ,  \GrTTsym{0}  $ and $\sigma_{{\mathrm{2}}} =   \GrTTsym{0}  ,  \GrTTsym{1}  $.
Then the context grade of the substitution $\GrTTsym{[}  y  \GrTTsym{/}  x  \GrTTsym{]}  \Gamma_{{\mathrm{2}}}$ is calculated as:
\begin{align*}
 \GrTTsym{(}  \GrTTsym{(}  \GrTTsym{0}  ,  \GrTTsym{0}  ,  \GrTTsym{2}  \GrTTsym{)}  \GrTTsym{)} \backslash  \left|  \Gamma_{{\mathrm{1}}}  \right|   = \GrTTsym{(}  \GrTTsym{(}  \GrTTsym{0}  ,  \GrTTsym{0}  \GrTTsym{)}  \GrTTsym{)} \qquad\quad
 \GrTTsym{(}   \GrTTsym{(}  \GrTTsym{(}  \GrTTsym{0}  ,  \GrTTsym{1}  ,  \GrTTsym{2}  \GrTTsym{)}  \GrTTsym{)} /  \left|  \Gamma_{{\mathrm{1}}}  \right|    \GrTTsym{)}  \ast  \sigma_{{\mathrm{2}}}  =  \GrTTsym{(}  \GrTTsym{2}  \GrTTsym{)}  \ast  \GrTTsym{(}  \GrTTsym{0}  ,  \GrTTsym{1}  \GrTTsym{)}  = \GrTTsym{(}  \GrTTsym{(}  \GrTTsym{0}  ,  \GrTTsym{2}  \GrTTsym{)}  \GrTTsym{)}
\end{align*}
Thus the resulting judgment is $\GrTTsym{(}  \GrTTsym{(}  \GrTTsym{)}  ,  \GrTTsym{(}  \GrTTsym{1}  \GrTTsym{)}  ,  \GrTTsym{(}  \GrTTsym{0}  ,  \GrTTsym{2}  \GrTTsym{)}  \GrTTsym{)}  \odot  a  \GrTTsym{:}   \mathsf{Type}   \GrTTsym{,}  y  \GrTTsym{:}  a  \GrTTsym{,}  z  \GrTTsym{:}  \GrTTsym{[}  y  \GrTTsym{/}  x  \GrTTsym{]}  t'  \vdash$.
\begin{glemma}{substitutionForTyping}{Substitution for typing}
  If the following premises hold:
  \begin{enumerate}
    \item $ ( \Delta  \mid  \sigma_{{\mathrm{2}}}  \mid  \sigma_{{\mathrm{1}}} )   \odot  \Gamma_{{\mathrm{1}}}  \vdash  t  \GrTTsym{:}  A$
    \item $ ( \Delta  ,  \sigma_{{\mathrm{1}}}  ,  \Delta'  \mid  \sigma_{{\mathrm{3}}}  ,  s  ,  \sigma_{{\mathrm{4}}}  \mid  \sigma_{{\mathrm{5}}}  ,  r  ,  \sigma_{{\mathrm{6}}} )   \odot  \Gamma_{{\mathrm{1}}}  \GrTTsym{,}  x  \GrTTsym{:}  A  \GrTTsym{,}  \Gamma_{{\mathrm{2}}}  \vdash  t'  \GrTTsym{:}  B$
    \item $ \left|  \sigma_{{\mathrm{3}}}  \right|   \GrTTsym{=}   \left|  \sigma_{{\mathrm{5}}}  \right|   \GrTTsym{=}   \left|  \Gamma_{{\mathrm{1}}}  \right| $
  \end{enumerate}
  Then
  $ \left(\begin{smallmatrix} \Delta  ,  \GrTTsym{(}    \Delta' \backslash  \left|  \Delta  \right|    +    \GrTTsym{(}   \Delta' /  \left|  \Delta  \right|    \GrTTsym{)}  \ast  \sigma_{{\mathrm{2}}}     \GrTTsym{)}  \\  \GrTTsym{(}  \sigma_{{\mathrm{3}}}  \GrTTsym{+}   s  \ast  \sigma_{{\mathrm{2}}}   \GrTTsym{)}  ,  \sigma_{{\mathrm{4}}}  \\  \GrTTsym{(}  \sigma_{{\mathrm{5}}}  \GrTTsym{+}   r  \ast  \sigma_{{\mathrm{2}}}   \GrTTsym{)}  ,  \sigma_{{\mathrm{6}}}  \end{smallmatrix}\right)   \odot  \Gamma_{{\mathrm{1}}}  \GrTTsym{,}  \GrTTsym{[}  t  \GrTTsym{/}  x  \GrTTsym{]}  \Gamma_{{\mathrm{2}}}  \vdash  \GrTTsym{[}  t  \GrTTsym{/}  x  \GrTTsym{]}  t'  \GrTTsym{:}  \GrTTsym{[}  t  \GrTTsym{/}  x  \GrTTsym{]}  B$.
\end{glemma}
\noindent
As with substitution for well-formed contexts, we account for
the replacement of $x$ with $t$ in $\Gamma_{{\mathrm{2}}}$ by `cutting
out' $x$ from the context grade vectors, and adding on the
grades required to form $t$, scaled by the grades that
described $x$'s usage.  We additionally must account for the
altered subject and subject-type usage. We do this in a
similar manner, by taking, for example, the usage of $\Gamma_{{\mathrm{1}}}$
in the subject ($\sigma_{{\mathrm{3}}}$), and adding on the grades required
to form $t$, scaled by the grade with which $x$ was
previously used ($s$). Subject-type grades are calculated
similarly.

\subsection{Type Preservation}\label{grtt--type-preservation}

\begin{lemma}{Reduction implies equality}
\label{lemma:redImplEq}
  If $ ( \Delta  \mid  \sigma_{{\mathrm{1}}}  \mid  \sigma_{{\mathrm{2}}} )   \odot  \Gamma  \vdash  t_{{\mathrm{1}}}  \GrTTsym{:}  A$ and $ t_{{\mathrm{1}}}  \leadsto  t_{{\mathrm{2}}} $, then
  $ ( \Delta  \mid  \sigma_{{\mathrm{1}}}  \mid  \sigma_{{\mathrm{2}}} )   \odot  \Gamma  \vdash  t_{{\mathrm{1}}}  \GrTTsym{=}  t_{{\mathrm{2}}}  \GrTTsym{:}  A$.
\end{lemma}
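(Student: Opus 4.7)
The plan is to proceed by induction on the derivation of $ t_{{\mathrm{1}}}  \leadsto  t_{{\mathrm{2}}} $, performing case analysis on the final reduction rule used. For every reduction rule there is a matching term-equality rule (either a computation rule for $\beta$-reductions or a congruence rule for the contextual reductions), so the proof reduces to showing that, from a typing derivation of the redex, we can assemble the premises of the corresponding equality rule with exactly the same context and grading $\GrTTsym{(}  \Delta  \mid  \sigma_{{\mathrm{1}}}  \mid  \sigma_{{\mathrm{2}}}  \GrTTsym{)}$. The primary tool throughout is inversion on the typing derivation of $t_{{\mathrm{1}}}$, supplemented by the admissibility of type conversion for equalities (\GrTTdruleTEQXXConvTyName{}) to reconcile the extra uses of \GrTTdruleTXXTyConvName{} that inversion may expose.

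For the three $\beta$-cases (\GrTTdruleSemXXBetaFunName{}, \GrTTdruleSemXXBetaTenName{}, \GrTTdruleSemXXBetaBoxName{}) the strategy is essentially read-off. Taking \GrTTdruleSemXXBetaFunName{} as a representative, inverting \GrTTdruleTXXAppName{} on the typing of $\GrTTsym{(}   \lambda  x . t_{{\mathrm{1}}}   \GrTTsym{)} \,{ t_{{\mathrm{2}}} } $ yields typings of $\lambda x. t_{{\mathrm{1}}}$ and $t_{{\mathrm{2}}}$ with premises matching the shape demanded by \GrTTdruleTEQXXArrowCompName{}; inverting \GrTTdruleTXXFunName{} on the $\lambda$ supplies the body typing $ ( \Delta  ,  \sigma_{{\mathrm{1}}}  \mid  \sigma_{{\mathrm{2}}}  ,  s  \mid  \sigma_{{\mathrm{3}}}  ,  r )   \odot  \Gamma  \GrTTsym{,}  x  \GrTTsym{:}  A  \vdash  t_{{\mathrm{1}}}  \GrTTsym{:}  B$, which is precisely the first premise required by \GrTTdruleTEQXXArrowCompName{}. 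The conclusion of \GrTTdruleTEQXXArrowCompName{} then carries the identical grade $ ( \Delta  \mid  \sigma_{{\mathrm{2}}}  \GrTTsym{+}   s  \ast  \sigma_{{\mathrm{4}}}   \mid  \sigma_{{\mathrm{3}}}  \GrTTsym{+}   r  \ast  \sigma_{{\mathrm{4}}}  ) $ as \GrTTdruleTXXAppName{}, so no rebalancing is needed. The tensor and box $\beta$-cases proceed analogously using \GrTTdruleTEQXXTenCompName{} and \GrTTdruleTEQXXBoxBName{} after inversion on \GrTTdruleTXXTenCutName{}/\GrTTdruleTXXPairName{} and \GrTTdruleTXXBoxEName{}/\GrTTdruleTXXBoxIName{} respectively.

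For the congruence cases the pattern is uniform: invert the typing of $t_{{\mathrm{1}}}$ to isolate the typing of the sub-term in the reduction position, apply the induction hypothesis to obtain an equality on that sub-term, and then lift using the appropriate congruence of term equality, using \GrTTdruleTEQXXReflName{} on unchanged sub-terms. For example, \GrTTdruleSemXXCongFunOneName{} lifts via \GrTTdruleTEQXXAppName{} with reflexivity on the argument; \GrTTdruleSemXXCongPairName{} via \GrTTdruleTEQXXPairName{} with reflexivity on the second component; \GrTTdruleSemXXCongArrowOneName{} via \GrTTdruleTEQXXArrowName{} with the codomain unchanged; and similarly for the tensor, box, and let-congruence rules. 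Because the congruence rules of term equality share the same grade-combination operations as their typing counterparts, the resulting equality has the required context and grading.

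The main obstacle is the interaction with \GrTTdruleTXXTyConvName{}: a typing derivation of $t_{{\mathrm{1}}}$ may conclude with any number of conversion steps, so the literal inversion will not always land on the expected introduction/elimination rule directly. I plan to handle this by first establishing a standard generation/inversion lemma for each term former, stating that a typing of, say, $ \lambda  x . t $ at any type $A'$ forces $A' \leq  \textstyle (  x  :_{  \textcolor{darkblue}{( s ,  r )}  }  A  )  \to   B $ for some $A, B, s, r$ together with the expected body typing. Then in each induction case we recover the clean premises after inversion, assemble the term-equality derivation at the inferred type, and finally close with \GrTTdruleTEQXXConvTyName{} to reach the ambient type $A$. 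A secondary concern is book-keeping: ensuring the grade vectors produced by the various premises of the equality rules match those obtained from inverting the typing derivation. This is routine because the typing rules and their equality counterparts are structurally identical on grades, but it requires care in the tensor elimination and box elimination cases where several grade vectors are scaled and contracted simultaneously.
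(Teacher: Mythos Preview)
Your proposal is correct and follows the standard approach one expects for this lemma: induction on the reduction derivation, using the computation equality rules (\GrTTdruleTEQXXArrowCompName{}, \GrTTdruleTEQXXTenCompName{}, \GrTTdruleTEQXXBoxBName{}) for the $\beta$-cases and the congruence equality rules together with \GrTTdruleTEQXXReflName{} for the contextual cases, with generation lemmas to peel off uses of \GrTTdruleTXXTyConvName{} before applying \GrTTdruleTEQXXConvTyName{} at the end. This is essentially the same route the paper takes (the detailed case analysis is deferred to the supplement), and your identification of the conversion-handling and grade-bookkeeping as the only nontrivial obligations is accurate.
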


\begin{lemma}{Equality inversion}
\label{lemma:eqInv}
  If $ ( \Delta  \mid  \sigma_{{\mathrm{1}}}  \mid  \sigma_{{\mathrm{2}}} )   \odot  \Gamma  \vdash  t_{{\mathrm{1}}}  \GrTTsym{=}  t_{{\mathrm{2}}}  \GrTTsym{:}  A$, then
  $ ( \Delta  \mid  \sigma_{{\mathrm{1}}}  \mid  \sigma_{{\mathrm{2}}} )   \odot  \Gamma  \vdash  t_{{\mathrm{1}}}  \GrTTsym{:}  A$ and
  $ ( \Delta  \mid  \sigma_{{\mathrm{1}}}  \mid  \sigma_{{\mathrm{2}}} )   \odot  \Gamma  \vdash  t_{{\mathrm{2}}}  \GrTTsym{:}  A$.
\end{lemma}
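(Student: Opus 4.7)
The plan is to proceed by induction on the derivation of the equality judgment, case-analyzing the final rule applied. The structural cases are straightforward: for TEQ\_Refl the typing of $t_1$ and $t_2$ (which coincide) is supplied directly by the premise; TEQ\_Trans combines two uses of the induction hypothesis to obtain both typings; TEQ\_Sym simply swaps them; and for TEQ\_ConvTy we apply the induction hypothesis and then reapply T\_TyConv using the subtyping premise to move both sides from type $A$ to type $B$.

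The congruence cases (TEQ\_Arrow, TEQ\_Fun, TEQ\_App, TEQ\_Ten, TEQ\_Pair, TEQ\_TenCut, TEQ\_Box, TEQ\_BoxI, TEQ\_BoxE) are resolved by applying the induction hypothesis to each equality premise to extract typing derivations for both halves, then reassembling them with the analogous typing rule. These cases crucially also need a context-conversion admissibility: for instance, TEQ\_Arrow gives an equality $A=C$ at the type level and $B=D$ under context $\Gamma, x:A$, yet the conclusion asks us to form $(x :_{(s,r)} C) \to D$, which requires $D$ well-typed under $\Gamma, x:C$. We obtain this by deriving an admissible rule that replaces an assumption $x:A$ by $x:C$ when $A$ and $C$ are equal at some type (and hence mutually subtypes), which itself reduces to Lemma~\ref{lemma:substitutionForTyping} composed with T\_TyConv.

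The beta/computation cases (TEQ\_ArrowComp, TEQ\_TenComp, TEQ\_BoxB) split into two subgoals. The redex is typed by combining the corresponding introduction and elimination rules (e.g., T\_App applied to T\_Fun for the function beta), with the grade arithmetic matching the conclusion of the equality rule. The reduct, being a substitution instance of the body given in the premise, is typed by Lemma~\ref{lemma:substitutionForTyping}; crucially, the grades produced by that lemma (namely $\sigma_2 + s * \sigma_4$ and $\sigma_3 + r * \sigma_4$ in the arrow case) coincide with those in the conclusion of TEQ\_ArrowComp. The eta-rules (TEQ\_ArrowUniq, TEQ\_TenU, TEQ\_BoxU) are handled similarly: the LHS typing comes from the premise directly, and the RHS requires constructing a derivation of the expansion (e.g., $\lambda x . (t\,x)$), using T\_Var to introduce $x$ and then T\_Fun and T\_App to reassemble the term with matching grades.

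The principal obstacle is the binder-carrying congruence cases, where the context-conversion step must preserve grades exactly while swapping equal types in a hypothesis. A secondary obstacle is verifying, across every beta and eta case, that the chain of typing rules used to reconstruct each side produces precisely the grade vectors $\sigma_1$ and $\sigma_2$ demanded by the equality judgment's conclusion---this is purely semiring arithmetic but is tedious and error-prone. Both obstacles ultimately rest on substitution (Lemma~\ref{lemma:substitutionForTyping}) and presuppositional well-formedness (Lemma~\ref{lemma:wfFromTyping}), both of which are already in hand.
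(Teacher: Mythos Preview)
Your proposal is correct and follows the standard approach that the paper takes: induction on the equality derivation, with the computation/uniqueness cases discharged via the substitution lemma (Lemma~\ref{lemma:substitutionForTyping}) and the binder congruences via an admissible context-conversion principle. The obstacles you flag---grade arithmetic in the $\beta/\eta$ cases and context conversion for TEQ\_Arrow/TEQ\_Ten---are exactly the places where the proof requires care, and your reductions to substitution and well-formedness are the right tools.
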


\begin{lemma}{Type preservation}
\label{lemma:typePres}
  If $ ( \Delta  \mid  \sigma_{{\mathrm{1}}}  \mid  \sigma_{{\mathrm{2}}} )   \odot  \Gamma  \vdash  t  \GrTTsym{:}  A$ and $ t  \leadsto  t' $, then
  $ ( \Delta  \mid  \sigma_{{\mathrm{1}}}  \mid  \sigma_{{\mathrm{2}}} )   \odot  \Gamma  \vdash  t'  \GrTTsym{:}  A$.
\end{lemma}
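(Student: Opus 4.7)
The plan is to observe that type preservation follows essentially immediately from the two preceding lemmas, so the proof is a short compositional argument rather than an induction in its own right. The heavy lifting has already been done: \refLemma{redImplEq} turns a reduction step into a term-equality derivation at the original type, and \refLemma{eqInv} extracts typing of both sides of an equality.

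Concretely, I would proceed as follows. Assume the two hypotheses $ ( \Delta  \mid  \sigma_{{\mathrm{1}}}  \mid  \sigma_{{\mathrm{2}}} )   \odot  \Gamma  \vdash  t  \GrTTsym{:}  A$ and $ t  \leadsto  t' $. First, apply \refLemma{redImplEq} to these two facts to obtain the term-equality judgment $ ( \Delta  \mid  \sigma_{{\mathrm{1}}}  \mid  \sigma_{{\mathrm{2}}} )   \odot  \Gamma  \vdash  t  \GrTTsym{=}  t'  \GrTTsym{:}  A$, preserving exactly the same context, grading data, and type. Then apply \refLemma{eqInv} to this equality and take the second conjunct of its conclusion, giving $ ( \Delta  \mid  \sigma_{{\mathrm{1}}}  \mid  \sigma_{{\mathrm{2}}} )   \odot  \Gamma  \vdash  t'  \GrTTsym{:}  A$, which is exactly what we need.

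Since both appealed-to lemmas preserve the grading information ($\Delta$, $\sigma_{{\mathrm{1}}}$, $\sigma_{{\mathrm{2}}}$) and the type $A$ verbatim, there is no bookkeeping to do for grades in this step. The main obstacle, and where the real proof work lives, is therefore \emph{not} in this lemma but rather in proving \refLemma{redImplEq}: one must show that each $\beta$-rule (for functions, tensors, and the graded modality) and each congruence rule is admissible as a term-equality rule with the same grading vectors. In particular, the $\beta$-rules rely on the substitution lemmas (\refLemma{substitutionForTyping} and its equational counterpart), so that when a redex like $ \GrTTsym{(}   \lambda  x . t_{{\mathrm{1}}}   \GrTTsym{)} \,{ t_{{\mathrm{2}}} } $ reduces to $\GrTTsym{[}  t_{{\mathrm{2}}}  \GrTTsym{/}  x  \GrTTsym{]}  t_{{\mathrm{1}}}$, the resulting grading $\sigma_{{\mathrm{2}}}  \GrTTsym{+}   s  \ast  \sigma_{{\mathrm{4}}} $ / $\sigma_{{\mathrm{3}}}  \GrTTsym{+}   r  \ast  \sigma_{{\mathrm{4}}} $ delivered by \GrTTdruleTEQXXArrowCompName{} aligns with the grading that $\GrTTdruleTXXAppName{}$ assigned to the redex. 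Here the matching between these two rules is by construction, so no further algebraic manipulation is needed. Type preservation itself is then just the two-step composition above.
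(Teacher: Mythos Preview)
Your proposal is correct and matches the paper's proof essentially verbatim: apply \refLemma{redImplEq} to obtain the equality judgment, then apply \refLemma{eqInv} and take the typing of $t'$. Your additional remarks about the real work residing in \refLemma{redImplEq} (and its reliance on substitution) are accurate context but not needed for this lemma itself.
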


\begin{proof}
  By Lemma~\ref{lemma:redImplEq} we have
  $ ( \Delta  \mid  \sigma_{{\mathrm{1}}}  \mid  \sigma_{{\mathrm{2}}} )   \odot  \Gamma  \vdash  t  \GrTTsym{=}  t'  \GrTTsym{:}  A$, and therefore by
  Lemma~\ref{lemma:eqInv} we have $ ( \Delta  \mid  \sigma_{{\mathrm{1}}}  \mid  \sigma_{{\mathrm{2}}} )   \odot  \Gamma  \vdash  t'  \GrTTsym{:}  A$, as
  required.
\end{proof}

\subsection{Structural Rules}\label{grtt--structural-rules}

We now consider the structural rules of \emph{contraction},
\emph{exchange}, and \emph{weakening}.

\begin{glemma}{contractionAdmissible}{Contraction}
  The following rule is admissible:
  \begin{align*}
    \inferrule*[right=\textsc{Contr}]{
       { \left(\begin{smallmatrix} \Delta_{{\mathrm{1}}}  ,  \sigma_{{\mathrm{1}}}  ,  \GrTTsym{(}  \sigma_{{\mathrm{1}}}  ,  \GrTTsym{0}  \GrTTsym{)}  ,  \Delta_{{\mathrm{2}}}  \\  \sigma_{{\mathrm{2}}}  ,  s_{{\mathrm{1}}}  ,  s_{{\mathrm{2}}}  ,  \sigma_{{\mathrm{3}}}  \\  \sigma_{{\mathrm{4}}}  ,  r_{{\mathrm{1}}}  ,  r_{{\mathrm{2}}}  ,  \sigma_{{\mathrm{5}}}  \end{smallmatrix}\right)   \odot  \Gamma_{{\mathrm{1}}}  \GrTTsym{,}  x  \GrTTsym{:}  A  \GrTTsym{,}  y  \GrTTsym{:}  A  \GrTTsym{,}  \Gamma_{{\mathrm{2}}}  \vdash  t  \GrTTsym{:}  B}
       \quad
       { \left|  \Delta_{{\mathrm{1}}}  \right|   \GrTTsym{=}   \left|  \sigma_{{\mathrm{2}}}  \right|   \GrTTsym{=}   \left|  \sigma_{{\mathrm{4}}}  \right|   \GrTTsym{=}   \left|  \Gamma_{{\mathrm{1}}}  \right| }
    }{ { \left(\begin{smallmatrix} \Delta_{{\mathrm{1}}}  ,  \sigma_{{\mathrm{1}}}  ,   \mathsf{contr}(  \left|  \Delta_{{\mathrm{1}}}  \right|  ;  \Delta_{{\mathrm{2}}} )   \\  \sigma_{{\mathrm{2}}}  ,  \GrTTsym{(}  s_{{\mathrm{1}}}  \GrTTsym{+}  s_{{\mathrm{2}}}  \GrTTsym{)}  ,  \sigma_{{\mathrm{3}}}  \\  \sigma_{{\mathrm{4}}}  ,  \GrTTsym{(}  r_{{\mathrm{1}}}  \GrTTsym{+}  r_{{\mathrm{2}}}  \GrTTsym{)}  ,  \sigma_{{\mathrm{5}}}  \end{smallmatrix}\right)   \odot  \Gamma_{{\mathrm{1}}}  \GrTTsym{,}  z  \GrTTsym{:}  A  \GrTTsym{,}  \GrTTsym{[}  z  \GrTTsym{,}  z  \GrTTsym{/}  x  \GrTTsym{,}  y  \GrTTsym{]}  \Gamma_{{\mathrm{2}}}  \vdash  \GrTTsym{[}  z  \GrTTsym{,}  z  \GrTTsym{/}  x  \GrTTsym{,}  y  \GrTTsym{]}  t  \GrTTsym{:}  \GrTTsym{[}  z  \GrTTsym{,}  z  \GrTTsym{/}  x  \GrTTsym{,}  y  \GrTTsym{]}  B} }
  \end{align*}
\end{glemma}

\noindent
The operation $ \mathsf{contr}( \pi ;  \Delta ) $ contracts the elements at
index $\pi$ and $\pi  \GrTTsym{+}  \GrTTsym{1}$ for each vector in $\Delta$
by combining them with the semiring addition, defined
$ \mathsf{contr}( \pi ;  \Delta )   \GrTTsym{=}     \Delta \backslash \GrTTsym{(}  \pi  \GrTTsym{+}  \GrTTsym{1}  \GrTTsym{)}    +     \Delta / \GrTTsym{(}  \pi  \GrTTsym{+}  \GrTTsym{1}  \GrTTsym{)}    \ast  \GrTTsym{(}   \textbf{0}^{ \pi }   ,  \GrTTsym{1}  \GrTTsym{)}  $. Admissibility follows from the semiring
addition, which serves to contract dependencies, being
threaded throughout the rules.

\begin{glemma}{exchangeAdmissible}{Exchange}
  The following rule is admissible:\vspace{-5pt}
  \begin{align*}
    \inferrule*[flushleft,right=\textsc{Exc}] {
      {x \, \not\in \, \mathsf{FV} \, \GrTTsym{(}  B  \GrTTsym{)}}\\\\
      { \left|  \Delta_{{\mathrm{1}}}  \right|   \GrTTsym{=}   \left|  \sigma_{{\mathrm{3}}}  \right|   \GrTTsym{=}   \left|  \sigma_{{\mathrm{5}}}  \right|   \GrTTsym{=}   \left|  \Gamma_{{\mathrm{1}}}  \right| }\\
      { \left(\begin{smallmatrix} \Delta_{{\mathrm{1}}}  ,  \sigma_{{\mathrm{1}}}  ,  \GrTTsym{(}  \sigma_{{\mathrm{2}}}  ,  \GrTTsym{0}  \GrTTsym{)}  ,  \Delta_{{\mathrm{2}}}  \\  \sigma_{{\mathrm{3}}}  ,  s_{{\mathrm{1}}}  ,  s_{{\mathrm{2}}}  ,  \sigma_{{\mathrm{4}}}  \\  \sigma_{{\mathrm{5}}}  ,  r_{{\mathrm{1}}}  ,  r_{{\mathrm{2}}}  ,  \sigma_{{\mathrm{6}}}  \end{smallmatrix}\right)   \odot  \Gamma_{{\mathrm{1}}}  \GrTTsym{,}  x  \GrTTsym{:}  A  \GrTTsym{,}  y  \GrTTsym{:}  B  \GrTTsym{,}  \Gamma_{{\mathrm{2}}}  \vdash  t  \GrTTsym{:}  C}
    }{{ \left(\begin{smallmatrix} \Delta_{{\mathrm{1}}}  ,  \sigma_{{\mathrm{2}}}  ,  \GrTTsym{(}  \sigma_{{\mathrm{1}}}  ,  \GrTTsym{0}  \GrTTsym{)}  ,   \mathsf{exch}(  \left|  \Delta_{{\mathrm{1}}}  \right|  ;  \Delta_{{\mathrm{2}}} )   \\  \sigma_{{\mathrm{3}}}  ,  s_{{\mathrm{2}}}  ,  s_{{\mathrm{1}}}  ,  \sigma_{{\mathrm{4}}}  \\  \sigma_{{\mathrm{5}}}  ,  r_{{\mathrm{2}}}  ,  r_{{\mathrm{1}}}  ,  \sigma_{{\mathrm{6}}}  \end{smallmatrix}\right)   \odot  \Gamma_{{\mathrm{1}}}  \GrTTsym{,}  y  \GrTTsym{:}  B  \GrTTsym{,}  x  \GrTTsym{:}  A  \GrTTsym{,}  \Gamma_{{\mathrm{2}}}  \vdash  t  \GrTTsym{:}  C} }
  \end{align*}
\end{glemma}

\noindent
Notice that if you strip away the vector fragment and sizing
premise, this is exactly the form of exchange we would expect
in a dependent type theory: if $x$ and $y$ are assumptions in
a context typing $t : C$, and the type of $y$ does not depend
upon $x$, then we can type $t : C$ when we swap the order of
$x$ and $y$.

The action on grade vectors is simple: we swap the grades
associated with each of the variables. For the context grade
vector however, we must do two things: first, we capture the
formation of $A$ with $\sigma_{{\mathrm{1}}}$, and the formation of $B$ with
$\sigma_{{\mathrm{1}}}  ,  \GrTTsym{0}$ (indicating $x$ being used with grade 0 in $B$),
then swap these around, cutting the final grade from
$\sigma_{{\mathrm{2}}}  ,  \GrTTsym{0}$, and adding 0 to the end of $\sigma_{{\mathrm{1}}}$ to ensure
correct sizing. Next, the operation
$ \mathsf{exch}(  \left|  \Delta_{{\mathrm{1}}}  \right|  ;  \Delta_{{\mathrm{2}}} ) $ swaps the element at index
$ \left|  \Delta_{{\mathrm{1}}}  \right| $ (i.e., that corresponding to usage of $x$) with the
element at index $ \left|  \Delta_{{\mathrm{1}}}  \right|  + 1$ (corresponding to $y$) for
every vector in $\Delta_{{\mathrm{2}}}$; this exchange operation ensures that
usage in the trailing context is reordered appropriately.

\begin{glemma}{weakAdmissible}{Weakening}
  The following rule is admissible:
  \begin{align*}
    \inferrule*[right=\textsc{Weak}]{
    \begin{array}{ll}
      { ( \Delta_{{\mathrm{1}}}  ,  \Delta_{{\mathrm{2}}}  \mid  \sigma_{{\mathrm{1}}}  ,  \sigma'_{{\mathrm{1}}}  \mid  \sigma_{{\mathrm{2}}}  ,  \sigma'_{{\mathrm{2}}} )   \odot  \Gamma_{{\mathrm{1}}}  \GrTTsym{,}  \Gamma_{{\mathrm{2}}}  \vdash  t  \GrTTsym{:}  B} \cr
       ( \Delta_{{\mathrm{1}}}  \mid  \sigma_{{\mathrm{3}}}  \mid   \textbf{0}  )   \odot  \Gamma_{{\mathrm{1}}}  \vdash  A  \GrTTsym{:}   \mathsf{Type}_{ l }  &
        \qquad  \left|  \sigma_{{\mathrm{1}}}  \right|   \GrTTsym{=}   \left|  \sigma_{{\mathrm{2}}}  \right|   \GrTTsym{=}   \left|  \Gamma_{{\mathrm{1}}}  \right| 
    \end{array}
    }{ { (  \Delta_{{\mathrm{1}}}  ,  \sigma_{{\mathrm{3}}}  ,   \mathsf{ins}(  \left|  \Delta_{{\mathrm{1}}}  \right|  ;  \GrTTsym{0} ;  \Delta_{{\mathrm{2}}} )    \mid  \sigma_{{\mathrm{1}}}  ,  \GrTTsym{0}  ,  \sigma'_{{\mathrm{1}}}  \mid  \sigma_{{\mathrm{2}}}  ,  \GrTTsym{0}  ,  \sigma'_{{\mathrm{2}}} )   \odot  \Gamma_{{\mathrm{1}}}  \GrTTsym{,}  x  \GrTTsym{:}  A  \GrTTsym{,}  \Gamma_{{\mathrm{2}}}  \vdash  t  \GrTTsym{:}  B} }
  \end{align*}
\end{glemma}
\noindent
Weakening introduces irrelevant assumptions to a context. We
do this by capturing the usage in the formation of the
assumption's type with $\sigma_{{\mathrm{3}}}$ to preserve the
well-formedness of the context. We then indicate irrelevance
of the assumption by grading with $0$ in appropriate
places. The operation $ \mathsf{ins}( \pi ;  s ;  \Delta ) $ inserts the element
$s$ at index $\pi$ for each $\sigma$ in $\Delta$, such
that all elements preceding index $\pi$ (in $\sigma$) keep
their positions, and every element at index $\pi$ or
greater (in $\sigma$) will be shifted one index later in the
new vector. The $0$ grades in the subject and subject-type
grade vector positions correspond to the absence of the
irrelevant assumption from the subject and subject's type.

\newcommand{\grttZO}[0]{\grtt{}^{\{0,1\}}}
\subsection{Strong Normalization}
\label{subsec:strong_normalization}

We adapt Geuvers' strong normalization proof for the
Calculus of Constructions (CC)~\cite{Geuvers:1995} to a fragment
of
\grtt{} (called $\grtt{}^{\{0,1\}}$) restricted to two
universe levels and without variables of type $ \mathsf{Type}_{ \GrTTsym{1} } $.
This results in a less expressive system
than full \grtt{} when it comes to higher kinds, but
this is orthogonal to the main idea here of grading.
We briefly overview the strong normalization proof; details can be
found in Appendix~\ref{appendix--supplement}.
Note this strong normalization result is with respect to $\beta$-reduction only (our semantics does not include $\eta$-reduction).

We use the proof technique of saturated sets, based on the
reducibility candidates of Girard~\cite{girard1989proofs}.  While $\grttZO{}$
has a collapsed syntax we use judgments to break typing up into
stages.  We use these sets to match on whether a
term is a kind, type, constructor, or a function (we will refer to
these as terms).
\begin{definition}
  \label{def:type-cats}
  Typing can be broken up into the following stages:
  \[ \small
  \setlength\arraycolsep{2pt}
  \begin{array}{lll}
    \mathsf{Kind}  & := & \{A \mid \exists \Delta,\sigma_{{\mathrm{1}}},\Gamma. ( \Delta  \mid  \sigma_{{\mathrm{1}}}  \mid   \textbf{0}  )   \odot  \Gamma  \vdash  A  \GrTTsym{:}   \mathsf{Type}_{ \GrTTsym{1} } \}\\[1pt]
    \mathsf{Type}  & := & \{A \mid \exists \Delta,\sigma_{{\mathrm{1}}},\Gamma. ( \Delta  \mid  \sigma_{{\mathrm{1}}}  \mid   \textbf{0}  )   \odot  \Gamma  \vdash  A  \GrTTsym{:}   \mathsf{Type}_{ \GrTTsym{0} } \}\\[1pt]
     \mathsf{Con}  & := & \{t \mid \exists \Delta,\sigma_{{\mathrm{1}}},\sigma_{{\mathrm{2}}},\Gamma,A. ( \Delta  \mid  \sigma_{{\mathrm{1}}}  \mid  \sigma_{{\mathrm{2}}} )   \odot  \Gamma  \vdash  t  \GrTTsym{:}  A \,\land  ( \Delta  \mid  \sigma_{{\mathrm{2}}}  \mid   \textbf{0}  )   \odot  \Gamma  \vdash  A  \GrTTsym{:}   \mathsf{Type}_{ \GrTTsym{1} } \}\\[1pt]
     \mathsf{Term}   & := & \{t \mid \exists \Delta,\sigma_{{\mathrm{1}}},\sigma_{{\mathrm{2}}},\Gamma,A. ( \Delta  \mid  \sigma_{{\mathrm{1}}}  \mid  \sigma_{{\mathrm{2}}} )   \odot  \Gamma  \vdash  t  \GrTTsym{:}  A \,\land  ( \Delta  \mid  \sigma_{{\mathrm{2}}}  \mid   \textbf{0}  )   \odot  \Gamma  \vdash  A  \GrTTsym{:}   \mathsf{Type}_{ \GrTTsym{0} } \}
  \end{array}
  \]
\end{definition}

\begin{lemma}[Classification]\label{lemma:classification}
  We have $\mathsf{Kind} \cap \mathsf{Type} = \emptyset$ and $ \mathsf{Con}  \cap  \mathsf{Term}  = \emptyset$.
\end{lemma}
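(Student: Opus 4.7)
The plan is to prove both disjointness statements by showing that the universe level at which a term can be typed is essentially determined by its syntactic shape, once we account for the restriction in $\grttZO{}$ that rules out variables of type $\mathsf{Type}_{1}$. I would first establish a \emph{type-uniqueness} lemma (up to the subtyping preorder): if $ ( \Delta  \mid  \sigma_{{\mathrm{1}}}  \mid  \sigma_{{\mathrm{2}}} )   \odot  \Gamma  \vdash  t  \GrTTsym{:}  A$ and $(\Delta \mid \sigma_{1} \mid \sigma_{2}') \odot \Gamma \vdash t : A'$, then $A$ and $A'$ have a common upper or lower bound in the subtyping preorder. This is proved by induction on the derivation, with the only interesting case being \textsc{T\_TyConv} (which composes with subtyping transitivity) and the fact that each other rule pins down the head constructor of the resulting type. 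Combined with the subtyping rule $\GrTTdruleSTXXTyName{}$ being purely on universe levels, we get that any two types assigned to a given term lie within a single universe-level chain.

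Next, for $\mathsf{Kind} \cap \mathsf{Type} = \emptyset$, I would argue by induction on the structure of a term $A$ that the universe level at which $A$ is typable is uniquely determined by its head form, so $A$ cannot simultaneously be typable at $\mathsf{Type}_{0}$ and $\mathsf{Type}_{1}$ via the formation rules. The critical cases are:
\begin{itemize}
\item $A = \mathsf{Type}_{l}$: by inversion on $\GrTTdruleTXXTypeName{}$, this is only typable at $\mathsf{Type}_{\mathsf{suc}\ l}$, and in $\grttZO{}$ with levels restricted to $\{0,1\}$, this forces $A \in \mathsf{Kind}$ only.
\item $A = x$ a variable: by the restriction of $\grttZO{}$ excluding variables of type $\mathsf{Type}_{1}$, we get $A \in \mathsf{Type}$ only.
\item $A =  \textstyle (  x  :_{  \textcolor{darkblue}{( s ,  r )}  }  A_1  )  \to   A_2 $ or $A =  \textstyle (  x  :_{  \textcolor{darkblue}{ r }  }  A_1  )  \otimes   A_2 $ or $A =  \square_{  \textcolor{darkblue}{ s }  }  A_1 $: by inversion on the corresponding formation rule, the universe level of $A$ is determined by the levels of $A_1$ and $A_2$, and the induction hypothesis on those subterms gives disjointness.
\item Elimination forms (applications, $\mathsf{let}$): these would need to be ruled out as types, which again follows by inversion and induction.
\end{itemize}
Throughout, the conversion rule $\GrTTdruleTXXTyConvName{}$ is handled by the type-uniqueness lemma established first.

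For $\mathsf{Con} \cap \mathsf{Term} = \emptyset$, the argument reduces to the first part. Suppose $t$ lies in both sets, so we have $ ( \Delta  \mid  \sigma_{{\mathrm{1}}}  \mid  \sigma_{{\mathrm{2}}} )   \odot  \Gamma  \vdash  t  \GrTTsym{:}  A_1$ with $A_1 \in \mathsf{Kind}$ and $(\Delta \mid \sigma_1 \mid \sigma_2') \odot \Gamma \vdash t : A_2$ with $A_2 \in \mathsf{Type}$. By type-uniqueness, $A_1$ and $A_2$ are related by the subtyping preorder, hence (by $\GrTTdruleSTXXEqName{}$, $\GrTTdruleSTXXTransName{}$, and the shape of subtyping rules) they agree up to a chain of universe-level comparisons. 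Since $A_1 \in \mathsf{Kind}$ forces the top-level universe assignment to be $\mathsf{Type}_{1}$ and $A_2 \in \mathsf{Type}$ forces it to be $\mathsf{Type}_{0}$, either $A_1 \in \mathsf{Type}$ or $A_2 \in \mathsf{Kind}$ by transporting along subtyping, contradicting the first part.

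The hardest step will be the type-uniqueness lemma in the presence of subtyping and conversion: I must show that the derivations' use of $\GrTTdruleTXXTyConvName{}$ cannot make a term jump between incomparable universe strata. This requires inverting $\GrTTdruleSTXXTyName{}$ carefully and noting that all other subtyping rules preserve head-constructor shape, so no chain of subtyping can collapse $\mathsf{Type}_{0}$ and $\mathsf{Type}_{1}$ as the ``top type'' of a given term.
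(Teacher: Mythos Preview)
Your plan has a concrete gap at the cumulativity step. You write that you will ``invert $\GrTTdruleSTXXTyName{}$ carefully'' and that ``no chain of subtyping can collapse $\mathsf{Type}_0$ and $\mathsf{Type}_1$ as the top type of a given term'', but that is precisely what $\GrTTdruleSTXXTyName{}$ does: it yields $\mathsf{Type}_0 \leq \mathsf{Type}_1$ directly, so via \textsc{Conv} any $A$ with $A : \mathsf{Type}_0$ also has $A : \mathsf{Type}_1$. Under the full subtyping discipline of the paper this gives $\mathsf{Type} \subseteq \mathsf{Kind}$ and the lemma is simply false. Classification holds only because $\grttZO{}$ is the CC-like fragment \emph{without} universe cumulativity (this is the point of the two-level restriction in the Geuvers setup: $\mathsf{Type}_0 : \mathsf{Type}_1$ but not $\mathsf{Type}_0 \leq \mathsf{Type}_1$). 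Your argument must start from that restriction rather than try to neutralise $\GrTTdruleSTXXTyName{}$ a posteriori.

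There is a second, more structural issue. Your type-uniqueness lemma is stated for two derivations sharing the same $\Delta,\Gamma$, but the sets $\mathsf{Kind},\mathsf{Type},\mathsf{Con},\mathsf{Term}$ quantify \emph{existentially} over contexts. A bare variable $x$ can be assigned a type in $\mathsf{Kind}$ in one context and a type in $\mathsf{Type}$ in another, so reducing $\mathsf{Con}\cap\mathsf{Term}=\emptyset$ to uniqueness-in-a-fixed-context does not go through for Curry-style raw terms. Geuvers' original argument avoids this by working with Church-style terms (binders and variables sorted), so that the ``sort'' of a term is read off its syntax; classification is then a purely syntactic case analysis and no semantic type-uniqueness lemma is needed. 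You should check how $\grttZO{}$ is presented in the appendix and, if it is Church-style, replace the type-uniqueness detour with the direct syntactic sort map. Finally, your remark that elimination forms ``would need to be ruled out as types'' is not right in a dependent setting: an application $F\,a$ with $F : A \to \mathsf{Type}_0$ is a perfectly good inhabitant of $\mathsf{Type}$ and must be handled by the inductive case split, not excluded.
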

\noindent
The classification lemma states that we can safely case split over
kinds and types, or constructors and terms without fear of an overlap
occurring.

Saturated sets are essentially collections of strongly normalizing
terms that are closed under $\beta$-reduction.  The intuition
behind this proof is that every typable program ends up in some
saturated set, and hence, is strongly normalizing.
\begin{definition}[Base terms and saturated terms]
  \label{def:sat-set}
Informally, the set of base terms $ \mathcal{B} $ is inductively defined from variables
and $ \mathsf{Type}_{ \GrTTsym{0} } $ and $ \mathsf{Type}_{ \GrTTsym{1} } $, and compound terms over base $ \mathcal{B} $ and strongly normalising terms $\mathsf{SN}$.

  A set of terms $X$ is \emph{saturated} if
  $X \subset \mathsf{SN}$,
  $ \mathcal{B}  \subset X$, and
  if $ \mathsf{red_k}\, t  \in X$ and $t \, \in \, \mathsf{SN}$, then $t \in X$.
Thus saturated sets
are closed under strongly normalizing terms with a \emph{key redex},
denoted $ \mathsf{red_k}\, t $, which are redexes or a redex at the head of an
elimination form.
$\mathsf{SAT}$ denotes the collection of saturated sets.
\end{definition}

\begin{lemma}[$\mathsf{SN}$ saturated]\label{lemma:sn_is_saturated}
  All saturated sets are non-empty; $\mathsf{SN}$ is saturated.
\end{lemma}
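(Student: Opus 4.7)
The plan is to dispatch each of the three saturation conditions for $\mathsf{SN}$ in turn, and then obtain non-emptiness as an immediate corollary. The key technical fact underpinning everything is that the set of base terms $\mathcal{B}$ is contained in $\mathsf{SN}$; once this is in hand, the rest of the argument is essentially bookkeeping.

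First I would verify the non-emptiness claim. By the definition of saturation, every saturated $X$ satisfies $\mathcal{B}\subset X$. Since $\mathcal{B}$ contains variables (and the constants $\mathsf{Type}_{\GrTTsym{0}}$, $\mathsf{Type}_{\GrTTsym{1}}$) it is non-empty, so $X$ is non-empty as well. This handles the first clause and does not depend on anything about $\mathsf{SN}$ itself.

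Next I would check the three saturation conditions for $X = \mathsf{SN}$. Inclusion $\mathsf{SN}\subseteq\mathsf{SN}$ is trivial. The closure condition (if $\mathsf{red_k}\,t\in\mathsf{SN}$ and $t\in\mathsf{SN}$ then $t\in\mathsf{SN}$) is also trivially true, since $t\in\mathsf{SN}$ is already assumed. The substantive condition is $\mathcal{B}\subseteq\mathsf{SN}$. I would prove this by structural induction on the inductive definition of $\mathcal{B}$ in Definition~\ref{def:sat-set}. The base cases are variables, $\mathsf{Type}_{\GrTTsym{0}}$, and $\mathsf{Type}_{\GrTTsym{1}}$; each admits no reduction at the head and has no reducible subterms, so each is immediately in $\mathsf{SN}$. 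For the inductive step covering compound base terms formed from a base head applied to $\mathsf{SN}$ arguments (or analogous compound forms), I would show that any infinite reduction sequence from such a term would have to eventually reduce a subterm indefinitely, contradicting either the induction hypothesis (for the head) or the assumed $\mathsf{SN}$ status of the arguments. Crucially, because the head is a base term (never a $\lambda$, pair, or $\square$), no reduction can create a new key redex at the head, so reduction can only act within subterms.

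The main obstacle will be the compound-base-term case of the $\mathcal{B}\subseteq\mathsf{SN}$ induction: one must argue carefully that reductions inside a compound base term cannot conspire to produce an infinite reduction. The standard move is to use a well-founded lexicographic or multiset ordering on the pair consisting of the sum of reduction lengths of subterms, leveraging the fact that a base head never triggers a $\beta$, $\beta_\otimes$, or $\beta_\Box$ redex. Once this is established, saturation of $\mathsf{SN}$ follows, and the non-emptiness of arbitrary saturated sets is immediate since they must all contain $\mathcal{B}$.
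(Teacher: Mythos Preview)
Your proposal is correct and follows essentially the same approach the paper would take, since it explicitly adapts Geuvers' saturated-sets argument: non-emptiness via $\mathcal{B}\subset X$, trivial first and third conditions for $X=\mathsf{SN}$, and the substantive work being $\mathcal{B}\subseteq\mathsf{SN}$ by induction on the formation of base terms using that a base head can never become an introduction form. Your observation that the key-redex closure condition is vacuous for $X=\mathsf{SN}$ (because $t\in\mathsf{SN}$ is already a hypothesis in the paper's Definition~\ref{def:sat-set}) is exactly right and is the reason this lemma is short.
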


\noindent
Since $\grttZO{}$ allows computation in types as well as in
types, we separate the interpretations for kinds and types,
where the former is a set of the latter.
\begin{definition}
  \label{def:kind-interp}
  For $A \, \in \, \mathsf{Kind}$, the kind interpretation, $ \mathcal{K}\interp{ A } $, is defined:
  \[\small
  \begin{array}{llll}
     \mathcal{K}\interp{  \mathsf{Type}_{ \GrTTsym{0} }  }  & = \mathsf{SAT} \;\;\; &
     \mathcal{K}\interp{  \textstyle (  x  :_{  \textcolor{darkblue}{( s ,  r )}  }  A  )  \to   B  }  & = \{f \mid f :  \mathcal{K}\interp{ A }  \to  \mathcal{K}\interp{ B }  \}\text{, if }A  \GrTTsym{,}  B \, \in \, \mathsf{Kind}\\
     \mathcal{K}\interp{  \square_{  \textcolor{darkblue}{ s }  }  A  }  & =  \mathcal{K}\interp{ A }  \;\;\; &
     \mathcal{K}\interp{  \textstyle (  x  :_{  \textcolor{darkblue}{( s ,  r )}  }  A  )  \to   B  }  & =  \mathcal{K}\interp{ A } \text{, if }A \, \in \, \mathsf{Kind}, B \, \in \, \mathsf{Type}\\
    & &  \mathcal{K}\interp{  \textstyle (  x  :_{  \textcolor{darkblue}{( s ,  r )}  }  A  )  \to   B  }  & =  \mathcal{K}\interp{ B } \text{, if }A \, \in \, \mathsf{Type},B \, \in \, \mathsf{Kind}\\
    & &  \mathcal{K}\interp{  \textstyle (  x  :_{  \textcolor{darkblue}{ r }  }  A  )  \otimes   B  }  & =  \mathcal{K}\interp{ A }  \times  \mathcal{K}\interp{ B } \text{, if }A  \GrTTsym{,}  B \, \in \, \mathsf{Kind}\\
    & &  \mathcal{K}\interp{  \textstyle (  x  :_{  \textcolor{darkblue}{ r }  }  A  )  \otimes   B  }  & =  \mathcal{K}\interp{ A } \text{, if }A \, \in \, \mathsf{Kind}, B \, \in \, \mathsf{Type}\\
    & &  \mathcal{K}\interp{  \textstyle (  x  :_{  \textcolor{darkblue}{ r }  }  A  )  \otimes   B  }  & =  \mathcal{K}\interp{ B } \text{, if }A \, \in \, \mathsf{Type}, B \, \in \, \mathsf{Kind}\\
  \end{array}
  \]
\end{definition}

\noindent
Next we define the interpretation of types, which requires the
interpretation to be parametric on an interpretation of type variables
called a type evaluation. This is necessary to make the
interpretation well-founded (first realized by
Girard~\cite{girard1989proofs}).
\begin{definition}
  \label{def:valid-type-env}
  \emph{Type valuations}, $ \Delta  \odot  \Gamma  \models  \varepsilon $, are defined as follows:
  \footnotesize
  \begin{align*}
    \begin{array}{c}
    \GrTTdruleEpXXEmpty{} \quad
    \GrTTdruleEpXXExtTy{} \quad
    \GrTTdruleEpXXExtTm{}
    \end{array}
  \end{align*}
\end{definition}
\noindent
Type valuations ignore term variables
(rule~\GrTTdruleEpXXExtTmName{}), in
fact, the interpretations of both types and kinds
ignores them because we are defining sets of terms over types, and
thus terms in types do not contribute to the definition of these
sets. However as these interpretations define sets of open
terms we must carry a graded context around where
necessary. Thus, type valuations are with respect to a
well-formed graded context $\Delta  \odot  \Gamma$. We now outline the
type interpretation.
\begin{definition}
  \label{def:interpretation-of-types}
  For type valuation $ \Delta  \odot  \Gamma  \models  \varepsilon $ and a type $A
  \in ( \mathsf{Kind}  \cup  \mathsf{Type}  \cup \mathsf{Con})$ with $A$ typable in
  $\Delta  \odot  \Gamma$, the interpretation of types $ \interp{ A }_{ \varepsilon } $ is defined inductively. For brevity, we list just a few
  illustrative cases, including modalities and some function
  cases; the complete
  definition is given in Appendix~\ref{appendix--supplement}.
  \[\footnotesize
    \setlength\arraycolsep{1pt}
  \begin{array}{rllr}
     \interp{  \mathsf{Type}_{ \GrTTsym{1} }  }_{ \varepsilon }  & = & \mathsf{SN}\\

     \interp{  \mathsf{Type}_{ \GrTTsym{0} }  }_{ \varepsilon }  & = &  \lambda  X  \in  \mathsf{SAT} . \mathsf{SN} \\


     \interp{ x }_{ \varepsilon }           & = &  \varepsilon \, x  & \text{if }x \, \in \, \mathsf{Con}\\


     \interp{  \square_{  \textcolor{darkblue}{ s }  }  A  }_{ \varepsilon }  & = &  \interp{ A }_{ \varepsilon } \\


     \interp{  \lambda  x  \GrTTsym{:}  A . B  }_{ \varepsilon }      & = &  \lambda  X  \in   \mathcal{K}\interp{ A }  .  \interp{ B }_{  \varepsilon [  x  \mapsto  X  ]  }   \qquad & \text{if }A \, \in \, \mathsf{Kind}, B \, \in \, \mathsf{Con}\\

     \interp{  A \,{ B }  }_{ \varepsilon }   & = &  \interp{ A }_{ \varepsilon } ( \interp{ B }_{ \varepsilon } ) & \text{if }B \, \in \, \mathsf{Con}\\

 \interp{  \textstyle (  x  :_{  \textcolor{darkblue}{( s ,  r )}  }  A  )  \to   B  }_{ \varepsilon }  & = & \multicolumn{2}{l}{\lambda X \in \mathcal{K}\interp{A} \to \mathcal{K}\interp{B} .
        \bigcap_{ Y  \in   \mathcal{K}\interp{ A }  }  \GrTTsym{(}   \interp{ A }_{ \varepsilon }  \, Y  \to   \interp{ B }_{  \varepsilon [  x  \mapsto  Y  ]  }  \, \GrTTsym{(}  X \, \GrTTsym{(}  Y  \GrTTsym{)}  \GrTTsym{)}  \GrTTsym{)} } \\
                         &&& \text{if }A  \GrTTsym{,}  B \, \in \, \mathsf{Kind}
  \end{array}
  \]
\end{definition}
\noindent
Grades play no role in the reduction relation for \grtt, and hence,
our interpretation erases graded modalities and their
introductory and elimination forms (translated into substitutions).  In fact, the
above interpretation can be seen as a translation of $\grttZO{}$ into
non-substructural set theory; there is no data-usage
tracking in the image of the interpretation.  Tensors are
translated into Cartesian products whose eliminators are translated
into substitutions similarly to graded modalities.  All terms however
remain well-typed through the interpretation.

The interpretation of terms corresponds to term valuations that are used to close the term before interpreting it into the interpretation of its type.
\begin{definition}
  \label{def:valid-term-env}
  \emph{Valid term valuations}, $ \Delta  \odot  \Gamma  \models_{ \varepsilon }  \rho $, are defined as follows:
  \begin{mathpar}\small
    \GrTTdruleRhoXXEmpty{} \;
    \GrTTdruleRhoXXExtTy{} \;
    \GrTTdruleRhoXXExtTm{}
  \end{mathpar}
\end{definition}
\noindent
We interpret terms as substitutions, but graded modalities must be
erased and their elimination forms converted into
substitutions (and similarly for the eliminator for tensor products).
\begin{definition}
  \label{def:valid-substitutions}
  Suppose $ \Delta  \odot  \Gamma  \models_{ \varepsilon }  \rho $.  Then the \emph{interpretation of a
    term} $t$ typable in $\Delta  \odot  \Gamma$ is $ \llparenthesis  t  \rrparenthesis_{ \rho }  =  \rho \, t $, but where all let-expressions are translated into
  substitutions, and all graded modalities are erased.
\end{definition}

\noindent
Finally, we prove our main result using semantic typing which
will imply strong normalization. Suppose $ ( \Delta  \mid  \sigma_{{\mathrm{1}}}  \mid  \sigma_{{\mathrm{2}}} )   \odot  \Gamma  \vdash  t  \GrTTsym{:}  A$, then:
\begin{definition}\label{def:int-type-judgments}
  \emph{Semantic typing}, $ ( \Delta  \mid  \sigma_{{\mathrm{1}}}  \mid  \sigma_{{\mathrm{2}}} )   \odot  \Gamma  \models  t  \GrTTsym{:}  A$, is defined as follows:
  \begin{enumerate}\small
  \item If $ ( \Delta  \mid  \sigma  \mid   \textbf{0}  )   \odot  \Gamma  \vdash  A  \GrTTsym{:}   \mathsf{Type}_{ \GrTTsym{1} } $, then for every $ \Delta  \odot  \Gamma  \models_{ \varepsilon }  \rho $, $ \llparenthesis  t  \rrparenthesis_{ \rho }  \, \in \,  \interp{ A }_{ \varepsilon }  \, \GrTTsym{(}   \interp{ t }_{ \varepsilon }   \GrTTsym{)}$.
  \item If $ ( \Delta  \mid  \sigma  \mid   \textbf{0}  )   \odot  \Gamma  \vdash  A  \GrTTsym{:}   \mathsf{Type}_{ \GrTTsym{0} } $, then for every $ \Delta  \odot  \Gamma  \models_{ \varepsilon }  \rho $, $ \llparenthesis  t  \rrparenthesis_{ \rho }  \, \in \,  \interp{ A }_{ \varepsilon } $.
  \end{enumerate}
\end{definition}

\begin{theorem}[Soundness for Semantic Typing]\label{theorem:soundness_for_semantic_typing}
  $ ( \Delta  \mid  \sigma_{{\mathrm{1}}}  \mid  \sigma_{{\mathrm{2}}} )   \odot  \Gamma  \models  t  \GrTTsym{:}  A$.
\end{theorem}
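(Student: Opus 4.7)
The plan is to prove Theorem~\ref{theorem:soundness_for_semantic_typing} by induction on the derivation of $ ( \Delta  \mid  \sigma_{{\mathrm{1}}}  \mid  \sigma_{{\mathrm{2}}} )   \odot  \Gamma  \vdash  t  \GrTTsym{:}  A$, following Geuvers' adaptation of Girard's reducibility candidates. For each case I would fix an arbitrary valid term valuation $ \Delta  \odot  \Gamma  \models_{ \varepsilon }  \rho $ (and hence an underlying type valuation $ \Delta  \odot  \Gamma  \models  \varepsilon $), then use the Classification Lemma (Lemma~\ref{lemma:classification}) to determine whether $A \, \in \, \mathsf{Kind}$ or $A \, \in \, \mathsf{Type}$, which selects which clause of Definition~\ref{def:int-type-judgments} must be established. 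Grading information can be largely suppressed in the semantic arguments, since both $ \interp{ - }_{ \varepsilon } $ and $ \llparenthesis  -  \rrparenthesis_{ \rho } $ erase graded modalities and collapse their introduction/elimination forms and the tensor eliminator into substitutions.

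Before the main induction I would establish two auxiliary lemmas that are used throughout. First, a \emph{type-substitution} lemma: $ \interp{ \GrTTsym{[}  u  \GrTTsym{/}  x  \GrTTsym{]}  B }_{ \varepsilon }  =  \interp{ B }_{  \varepsilon [  x  \mapsto   \interp{ u }_{ \varepsilon }   ]  } $ when $u \, \in \, \mathsf{Con}$, together with the term-level analogue for $ \llparenthesis  -  \rrparenthesis_{ \rho } $. Second, a \emph{semantic equality} lemma: if $ ( \Delta  \mid  \sigma  \mid   \textbf{0}  )   \odot  \Gamma  \vdash  A  \GrTTsym{=}  B  \GrTTsym{:}   \mathsf{Type}_{ l } $ then $ \interp{ A }_{ \varepsilon }  =  \interp{ B }_{ \varepsilon } $, with a subtyping analogue; this is what justifies the \GrTTdruleTXXTyConvName{} case.

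With these in hand, the main induction breaks up as follows. The formation rules \GrTTdruleTXXTypeName{}, \GrTTdruleTXXArrowName{}, \GrTTdruleTXXTenName{}, and \GrTTdruleTXXBoxName{} reduce to showing the built term lies in $\mathsf{SN}$ (when at level $\GrTTsym{1}$) or in the appropriate function space from Definition~\ref{def:kind-interp} (when at level $\GrTTsym{0}$), by saturation (Lemma~\ref{lemma:sn_is_saturated}) combined with the IHs on subterms. The variable rule \GrTTdruleTXXVarName{} extracts membership directly from $\rho$'s construction (rule \GrTTdruleRhoXXExtTmName{} or \GrTTdruleRhoXXExtTyName{}). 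The introduction rules \GrTTdruleTXXFunName{}, \GrTTdruleTXXPairName{}, and \GrTTdruleTXXBoxIName{} use the standard reducibility argument: each key-redex reduct lies in the required saturated set by the IH (with suitably extended valuations), and saturation lifts this back to the abstraction/pair/box. The elimination rules \GrTTdruleTXXAppName{}, \GrTTdruleTXXTenCutName{}, and \GrTTdruleTXXBoxEName{} dualise this: we apply the IH to the principal term and either use the function-space structure of the interpretation (for application) or appeal to the fact that $ \interp{  \square_{  \textcolor{darkblue}{ s }  }  A  }_{ \varepsilon }  =  \interp{ A }_{ \varepsilon } $ and the ``let'' translations to substitutions, so that the elimination is essentially invisible in the interpretation. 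The case for \GrTTdruleTXXTyConvName{} is immediate from the semantic equality lemma.

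The hardest step will be the semantic equality lemma. The congruence and $\beta$-cases reduce to the type-substitution lemma, and the $\eta$-cases for $\otimes$ and $\square$ are essentially free because the interpretation already translates their eliminators into substitutions, so the $\eta$-expanded and original terms interpret to the same set. The $\eta$-case for arrows (\GrTTdruleTEQXXArrowUniqName{}) is the delicate one: it requires a pointwise argument using the function-space structure of $ \mathcal{K}\interp{  \textstyle (  x  :_{  \textcolor{darkblue}{( s ,  r )}  }  A  )  \to   B  } $ together with a careful handling of the $A \, \in \, \mathsf{Kind}$ vs.\ $A \, \in \, \mathsf{Type}$ split from Definition~\ref{def:kind-interp}. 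Once the semantic equality lemma is in place, strong normalization follows as a corollary by instantiating $\varepsilon$ and $\rho$ with identity-like valuations built from variables (which lie in every saturated set as base terms from Definition~\ref{def:sat-set}), and unwinding $ \llparenthesis  t  \rrparenthesis_{ \rho }  \, \in \, \mathsf{SN}$ to obtain $t \, \in \, \mathsf{SN}$.
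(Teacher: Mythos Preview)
Your proposal is correct and follows essentially the same route as the paper: the paper explicitly adapts Geuvers' saturated-sets argument for CC, and the proof of Theorem~\ref{theorem:soundness_for_semantic_typing} (given in the supplementary appendix) proceeds by induction on the typing derivation, relying on exactly the auxiliary lemmas you identify---a substitution lemma for $\interp{-}_\varepsilon$ and a preservation-of-interpretation lemma for judgmental equality/subtyping---together with the observation that grades, boxes, and the $\mathsf{let}$-eliminators are erased by the interpretation so that the graded structure plays no semantic role.
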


\begin{corollary}[Strong Normalization]\label{corollary:strong_normalization}
  We have $t \, \in \, \mathsf{SN}$.
\end{corollary}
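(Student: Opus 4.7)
The plan is to derive strong normalization as an immediate consequence of Theorem~\ref{theorem:soundness_for_semantic_typing} by instantiating semantic typing at a canonical ``identity'' valuation in which every type variable and every term variable is mapped to itself. Concretely, given a typable term $( \Delta  \mid  \sigma_{{\mathrm{1}}}  \mid  \sigma_{{\mathrm{2}}} )   \odot  \Gamma  \vdash  t  \GrTTsym{:}  A$, Theorem~\ref{theorem:soundness_for_semantic_typing} delivers $( \Delta  \mid  \sigma_{{\mathrm{1}}}  \mid  \sigma_{{\mathrm{2}}} )   \odot  \Gamma  \models  t  \GrTTsym{:}  A$, and it then suffices to exhibit one concrete pair $(\varepsilon_{\mathrm{id}}, \rho_{\mathrm{id}})$ with $ \Delta  \odot  \Gamma  \models  \varepsilon_{\mathrm{id}} $ and $ \Delta  \odot  \Gamma  \models_{ \varepsilon_{\mathrm{id}} }  \rho_{\mathrm{id}} $ at which to evaluate it.

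The existence of such canonical valuations is where the work concentrates. I would build them by induction on the structure of $\Gamma$, extending $\varepsilon$ with $x \mapsto x$ (using rule~\GrTTdruleEpXXExtTyName{}) whenever $\Gamma$ introduces a kind-typed assumption and skipping term assumptions at the type-valuation level via~\GrTTdruleEpXXExtTmName{}; and extending $\rho$ with $x \mapsto x$ in both cases using~\GrTTdruleRhoXXExtTyName{} and~\GrTTdruleRhoXXExtTmName{}. The premises of these rules demand that $x$ lie in $ \mathcal{K}\interp{ A } $ or in $ \interp{ A }_{ \varepsilon } $, which is where base terms come in: by Definition~\ref{def:sat-set}, $ \mathcal{B}  \subset X$ for every saturated $X$, and variables belong to $ \mathcal{B} $; so at the term level variables automatically satisfy membership whenever the type interpretation lands in $\mathsf{SAT}$. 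For the kind level, a small induction over the structure of $A \, \in \, \mathsf{Kind}$ using Definition~\ref{def:kind-interp} and Lemma~\ref{lemma:sn_is_saturated} shows $ \mathcal{K}\interp{ A } $ is always inhabited by a canonical choice (built from $\mathsf{SN}$ and pairing/abstraction), so a variable $x$ can likewise be assigned a concrete element.

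Given such $(\varepsilon_{\mathrm{id}}, \rho_{\mathrm{id}})$, Definition~\ref{def:int-type-judgments} yields $ \llparenthesis  t  \rrparenthesis_{ \rho_{\mathrm{id}} }  \, \in \,  \interp{ A }_{ \varepsilon_{\mathrm{id}} } $ (applied to $ \interp{ t }_{ \varepsilon_{\mathrm{id}} } $ in the kind case). By case analysis on whether $A$ is a kind or a type, the interpretation $ \interp{ A }_{ \varepsilon_{\mathrm{id}} } $ (or the result of applying it at $ \interp{ t }_{ \varepsilon_{\mathrm{id}} } $) is a saturated set, hence contained in $\mathsf{SN}$ by Definition~\ref{def:sat-set}. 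This gives strong normalization of $ \llparenthesis  t  \rrparenthesis_{ \rho_{\mathrm{id}} } $.

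The remaining bridge, and what I expect to be the main obstacle, is transferring $\mathsf{SN}$ from $ \llparenthesis  t  \rrparenthesis_{ \rho_{\mathrm{id}} } $ back to $t$ itself. By Definition~\ref{def:valid-substitutions}, $ \llparenthesis  t  \rrparenthesis_{ \rho_{\mathrm{id}} } $ is obtained from $t$ by erasing all graded-modality formers/introductions and translating $ \mathsf{let}\,  \square  x   =  t_{{\mathrm{1}}}  \,\mathsf{in}\, t_{{\mathrm{2}}} $ (and tensor eliminators) into substitutions. Under the identity valuation this translation is purely structural, so I would set up a simulation lemma showing that each $\beta$-reduction step $ t  \leadsto  t' $ of $\grttZO{}$ corresponds to either a $\beta$-reduction step or an equality in the image under $\llparenthesis \cdot \rrparenthesis_{\rho_{\mathrm{id}}}$, where box-$\beta$ and tensor-$\beta$ steps already held definitionally in the erased form. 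A standard measure argument (e.g., combining the length of the image's reduction sequence with the number of erasable constructors in $t$) then rules out infinite reduction sequences from $t$, yielding $t \, \in \, \mathsf{SN}$.
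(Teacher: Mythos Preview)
Your overall plan is the paper's Geuvers-style argument: instantiate semantic typing at a canonical valuation and read off membership in a saturated set. One phrasing slip: $\varepsilon$ takes values in $\mathcal{K}\interp{A}$, which by Definition~\ref{def:kind-interp} are semantic objects (saturated sets, or functions and products built from them) rather than syntactic terms, so ``$x \mapsto x$'' is ill-formed for $\varepsilon$. You effectively correct this when you later speak of picking a canonical inhabitant of $\mathcal{K}\interp{A}$; only the term valuation $\rho$ legitimately takes $x \mapsto x$, since variables are base terms and hence in every saturated set.

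The substantive gap is in the SN-transfer step. The sample measure you give---lexicographic on the maximal reduction length of the erased image and the number of erasable constructors in $t$---is not well-founded along reductions. A $\beta\square$ step $\mathsf{let}\,\square x = \square t_1\,\mathsf{in}\,t_2 \leadsto [t_1/x]t_2$ leaves the erased image unchanged, but whenever $x$ occurs more than once in $t_2$ the substitution duplicates every erasable construct inside $t_1$, so the second coordinate can strictly \emph{increase}. Worse, a congruence step reducing the scrutinee of a $\mathsf{let}\,\square$ whose bound variable is absent from the erased body changes neither coordinate: the image is fixed (the scrutinee was discarded by the erasure substitution) and no box construct is consumed. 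The repair is not a cleverer syntactic count but the observation that soundness applies uniformly to \emph{every} typable subterm: in the problematic congruence case $t_1$ is itself typable, so its own erasure lies in $\mathsf{SN}$ by a fresh instance of Theorem~\ref{theorem:soundness_for_semantic_typing}, and one then proves the reflection lemma by a nested well-founded induction (for instance on the multiset of reduction lengths of the erased-away subterms) rather than by a single global constructor count.
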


\section{Implementation}\label{sec:implementation}
Our implementation
\implName{}
is based on
a bidirectionalised version of the typing rules here, somewhat
following traditional schemes of bidirectional
typing~\cite{DunfieldK19,dunfield2004tridirectional} but with
grading (similar to Granule~\cite{orchard2019quantitative} but adapted considerably for
the dependent setting). We briefly outline the implementation
scheme and highlight a few key points, rules, and
examples. We use this implementation to explore further
applications of \grtt{}, namely optimising type checking
algorithms.

Bidirectional typing splits declarative typing rules into
\emph{check} and \emph{infer} modes. Furthermore,
bidirectional \grtt{} rules split the grading context (left of
$\odot$) into \emph{input} and \emph{output} contexts where
$ ( \Delta  \mid  \sigma_{{\mathrm{1}}}  \mid  \sigma_{{\mathrm{2}}} )   \odot  \Gamma  \vdash  t  \GrTTsym{:}  A$ is implemented via:
\begin{align*}
(\emph{check})\;\;
 \Delta  ;  \Gamma  \vdash  t  \Leftarrow  A  ;  \sigma_{{\mathrm{1}}}  ;  \sigma_{{\mathrm{2}}} 
\quad \textit{or}\quad
(\emph{infer})\;\;
 \Delta  ;  \Gamma  \vdash  t  \Rightarrow  A  ;  \sigma_{{\mathrm{1}}}  ;  \sigma_{{\mathrm{2}}} 
\end{align*}
where $\Leftarrow$ rules \emph{check} that $t$ has type $A$ and
$\Rightarrow$ rules \emph{infer} (calculate) that $t$ has type $A$.  In both
judgments, the context grading $\Delta$ and context $\Gamma$
left of $\vdash$ are inputs
whereas the grade vectors $\sigma_{{\mathrm{1}}}$ and $\sigma_{{\mathrm{2}}}$
 to the right of $A$ are outputs. This
input-output context approach resembles that employed in linear type
checking~\cite{allais2018typing,hodas1994logic,zalakain2020pi}.
Rather than following a ``left over'' scheme as in these works
(where the output context explains what resources are left), the
output grades here explain what has been used according to the
analysis of grading (`adding up' rather than `taking away').

For example, the following is the \emph{infer} rule for function
elimination:
\begin{align*}
\GrTTdruleInfAlgXXApp{}
\end{align*}
The rule can be read by starting at the input of the
conclusion (left of $\vdash$), then reading top down through each
premise, to calculate the output grades in the rule's conclusion. Any
concrete value or already-bound variable appearing in the output
grades of a premise can be read as causing an equality check in the
type checker.  The last premise checks that the output
subject-type grade $\sigma_{{\mathrm{13}}}$ from the first premise matches
$\sigma_{{\mathrm{1}}}  \GrTTsym{+}  \sigma_{{\mathrm{3}}}$ (which were calculated by later premises).

In contrast, function introduction is a \emph{check} rule:
\begin{align*}
\GrTTdruleChkAlgXXFun{}
\end{align*}
Thus, dependent functions can be checked against type
$ \textstyle (  x  :_{  \textcolor{darkblue}{( s ,  r )}  }  A  )  \to   B $ given input $\Delta; \Gamma$ by first
inferring the type of $A$ and checking that its output
subject-type grade comprises all zeros $ \textbf{0} $. Then the body of the function
$t$ is checked against $B$ under the context
$\Delta  ,  \sigma_{{\mathrm{1}}}; \Gamma  \GrTTsym{,}  x  \GrTTsym{:}  A$ producing grade vectors $\sigma_{{\mathrm{2}}}  ,  s'$
and $\sigma_{{\mathrm{1}}}  ,  r'$ where it is checked that $s = s'$ and $r = r'$
(described implicitly in the rule), i.e., the calculated grades match
those of the binder.

The implementation anticipates some further work for \grtt{}:
the potential for grades which are first-class terms, for which we anticipate complex equations on
grades. For grade equality, \implName{}
has two modes: one which normalises terms
and then compares for syntactic equality, and the other which
discharges constraints via an off-the-shelf SMT solver (we use
Z3~\cite{de2008z3}). We discuss briefly some performance
implications in the next section.

\paragraph{Using Grades to Optimise Type Checking}
\label{sec:more-case-studies}
Abel posited that a dependent theory with quantitative
resource tracking at the type level could
leverage linearity-like optimisations in type checking~\cite{abel2018}.
Our implementation provides a research vehicle for exploring
this idea; we consider one possible optimisation here.

Key to dependent type checking is the substitution of terms into types
in elimination forms (i.e., application, tensor elimination). However,
in a quantitative semiring setting, if a variable has $0$ subject-type
grade, then we know it is irrelevant to type formation
(it is not semantically depended upon, i.e., during normalisation).
Subsequently, substitutions into a $0$-graded variable can be elided
(or allocations to a closure environment can be avoided).
We implemented this optimisation in \implName{} when inferring the type of an
application for $ t_{{\mathrm{1}}} \,{ t_{{\mathrm{2}}} } $ (rule $\GrTTdruleInfAlgXXAppName{}$
above), where the type of $t_{{\mathrm{1}}}$ is inferred as $ \textstyle (  x  :_{  \textcolor{darkblue}{( s ,  \GrTTsym{0} )}  }  A  )  \to   B $. For a
quantitative semiring we know that $x$ irrelevant in $B$,
thus we need not perform the substitution
$\GrTTsym{[}  t_{{\mathrm{2}}}  \GrTTsym{/}  x  \GrTTsym{]}  B$ when type checking the application.

We evaluate this on simple \implName{} programs of an $n$-ary ``fanout'' combinator
implemented via an $n$-ary application combinator, e.g., for arity 3:
\begin{gertyReallySmall}\small
app3 : (a : (0, 6) Type 0) -> (b : (0, 2) Type 0)
-> (x0 : (1, 0) a) -> (x1 : (1, 0) a) -> (x2 : (1, 0) a)
-> (f:(1, 0) ((y0:(1,0) a) -> (y1:(1,0) a) -> (y2:(1,0) a) -> b)) -> b
app3 = \a -> \b -> \x0 -> \x1 -> \x2 -> \f -> f x0 x1 x2

fan3 : (a : (0, 4) Type 0) -> (b : (0, 2) Type 0)
-> (f : (1,0) ((z0 : (1,0) a) -> (z1 : (1,0) a) -> (z2 : (1,0) a) -> b))
-> (x : (3, 0) a) -> b
fan3 = \a -> \b -> \f -> \x -> app3 a b x x x f
\end{gertyReallySmall}
\noindent
Note that \gertyin{fan3} uses its parameter \gertyin{x} three times (hence the
grade $3$) which then incurs substitutions into the type of
\gertyin{app3} during type checking, but each such substitution is
redundant since the type does not depend on these parameters, as
reflected by the $0$ subject-type grades.

\newcommand{\stderr}[1]{&\hspace{-0.5em} (\textcolor{gray}{${#1}$})}
\newcommand{\spdup}[1]{#1}
\definecolor{DarkRed}{rgb}{0.5,0,0}
\newcommand{\esmal}[1]{\textcolor{DarkRed}{#1}}

To evaluate the optimisation and SMT solving vs. normalisation-based equality, we ran
\implName{} on the fan out program for
arities from 3 to 8, with and without the optimisation and
under the two equality approaches.

\noindent
Table~\ref{tab:results} gives the results. For grade equality by
normalisation, the optimisation has a
positive effect on speedup, getting increasingly significant (up to 38\%) as the
overall cost increases. For SMT-based grade equality,
the optimisation causes some slow down
for arity $4$ and $5$ (and just breaking even for arity $3$).  This is
because working out whether the optimisation can be applied
requires checking whether grades are equal to $0$, which incurs extra
SMT solver calls. Eventually, this cost is outweighed by the
time saved by reducing substitutions.
Since the grades here are all relatively simple, it is usually more
efficient for the type checker to normalise and compare terms rather
than compiling to SMT and starting up the external solver, as seen
by longer times for the SMT approach.

The baseline performance here is poor (the implementation
is not highly optimised) partly due to the overhead of computing type formation
judgments often to accurately account for grading. However,
such checks are often recomputed and could be optimised away by
 memoisation. Nevertheless this experiment
gives the evidence that grades can indeed be used to optimise type
checking. A thorough investigation
of grade-directed optimisations is future work.

\begin{table}[t]
\setlength{\tabcolsep}{0.16em}
\begin{center}
\begin{tabular}{c||rlrlc|rlrlc}
& \multicolumn{5}{c|}{\textbf{Normalisation}} & \multicolumn{5}{c}{\qquad\quad\textbf{SMT}} \\[-0.1em] \hline
$n$ & \multicolumn{2}{c}{Base $ms$} &
\multicolumn{2}{c}{Optimised $ms$}
& {\footnotesize{Speedup}} & \multicolumn{2}{c}{Base $ms$} &
\multicolumn{2}{c}{Optimised $ms$}
& {\footnotesize{Speedup}} \\[-0.05em] \hline \hline
3 & 45.71 \stderr{1.72} & 44.08 \stderr{1.28} & \spdup{1.04}
& 77.12 \stderr{2.65} & 76.91 \stderr{2.36} & \spdup{1.00} \\[-0.05em]
4 & 108.75 \stderr{4.09} &  89.73 \stderr{4.73} & \spdup{1.21}
& 136.18 \stderr{5.23} &  162.95 \stderr{3.62} & \spdup{\esmal{0.84}} \\[-0.05em]
5 & 190.57 \stderr{8.31} & 191.25 \stderr{8.13} & \spdup{1.00}
& 279.49 \stderr{15.73} & 289.73 \stderr{23.30} & \spdup{\esmal{0.96}} \\[-0.05em]
6 & 552.11 \stderr{29.00} & 445.26 \stderr{23.50} & \spdup{1.24}
& 680.11 \stderr{16.28} & 557.08 \stderr{13.87} & \spdup{1.22} \\[-0.05em]
7 & 1821.49 \stderr{49.44} & 1348.85 \stderr{26.37} & \spdup{1.35}
& 1797.09 \stderr{43.53} & 1368.45 \stderr{20.16} & \spdup{1.31} \\[-0.05em]
8 & 6059.30 \stderr{132.01} & 4403.10 \stderr{86.57} & \spdup{1.38}
& 5913.06 \stderr{118.83} & 4396.90 \stderr{59.82} & \spdup{1.34}
\end{tabular}
\end{center}
\vspace{-0.2em}
\caption{Performance analysis of grade-based optimisations to type
  checking. Times in milliseconds to 2 d.p. with
  the standard error given in brackets. Measurements are the mean of 10
  trials (run on a 2.7 Ghz Intel Core, 8Gb of RAM,
  Z3 4.8.8).}
\label{tab:results}
\vspace{-1.4em}
\end{table}

\section{Discussion}\label{sec:discussion}
\paragraph{Grading, Coeffects, and Quantitative Types}
The notion of \emph{coeffects}, describing how a program
depends on its context, arose in the literature from two
directions: as a dualisation of effect
types~\cite{petricek2013coeffects,Petricek:2014} and a generalisation of Bounded
Linear Logic to general resource semirings~\cite{ghica2014,gaboardi2014}.
Coeffect systems can capture reuse bounds,
information flow security~\cite{gaboardi2016combining},
hardware scheduling constraints~\cite{ghica2014}, and sensitivity for
differential privacy~\cite{de2014really,GaboardiHHNP13}.
A coeffect-style approach also enables linear types to be retrofitted to
Haskell~\cite{bernardy2017linear}.
A common thread is the annotation of variables in the context with
usage information, drawn from a semiring.
Our approach generalises this idea to capture type, context, and
computational usage.

\newcommand{\qtype}[3]{#1 \stackrel{#2}{:} #3}

McBride~\cite{McBride2016} reconciles linear and dependent
types, allowing types to depend on linear values, refined by
Atkey~\cite{quantitative-type-theory} as Quantitative Type
Theory.  \qtt{} employs coeffect-style annotation of each
assumption in a context with an element of a resource
accounting algebra, with judgments of the form:
\begin{equation*}
x_1 \stackrel{\rho_1}{:} A_1, \ldots, x_n \stackrel{\rho_n}{:} A_n \vdash \qtype{M}{\rho}{B}
\end{equation*}
where $\rho_i, \rho$ are elements of a semiring, and
$\rho = 0$ or $\rho = 1$, respectively
denoting a term which can be used in type formation
(erased at runtime) or at runtime.
Dependent function arrows are of the
form $(x \stackrel{\rho}{:} A) \rightarrow B$, where $\rho$ is
a semiring element that denotes the computational usage of the
parameter.

Variables used for type formation but not computation are
annotated by $0$. Subsequently, type formation
rules are all of the form $0\Gamma \vdash T$, meaning every
variable assumption has a $0$ annotation.
\grtt{} is similar to \qtt{}, but differs in its more extensive grading
to track usage in types, rather than blanketing all type usage with
$0$.  In Atkey's formulation, a term can be promoted to a type if its
result and dependency quantities are all $0$.
A set of rules provide formation of computational
type terms, but these are also graded at $0$.
Subsequently, it is not possible to construct an inhabitant of
$\mathsf{Type}$ that can be used at runtime. We avoid this
shortcoming allowing matching on types.
For example, a computation $t$ that
inspects a type variable $a$ would be typed as:
$ ( \Delta  ,   \textbf{0}   ,  \Delta'  \mid  \sigma_{{\mathrm{1}}}  ,  \GrTTsym{1}  ,  \sigma'_{{\mathrm{1}}}  \mid  \sigma_{{\mathrm{2}}}  ,  r  ,  \sigma'_{{\mathrm{2}}} )   \odot  \Gamma  \GrTTsym{,}  a  \GrTTsym{:}   \mathsf{Type}   \GrTTsym{,}  \Gamma'  \vdash  t  \GrTTsym{:}  B$ denoting $1$ computational use and $r$ type uses in
$B$.

\newcommand{\extra}{\hat{0}}

At first glance, it seems \qtt{}
could be encoded into \grtt{} taking the semiring
 $\mathcal{R}$ of \qtt{} and parameterising \grtt{} by the semiring
$\mathcal{R} \cup \{\extra\}$ where $\extra$ denotes arbitrary
usage in type formation.
However, there is impedance between the two systems as \qtt{}
always annotates type use with $0$. It is not clear how
to make this happen in \grtt{} whilst still having non-$0$
tracking at the computational level, since we use one semiring
for both. Exploring an encoding is future work.

Choudhury et al.~\cite{choudhury2021} give a system closely related
(but arguably simpler) to \qtt{} called $\GrTTdrulename{GraD}$.
One key difference is that rather than annotating type usage with $0$,
grades are simply ignored in types. This makes for a surprisingly
flexible system.  In addition, they show that irrelevance is captured
by the $0$ grade using a heap-based semantics (a result leveraged in
Section~\ref{sec:case-studies}). $\GrTTdrulename{GraD}$ however
does not have the power of type-grades presented here.

\paragraph{Dependent Types and Modalities}
Dal Lago and Gaboardi extend PCF with linear and lightweight dependent
types~\cite{dal2011linear} (then adapted for differential
privacy analysis~\cite{GaboardiHHNP13}). They add
a natural number type indexed by upper and lower bound terms which
index a modality. Combined with linear arrows of the form
$[a < I] . \sigma \multimap \tau$ these describe functions using
the parameter at most $I$ times (where the modality acts
as a binder for index variable $a$ which denotes instantiations).
Their system is leveraged to give fine-grained cost analyses in the context
of Implicit Computational Complexity. Whilst a powerful system, their
approach is restricted in terms of
dependency, where only a specialised type can depend on
specialised natural-number indexed terms (which are non-linear).

Gratzer et al. define a dependently-typed language with a Fitch-style
modality~\cite{DBLP:journals/pacmpl/GratzerSB19}. It seems that such
an approach could also be generalised to a graded modality, although
we have used the natural-deduction style for
our graded modality rather than the Fitch-style.

As discussed in Section~\ref{sec:introduction}, our approach closely
resembles Abel's \emph{resourceful dependent types}~\cite{abel2018}.
Our work expands on the idea, including tensors and the graded
modalities. We considerably developed the associated metatheory,
provide an implementation, and study applications.

\paragraph{Further Work}
One expressive extension is to capture analyses which have an
ordering, e.g., grading by a \emph{pre-ordered} semiring, allowing a
notion of \emph{approximation}. This would enable analyses such as
bounded reuse from Bounded Linear Logic~\cite{girard1992bounded},
intervals with least- and upper-bounds on
use~\cite{orchard2019quantitative}, and top-completed
semirings, with an $\infty$-element denoting arbitrary usage as a
fall-back. We have made progress into exploring the interaction
between approximation and dependent types, and the remainder of this
is left as future work.

A powerful extension of \grtt{} for future work is to
allow grades to be first-class terms.
Typing rules in \grtt{} involving grades could be adapted to
internalise the elements as first-class terms. We could then, e.g., define the map function over
sized vectors, which requires that the parameter function is
used exactly the same number of times as the length of the
vector:
{\small{
\begin{align*}
&  \mathit{map} :  \textstyle (  n  :_{  \textcolor{darkblue}{( \GrTTsym{0} ,  \GrTTsym{5} )}  }   \mathsf{nat}   )  \to    \textstyle (  a  :_{  \textcolor{darkblue}{( \GrTTsym{0} ,  n  \GrTTsym{+}  \GrTTsym{1} )}  }   \mathsf{Type}   )  \to    \textstyle (  b  :_{  \textcolor{darkblue}{( \GrTTsym{0} ,  n  \GrTTsym{+}  \GrTTsym{1} )}  }   \mathsf{Type}   )    \to\\[-0.1em]
& \qquad\,\,\,\,\;\;\;\;  \textstyle (  f  :_{  \textcolor{darkblue}{( n ,  \GrTTsym{0} )}  }   \textstyle (  x  :_{  \textcolor{darkblue}{( \GrTTsym{1} ,  \GrTTsym{0} )}  }  a  )  \to   b   )  \to    \textstyle (  xs  :_{  \textcolor{darkblue}{( \GrTTsym{1} ,  \GrTTsym{0} )}  }   \mathsf{Vec}\, n \, a   )  \to    \mathsf{Vec}\, n \, b   
\end{align*}}}
This type provides strong guarantees: the only
well-typed implementations do the correct
thing, up to permutations of the result vector. Without the
grading, an implementation could apply $f$ fewer than $n$
times, replicating some of the transformed elements;
here we know that $f$ must be applied exactly $n$-times.

A further appealing possibility for \grtt{} is to allow
the semiring to be defined internally, rather than as
a meta-level parameter, leveraging dependent types for
proofs of key properties. An implementation could specify what is
required for a semiring instance, e.g., a record type capturing the
operations and properties of a semiring. The rules of
\grtt{} could then be extended, similarly to the extension to
first-class grades, with the provision of the semiring(s) coming from
\grtt{} terms. Thus, anywhere with a grading premise
$ ( \Delta  \mid  \sigma_{{\mathrm{1}}}  \mid  \sigma_{{\mathrm{2}}} )   \odot  \Gamma  \vdash   r   \GrTTsym{:}   \mathcal{R} $ would also require a premise
$ ( \Delta  \mid  \sigma_{{\mathrm{2}}}  \mid   \textbf{0}  )   \odot  \Gamma  \vdash   \mathcal{R}   \GrTTsym{:}   \mathsf{Semiring} $.
This opens up the ability for programmers and library
developers to provide custom modes of resource tracking with their
libraries, allowing domain-specific program
verification.

\paragraph{Conclusions}

The paradigm of `grading' exposes the inherent structure of a
type theory, proof theory, or semantics by matching the underlying
structure with some algebraic structure augmenting the types. This
idea has been employed for reasoning about side effects via
graded monads~\cite{katsumata2014parametric}, and
reasoning about data flow as discussed here by semiring grading.
 Richer algebras could be employed to capture other aspects,
such as \emph{ordered logics} in which the exchange rule can be
controlled via grading (existing work has done this via
modalities~\cite{DBLP:journals/corr/abs-1904-06847}).

We developed the core of grading in the context of
dependent-types, treating
types and terms equally (as one comes to expect in
dependent-type theories). The tracking of data flow in types
appears complex since we must account for how variables are
used to form types in both the context and in the subject type, making
sure not to repeat context formation use. The result however is a
powerful system for studying dependencies in type theories, as shown
by our ability to study different theories just be specialising grades.
Whilst not yet a fully fledged implementation, \implName{} is a useful
test bed for further exploration.

\medskip
\noindent
\textit{Acknowledgments} Orchard is supported by EPSRC grant EP/T013516/1.

\bibliographystyle{splncs04}
\bibliography{ref}

\vfill

{\small\medskip\noindent{\bf Open Access} This chapter is licensed under the terms of the Creative Commons\break Attribution 4.0 International License (\url{http://creativecommons.org/licenses/by/4.0/}), which permits use, sharing, adaptation, distribution and reproduction in any medium or format, as long as you give appropriate credit to the original author(s) and the source, provide a link to the Creative Commons license and indicate if changes were made.}

{\small \spaceskip .28em plus .1em minus .1em The images or other third party material in this chapter are included in the chapter's Creative Commons license, unless indicated otherwise in a credit line to the material.~If material is not included in the chapter's Creative Commons license and your intended\break use is not permitted by statutory regulation or exceeds the permitted use, you will need to obtain permission directly from the copyright holder.}

\medskip\noindent\includegraphics{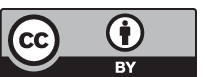}

\newpage

\appendix

\section{Appendix}\label{appendix}

\subsection{Full Typing Rules for \grtt{}}\label{grtt--full-rules}

\begin{figure}[h]
  {\small{
 \begin{minipage}{1.0\linewidth}
  \vspace{-0.5em}
\begin{align*}
\hspace{-0.75em}
  \begin{array}{c}
    \GrTTdruleWfXXEmpty{}
    \quad
    \GrTTdruleWfXXExt{}
  \end{array}
\end{align*}
\end{minipage}}}
\vspace{-0.5em}
\caption{Well-formed contexts for \grtt{}}\label{fig:wf-contexts-GrTT}
\end{figure}

\begin{figure}[h]
\begin{align*}
  \begin{array}{c}
    \GrTTdruleSTXXEq{}
    \\[2em]
    \GrTTdruleSTXXTrans{}
    \\[2em]
    \GrTTdruleSTXXTy{}
    \\[2em]
    \GrTTdruleSTXXArrow{}
    \\[2em]
    \GrTTdruleSTXXTen{}
    \\[2em]
    \GrTTdruleSTXXBox{}
  \end{array}
\end{align*}
\vspace{-0.5em}
\caption{Subtyping for \grtt{}}\label{fig:subtyping-GrTT}
\end{figure}

\begin{figure}[H]
  {\small{
 \begin{minipage}{1.0\linewidth}
  \vspace{-0.5em}
\begin{align*}
\hspace{-0.75em}
  \begin{array}{c}
    \GrTTdruleTXXType{}
    \quad
    \GrTTdruleTXXVar{}
    \\[1.45em]
    \GrTTdruleTXXArrow{}
    \\[1.45em]
    \GrTTdruleTXXFun{}
    \\[1.45em]
    \GrTTdruleTXXApp{}
    \\[1.45em]
    \GrTTdruleTXXTen{}
    \\[1.45em]
    \GrTTdruleTXXPair{}
    \\[1.45em]
    \GrTTdruleTXXTenCut{}
    \\[1.45em]
    \GrTTdruleTXXBox{}
    \quad
    \GrTTdruleTXXBoxI{}
    \\[1.45em]
    \GrTTdruleTXXBoxE{}
    \\[1.45em]
    \GrTTdruleTXXTyConv{}
  \end{array}
\end{align*}
\end{minipage}}}
\vspace{-0.5em}
\caption{Typing for \grtt{}}\label{fig:typing-GrTT}
\end{figure}

\begin{figure}[H]
  {\small{
 \begin{minipage}{1.0\linewidth}
  \vspace{-0.5em}
\begin{align*}
\hspace{-0.75em}
  \begin{array}{c}
    \GrTTdruleTEQXXRefl{}
    \quad
    \GrTTdruleTEQXXTrans{}
    \\[1.45em]
    \GrTTdruleTEQXXSym{}
    \quad
    \GrTTdruleTEQXXConvTy{}
    \\[1.45em]
    \GrTTdruleTEQXXArrow{}
    \\[1.45em]
    \GrTTdruleTEQXXArrowComp{}
    \\[1.45em]
    \GrTTdruleTEQXXArrowUniq{}
    \\[1.45em]
    \GrTTdruleTEQXXFun{}
    \\[1.45em]
    \GrTTdruleTEQXXApp{}
    \\[1.45em]
    \GrTTdruleTEQXXTen{}
    \\[1.45em]
    \GrTTdruleTEQXXTenComp{}
    \\[1.45em]
    \GrTTdruleTEQXXPair{}
    \\[1.45em]
    \GrTTdruleTEQXXTenCut{}
    \\[1.45em]
    \GrTTdruleTEQXXTenU{}
    \\[1.45em]
    \GrTTdruleTEQXXBox{}
    \quad
    \GrTTdruleTEQXXBoxI{}
    \\[1.45em]
    \GrTTdruleTEQXXBoxB{}
    \\[1.45em]
    \GrTTdruleTEQXXBoxE{}
    \\[1.45em]
    \GrTTdruleTEQXXBoxU{}
  \end{array}
\end{align*}
\end{minipage}}}
\vspace{-0.5em}
\caption{Term equality for \grtt{}}\label{fig:term-equality-GrTT}
\end{figure}

\newpage

\subsection{Supplement}\label{appendix--supplement}
\includepdf[pages=-]{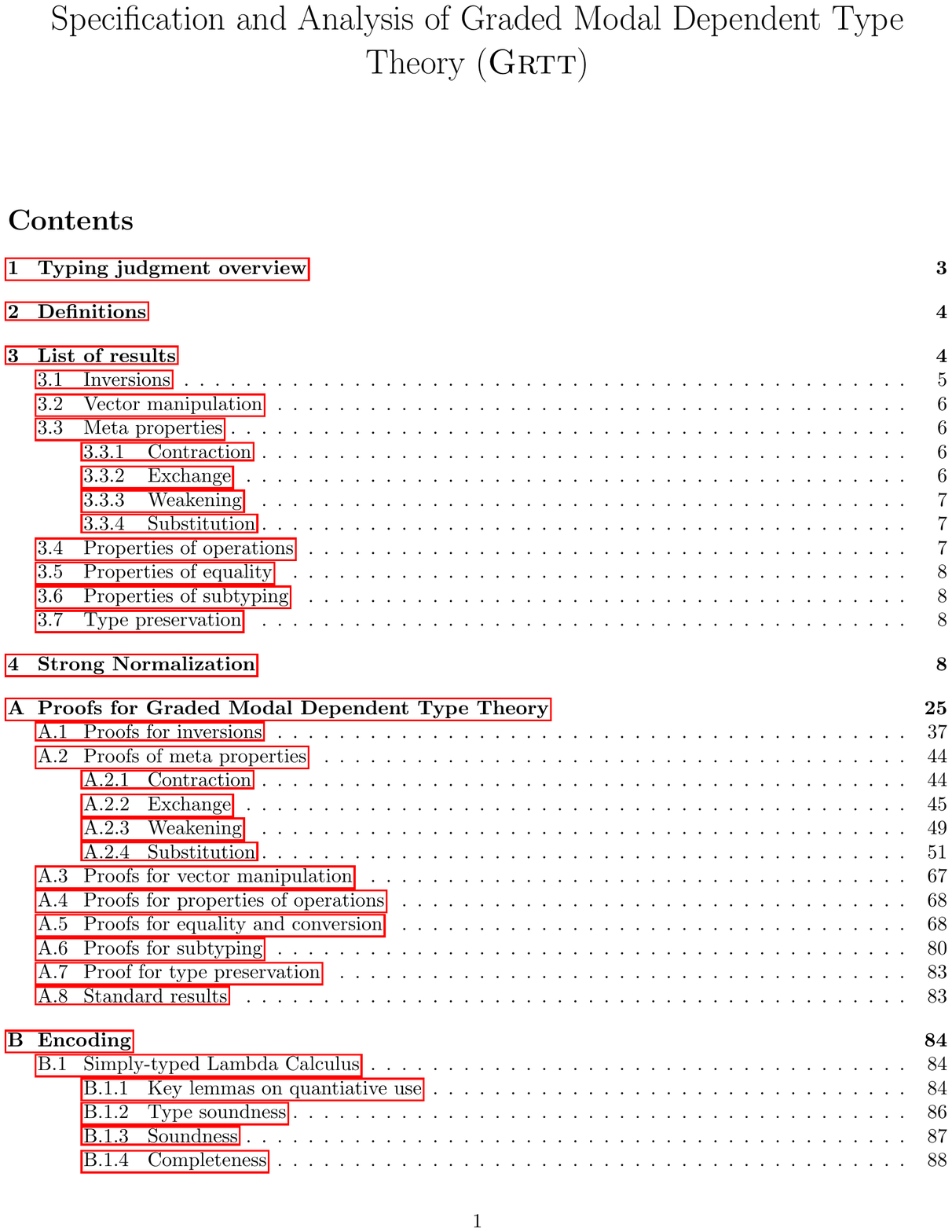}


\end{document}